\numberwithin{equation}{section}
\newcommand{\R}{{\mathbb R}}
\newcommand{\Z}{{\mathbb Z}}
\newcommand{\al}{\alpha}
\newcommand{\be}{\beta}
\newcommand{\ga}{\gamma}
\newcommand{\Ga}{\Gamma}
\newcommand{\La}{\Lambda}
\newcommand{\la}{\lambda}
\newcommand{\ep}{\varepsilon}
\newcommand{\de}{\delta}
\newcommand{\sg}{\sigma}
\newcommand{\Sg}{\Sigma}
\newcommand{\Tr}{\textrm{Tr}\,}
\newcommand{\Pf}{{\operatorname{Pf}\,}}
\newcommand{\sgn}{{\operatorname{sgn}\,}}
\newtheorem{theo}{{\sc \bf Theorem}}[section]
\newtheorem{lem}[theo]{{\sc \bf Lemma}}
\newtheorem{prop}[theo]{{\sc \bf Proposition}}
\begin{document}

\title[The Pfaffian Sign Theorem for the Dimer Model on a Triangular Lattice]
{The Pfaffian Sign Theorem for the Dimer Model on a Triangular Lattice}

\author{Pavel Bleher, Brad Elwood, and Dra\v zen Petrovi\'c}

\date{\today}

\thanks{The first author is supported in part
by the National Science Foundation (NSF) Grants DMS-1265172 and DMS-1565602.}

\begin{abstract}
We prove the Pfaffian Sign Theorem for the dimer model on a triangular lattice embedded in the torus. More specifically, we prove that the Pfaffian of the Kasteleyn periodic-periodic matrix is negative, while the Pfaffians of the Kasteleyn periodic-antiperiodic, antiperiodic-periodic, and antiperiodic-antiperiodic matrices are all positive. The proof is based on the Kasteleyn identities and on small weight expansions. As an application, we obtain an asymptotic behavior of the dimer model partition function with an exponentially small error term.
\end{abstract}

\maketitle

\section{Introduction}

\subsection{Dimer model on a triangular lattice}

We consider the dimer model on a triangular lattice $\Gamma_{m,n}=(V_{m,n},E_{m,n})$ on the torus $\Z_m\times \Z_n=\Z^2/(m\Z\times n\Z)$ (periodic boundary conditions), where
$V_{m,n}$ and $E_{m,n}$ are the sets of vertices and edges of $\Gamma_{m,n},$ respectively. It is convenient to consider $\Gamma_{m,n}$ as a square lattice with diagonals. 
A \textit{dimer} on $\Gamma_{m,n}$ is a set of two neighboring vertices $\langle x,y\rangle$ connected by an edge. 
A \textit{dimer configuration} $\sg$ on $\Ga_{m,n}$ is a set of dimers $\sg=\{\langle x_i,y_i\rangle,\;i=1,\ldots,\frac{mn}{2}\}$ which cover $V_{m,n}$ without overlapping. 
An example of a dimer configuration is shown in Fig.\ \ref{F8}. An obvious necessary condition for a configuration to exist is that at least one of $m,n$ is even, and so we assume that $m$ is even, $m=2m_0$.

\begin{figure}[h!]
\includegraphics[scale=.5]{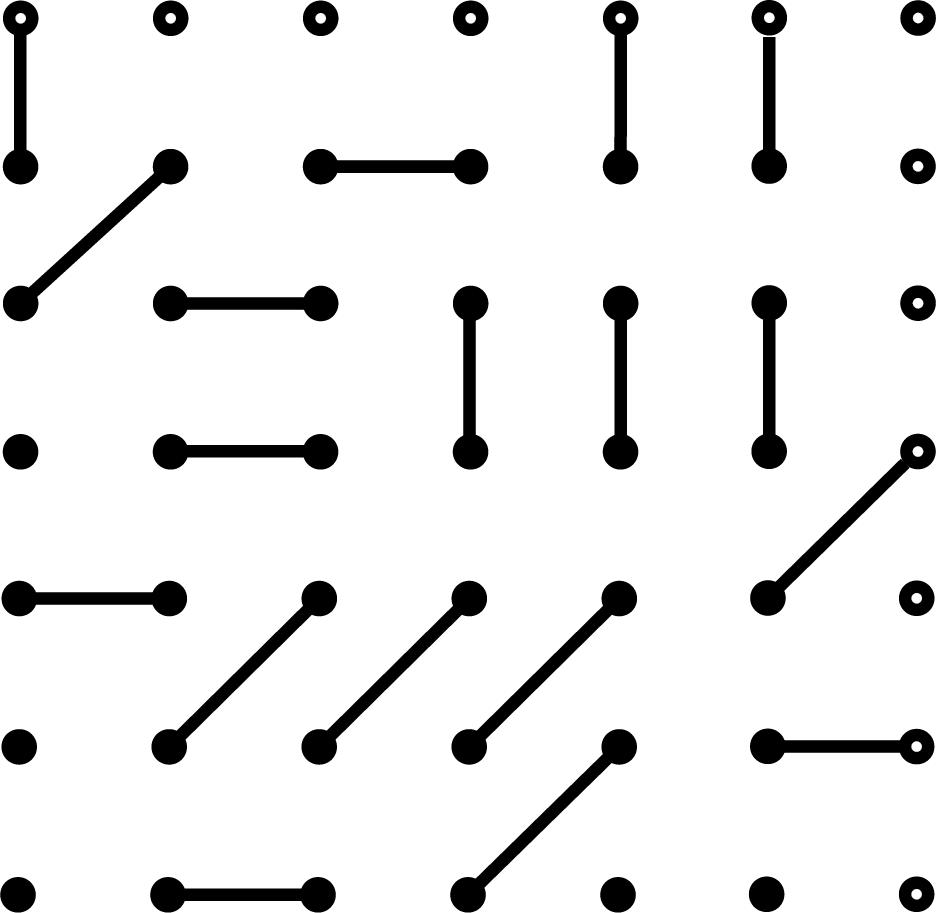}
\caption{Example of a dimer configuration on a triangular $6\times 6$ lattice on the torus.}
\label{F8}
\end{figure}

To define a weight of a dimer configuration, we split the full set of dimers in a configuration $\sg$ into three classes: horizontal, vertical, and diagonal with respective weights $z_h, z_v, z_d>0.$ 
If we denote the total number of horizontal, vertical and diagonal dimers in $\sg$ by $N_h(\sg)$, $N_v(\sg),$ and $N_d(\sg),$ respectively, then the \textit{dimer configuration weight} is
\begin{equation}\label{int1}
w(\sg)=\prod_{i=1}^{\frac{mn}{2}}w(x_i,y_i)=z_h^{N_h(\sg)} z_v^{N_v(\sg)}z_d^{N_d(\sg)},
\end{equation}
where $w(x_i,y_i)$ denotes the weight of the dimer $\langle x_i,y_i\rangle\in\sg$.
We denote by $\Sigma_{m,n}$ the set of all dimer configurations on $\Ga_{m,n}$. 
The \textit{partition function} of the dimer model is given by
\begin{equation}\label{int2}
Z=\sum_{\sg\in\Sigma_{m,n}}w(\sg).
\end{equation}
Notice that if all the weights are set equal to one, then $Z$ simply counts the number of dimer configurations, or perfect matchings, on $\Gamma_{m,n}$. 

\subsection{Main result} As shown by Kasteleyn \cite{Kas1}--\cite{Kas3}, the partition function $Z$ of the dimer model on the torus can be expressed in terms of the
 four Kasteleyn Pfaffians as
\begin{equation}\label{a1}
Z=\frac{1}{2}\left(-\Pf A_{1} +\Pf A_{2} +\Pf A_{3} + \Pf A_{4}\right),
\end{equation}
with periodic-periodic, periodic-antiperiodic,  antiperiodic-periodic, and antiperiodic-antiperiodic boundary conditions
in the $x$-axis and $y$-axis, respectively. For an extension of formula \eqref{a1} to graphs on Riemannian surfaces of higher genera see  the works of Galluccio and Loebl \cite{GalluLoe}, Tesler \cite{Tes}, Cimasoni and Reshetikhin \cite{CimResh}, and references therein. Formula \eqref{a1} is very powerful in the asymptotic analysis of the partition function as $m,n\to\infty$, because the absolute value of the Pfaffian of a square antisymmetric matrix $A$ is determined by its determinant through the classical identity
\begin{equation}\label{a1a}
(\Pf A)^2=\det A.
\end{equation}
The asymptotic behavior of $\det A_i$ as $m,n\to\infty$ can be analyzed by a diagonalization of the matrices $A_i$ (see, e.g., \cite{Kas1}, \cite{Fend}), and
an obvious problem arises to determine the {\it sign} of the Pfaffians $\Pf A_i$
in formula \eqref{a1}. 

In \cite{Kas1} Kasteleyn considered the dimer model on the square lattice, which corresponds to the weight $z_d=0$. He showed that in this case $\Pf A_1=0$ and he assumed that $\Pf A_i\ge 0$ for $i=2,3,4$. Kenyon, Sun and Wilson \cite{Kenyon} established the sign of the Pfaffians $\Pf A_i$ for any critical dimer model on a lattice on the torus, including the square lattice. The dimer model on the triangular lattice is not critical and the results of \cite{Kenyon} are not applicable in this case. Different conjectures about the Pfaffian signs for the dimer model on a triangular lattice are stated, without proof, in the works of McCoy \cite{McCoy}, Fendley, Moessner, and Sondhi \cite{Fend}, and Izmailian and Kenna \cite{IzmaKenna}. 

Our main result in this paper is the following theorem:

\begin{theo}[The
Pfaffian Sign Theorem]\label{PST} Let $z_h,z_v,z_d>0$. Then
\begin{equation}\label{PSTF}
\Pf A_1<0,\quad \Pf A_2>0,\quad \Pf A_3>0,\quad \Pf A_4>0.
\end{equation}
\end{theo}

The proof of this theorem is given below and it is based on the following two important ingredients:

\begin{enumerate}
\item The Kasteleyn formulae for the Pfaffians $\Pf A_i$ in terms of algebraic sums of the partition functions of the dimer model restricted to different $\Z_2$ homology classes.
\item An asymptotic analysis of $\Pf A_i$ as one of the weights tends to zero. It is worth noting that due to various cancellations in the Pfaffians the leading terms in the small weight asymptotics of the Pfaffians $\Pf A_i$ depend on arithmetic properties of the lattice dimensions $m$ and $n$, and it requires a geometric description of configurations giving the leading contribution to the Pfaffians $\Pf A_i$ for different $m,n$.
\end{enumerate}

As an application of Theorem \ref{PST}, we obtain an asymptotic behavior of the partition function $Z$ as $m,n\to \infty$ with an exponentially small error term.
See formula \eqref{AsympF} in Section \ref{Poisson} below. In the works
\cite{Kenyon} and \cite{McCoy} it is stated without a proof that in the noncritical case all the
Pfaffians $A_i$ in formula \eqref{a1} are {\it positive}. Our Theorem \ref{PST}
disproves this statement, and this gives a constant 2 instead of 1   in the asymptotic formula for $Z$.

The set-up for the remainder of the paper is the following. In Section
\ref{Kasteleyn} we review Pfaffians of the  Kasteleyn matrices and their properties. In Section \ref{preliminary} we prove various preliminary
results about the Pfaffians $\Pf A_i$. In Section \ref{A34}
we prove that $\Pf A_3>0$, $\Pf A_4>0$. In Section \ref{Id} we prove the identities $\Pf A_1=-\Pf A_2$ and $\Pf A_3=\Pf A_4$ for odd $n$.  In Section \ref{A2} we prove that
$\Pf A_2>0$, and in Section \ref{A1} that
$\Pf A_1<0$, which is the most difficult part of our study. 
In Section \ref{Poisson} we obtain the asymptotics of the partition function  as $m,n\to\infty$. In Appendix \ref{appA} we prove a sign formula 
for the superposition of two dimer configurations. In Appendix \ref{appB} we prove Kasteleyn identities for the triangular lattice on the torus, and in Appendix \ref{numerics} we present numerical data for the Pfaffians $\Pf A_i$ for different dimensions $m$ and $n$.

\section{Dimer model and Kasteleyn matrices}\label{Kasteleyn}

We consider different orientations on the set of the edges $E_{m,n}$: 
$O_1$ (periodic-periodic),  $O_2$ (periodic-antiperiodic),  
$O_3$ (antiperiodic-periodic), and $O_4$
 (antiperiodic-antiperiodic), depicted in Fig.\ \ref{KOs} for $m=4$, $n=3$. 
\begin{figure}[h!]
\includegraphics[scale=.3]{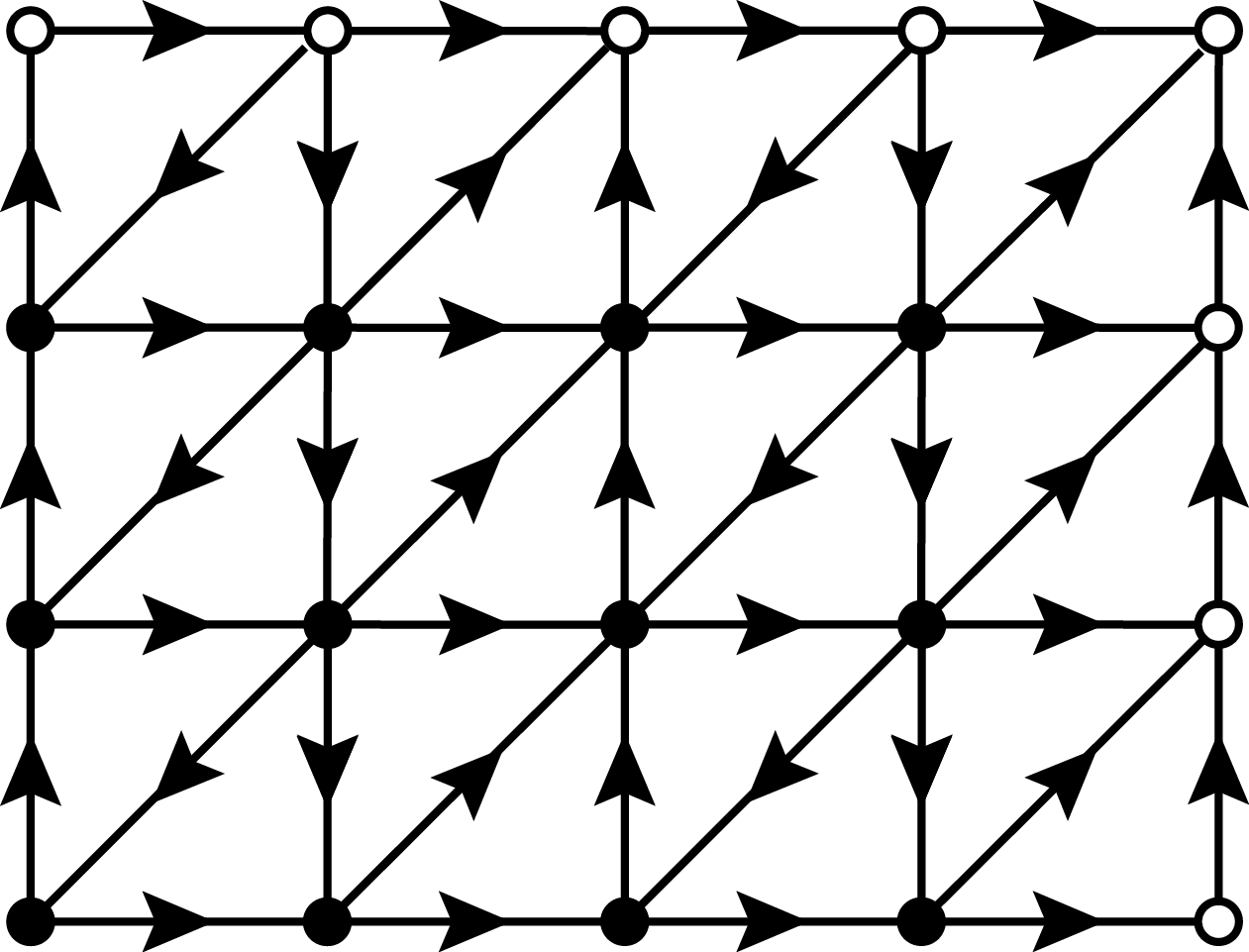}\hspace{.35in}\includegraphics[scale=.3]{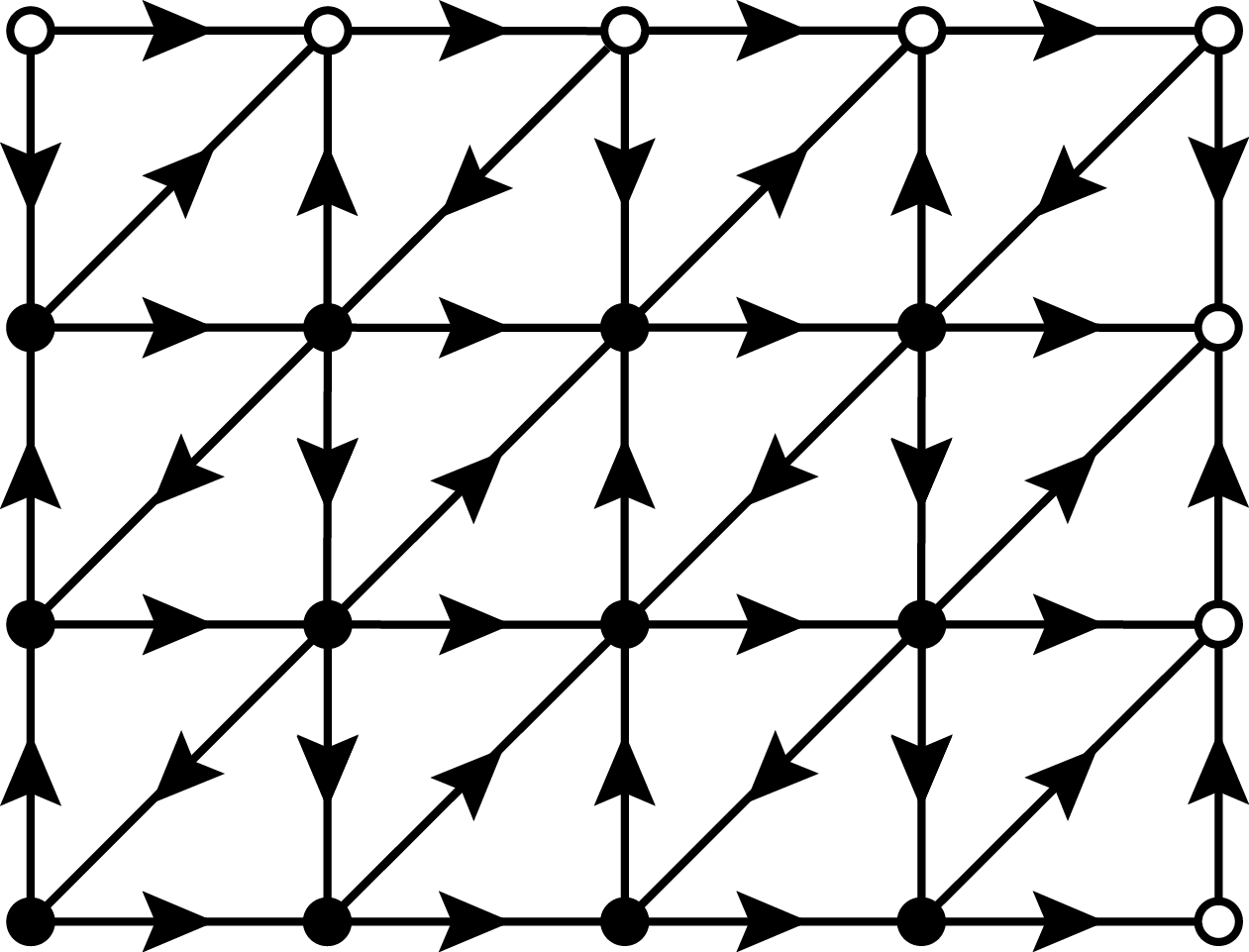}\hspace{.35in}\includegraphics[scale=.3]{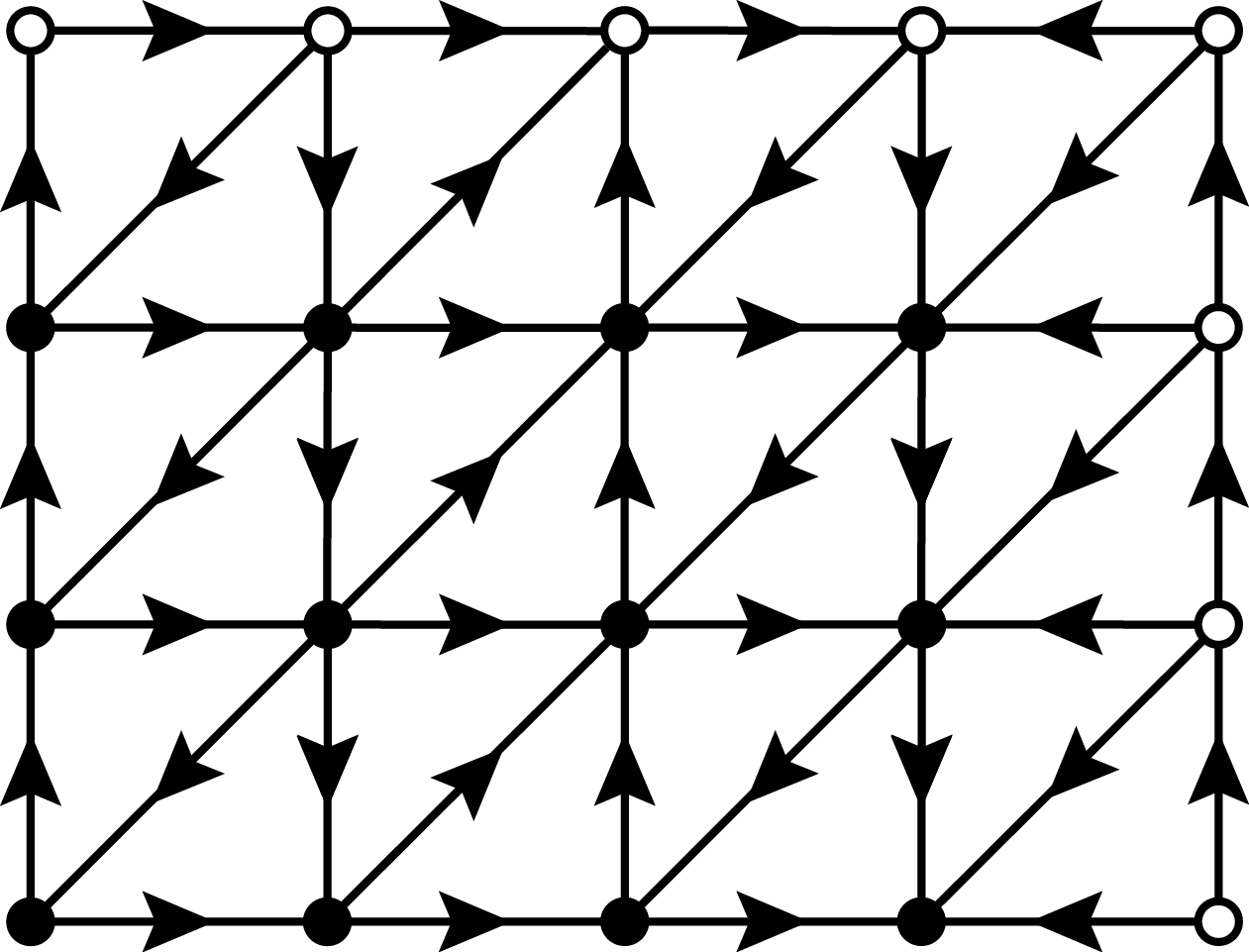}\hspace{.35in}\includegraphics[scale=.3]{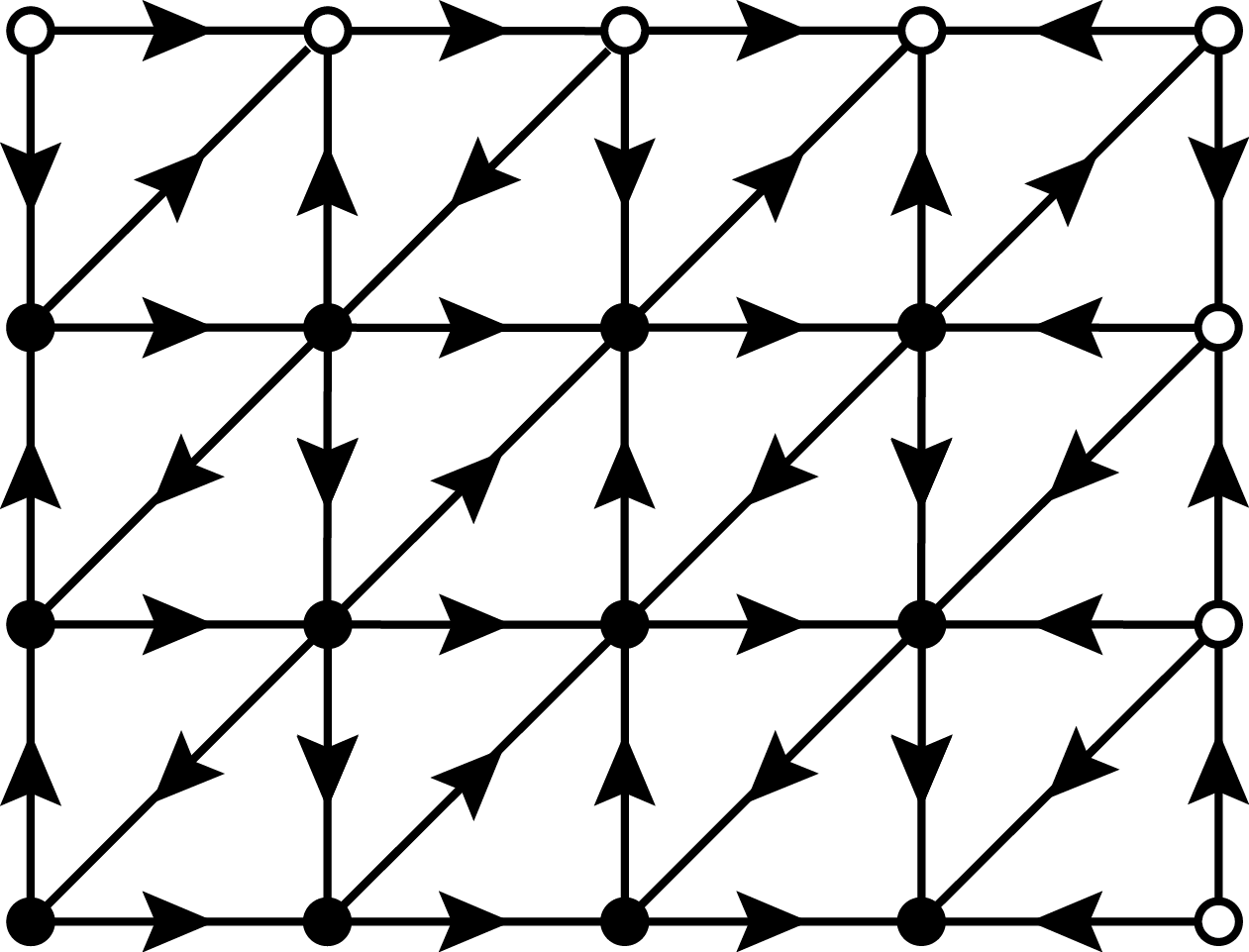}
\caption{The Kasteleyn orientations for $m=4$, $n=3$: $O_1$ (periodic-periodic),  $O_2$ (periodic-antiperiodic),  $O_3$ (antiperiodic-periodic), and
 $O_4$ (antiperiodic-antiperiodic).}
\label{KOs}
\end{figure}
All these orientations are  {\it Kasteleyn orientations}, so that for any face
the number of arrows on the boundary oriented {\it clockwise} is {\it odd}.

With every orientation $O_i$ we associate a sign function $\ep_i(x,y)$,
$x,y\in V_{m,n}$, defined as follows: if $x$ and $y$ are connected by an edge then
\begin{equation}\label{Aj3}
\ep_i(x,y)=\left\{
\begin{alignedat}{2}
&1,\quad &&\textrm{if the arrow in the Kasteleyn orientation $O_i$ points from $x$ to $y$,}\\
& -1, \quad &&\textrm{if the arrow in the Kasteleyn orientation $O_i$ points from $y$ to $x$,}
\end{alignedat}\right.
\end{equation}
and 
\begin{equation}\label{Aj3a}
\ep_i(x,y)=0,\quad \textrm{if $x$ and $y$ are not connected by an edge.}
\end{equation}
More specifically, the sign functions are given by the following formulae. 

Let $\mathsf e_1=(1,0)$, $\mathsf e_2=(0,1)$, and $x=(j,k)\in \Z_m\times \Z_n$. Then  the sign function $\ep_1$ takes the values 
\begin{equation}\label{Aj3b}
\begin{aligned}
&\ep_1(x,x+\mathsf e_1)=1,\\
&\ep_1(x,x+\mathsf e_2)=(-1)^j,\\
&\ep_1(x,x+\mathsf e_1+\mathsf e_2)=(-1)^{j+1}.
\end{aligned}
\end{equation}
The sign function $\ep_2$ is obtained from $\ep_1$ by the reversal of the
vertical and diagonal arrows in the upper row, so that
\begin{equation}\label{Aj3c}
\begin{aligned}
&\ep_2(x,x+\mathsf e_1)=1,\\
&\ep_2(x,x+\mathsf e_2)=(-1)^{j+\de_{k,n-1}},\\
&\ep_2(x,x+\mathsf e_1+\mathsf e_2)=(-1)^{j+1+\de_{k,n-1}}.
\end{aligned}
\end{equation}
Similarly,
the sign function $\ep_3$ is obtained from $\ep_1$ by the reversal of the
horizontal and diagonal arrows in the last column, so that
\begin{equation}\label{Aj3d}
\begin{aligned}
&\ep_3(x,x+\mathsf e_1)=(-1)^{\de_{j,m-1}},\\
&\ep_3(x,x+\mathsf e_2)=(-1)^{j},\\
&\ep_3(x,x+\mathsf e_1+\mathsf e_2)=(-1)^{j+1+\de_{j,m-1}}.
\end{aligned}
\end{equation}
Finally, the sign function $\ep_4$ is obtained from $\ep_1$ by the reversal of  both the vertical and diagonal arrows in the upper row and the
horizontal and diagonal arrows in the last column, so that
\begin{equation}\label{Aj3e}
\begin{aligned}
&\ep_4(x,x+\mathsf e_1)=(-1)^{\de_{j,m-1}},\\
&\ep_4(x,x+\mathsf e_2)=(-1)^{j+\de_{k,n-1}},\\
&\ep_4(x,x+\mathsf e_1+\mathsf e_2)=(-1)^{j+1+\de_{j,m-1}+\de_{k,n-1}}.
\end{aligned}
\end{equation}
In addition, \eqref{Aj3a} holds and $\ep_i(y,x)=-\ep_i(x,y)$.

With every orientation $O_i$ we associate a Kasteleyn matrix $A_i$.
To define the Kasteleyn matrices, consider any enumeration of the vertices,
$V_{m,n}=\{x_1,\ldots,x_{mn}\}$. Then the \textit{Kasteleyn matrices} $A_i$ 
are defined as
\begin{equation}\label{Aj1}
A_i=\big(a_i(x_j,x_k)\big)_{1\le j,k\le mn}\,,\quad i=1,\,2,\,3,\,4,
\end{equation}
with
\begin{equation}\label{Aj2}
a_i(x,y)=\left\{
\begin{alignedat}{2}
&\ep_i(x,y) w(x,y),\quad &&\textrm{if $x$ and $y$ are connected by an edge,}\\
& 0, \quad &&\textrm{otherwise,}
\end{alignedat}\right.
\end{equation}
where $w(x,y)=z_h,z_v,z_d$ is the weight of the dimer $\langle x,y\rangle$
and $\ep_i$ is the sign function. Consider now the Pfaffians $\Pf A_i$.

The \textit{Pfaffian}, $\Pf A_i$, of the $mn\times mn$ antisymmetric matrix $A_i,$ $i=1,\, 2,\, 3,\, 4,$ is a number given by
\begin{equation}\label{idd2}
\begin{aligned}
\Pf A_i&=\sum_{\pi}(-1)^{\pi}a_i(x_{p_1},x_{p_2})a_i(x_{p_3},x_{p_4})\cdots a_i(x_{p_{mn-1}},x_{p_{mn}}),
\end{aligned}
\end{equation} 
where the sum is taken over all permutations, 
\begin{equation}
\pi = \left(\begin{matrix}
  1 & 2 & 3 & \cdots & mn-1 & mn \\
  p_1 & p_2 & p_3 & \cdots &  p_{mn-1}  & p_{mn}
\end{matrix}\right),
\end{equation}
which satisfy the following restrictions:
\begin{enumerate}
\item[(1)] $p_{2\ell-1} < p_{2\ell}, \qquad 1\leq \ell \leq \frac{mn}{2}$,
\item[(2)] $p_{2\ell-1} < p_{2\ell+1}, \quad 1 \leq \ell \leq \frac{mn}{2}-1$.
\end{enumerate}
Such permutations are in a one-to-one correspondence with the dimer configurations,
\begin{equation}\label{bijection1}
\begin{aligned}
\pi = \left(\begin{matrix}
  1 & 2 & 3 & \cdots & mn-1 & mn \\
  p_1 & p_2 & p_3 & \cdots &  p_{mn-1}  & p_{mn}
\end{matrix}\right)\leftrightarrow \sg=\left\{\langle p_{2i-1},p_{2i}\rangle,\; i=1,\ldots,\frac{mn}{2}\, \right\}\,,
\end{aligned}
\end{equation} 
and \eqref{idd2} can be rewritten as
\begin{equation}\label{PfOurDef}
\Pf A_i=\sum_{\sg\in\Sigma_{m,n}}(-1)^{\pi(\sg)}w(\sg)\prod_{\langle x,y\rangle \in\sg}\varepsilon_i(x,y), \quad i=1,2,3,4,
\end{equation}
where $\Sigma_{m,n}$ is the set of all dimer configurations on $\Gamma_{m,n}$ and $(-1)^{\pi(\sg)}$ is the sign of the permutation $\pi(\sigma)$.

An important property of the Kasteleyn Pfaffians $\Pf A_i$ is that they are covariant with respect to the enumeration of the vertices. Namely, 
if $\rho(x)=\{\rho(x_1),\ldots,\rho(x_{mn})\}$
is an enumeration of the vertices of $V_{m,n}$ obtained from the one $x=\{x_1,\ldots,x_{mn}\}$ by a permutation $\rho$, then
\begin{equation}\label{cov}
\Pf A_i(\rho(x))=(-1)^{\rho}\Pf A_i(x), \quad i=1,2,3,4,
\end{equation}
where $(-1)^{\rho}$ is the sign of the permutation $\rho$. See Appendix \ref{appA} below.

 The \textit{sign of a configuration $\sg$}, $\sgn(\sg)=\sgn(\sg;O_i)$, is the following expression:
\begin{equation}\label{sign}
\sgn(\sg)=(-1)^{\pi(\sg)}\prod_{\langle x,y\rangle \in\sg}\varepsilon_i(x,y),
\end{equation}
where $\varepsilon_i(x,y)$ is given by \eqref{Aj3}. Having \eqref{sign}, the Pfaffian formula for a Kasteleyn matrix $A_i$ can be rewritten as 
\begin{equation}\label{Pfsign}
\Pf A_i=\sum_{\sg\in\Sigma_{m,n}}\sgn(\sg)w(\sg),\quad  \sgn(\sg)=\sgn(\sg;O_i).
\end{equation}
Given two configurations $\sg$ and $\sg'$, we consider the double configuration $\sg\cup\sg'$, and we call it the \textit{superposition of $\sg$ and $\sg'$}. In $\sg\cup\sg'$, we define a \textit{contour} to be a cycle consisting of alternating edges from $\sg$ and $\sg'$. Each contour consists of an even number of edges. The superposition $\sg\cup\sg'$ is partitioned into disjoint contours $\{\ga_k\ :\ k=1,2,\ldots,r\}$. We will call a contour consisting of only two edges a \textit{trivial contour}. 

Let us introduce a standard configuration $\sg_{\rm st}$ as follows. Consider the lexicographic ordering of the
vertices $(i,j)\in \Z_m\times \Z_n$. Namely,
\begin{equation}\label{lex1}
(i,j)=x_k,\quad  k=jm+i+1, \quad 1\le k\le mn.
\end{equation}
Then
\begin{equation}\label{lex2}
\sg_{\rm st}=\left\{\langle x_{2l-1},x_{2l}\rangle,\; l=1,\ldots,\frac{mn}{2}\right\}.
\end{equation}
Observe that
\begin{equation}\label{lex3}
\sgn(\sg_{\rm st};O _i)=+1,\quad i=1,2,3,4,
\end{equation}
because $\pi(\sg_{\rm st})={\rm Id}$ and $\ep_i( x_{2l-1},x_{2l})=+1$.

We will use the following lemma: 

\begin{lem} (see \cite{Kenyon}, \cite{Tes}). \label{Co}
Let $\sg,\sg'$ be any two configurations and $\{\ga_k\ :\ k=1,2,\ldots,r\}$ all contours of $\sg\cup\sg'.$ Then
\begin{equation}\label{Imp_f}
\sgn(\sg;O_i)\cdot\sgn(\sg';O_i)=\prod_{k=1}^r\sgn(\gamma_k;O_i),\quad i=1,2,3,4,
\end{equation} 
with 
\begin{equation}\label{Imp_fp}
\sgn(\gamma_k;O_i)=(-1)^{\nu_k(O_i)+1},
\end{equation} 
where $\nu_k(O_i)$ is the number of edges in $\gamma_k$ oriented clockwise with respect to the orientation $O_i$. 
\end{lem}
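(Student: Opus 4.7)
The plan is to exploit the covariance \eqref{cov} of the Pfaffian under relabeling of the vertices, which induces a term-by-term transformation rule $\sgn(\sg;O_i)\mapsto(-1)^\rho\sgn(\sg;O_i)$ valid for every dimer configuration $\sg$ separately (since the weights $w(\sg)$ are enumeration-invariant and \eqref{cov} must hold for generic positive weights). This freedom lets us choose an enumeration tailored to the superposition $\sg\cup\sg'$, and then factorize both sides of \eqref{Imp_f} contour by contour.

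Concretely, enumerate the vertices so that (i) the vertices of each contour $\ga_k$ of $\sg\cup\sg'$ occupy a consecutive block of indices, and (ii) within the block for $\ga_k$ the vertices are labeled $v_1^{(k)},v_2^{(k)},\ldots,v_{2\ell_k}^{(k)}$ along the cyclic order of $\ga_k$, with the starting point chosen so that the $\sg$-edges of $\ga_k$ are $(v_1^{(k)},v_2^{(k)}),(v_3^{(k)},v_4^{(k)}),\ldots$ and the $\sg'$-edges are $(v_2^{(k)},v_3^{(k)}),\ldots,(v_{2\ell_k-2}^{(k)},v_{2\ell_k-1}^{(k)}),(v_{2\ell_k}^{(k)},v_1^{(k)})$. (Trivial contours correspond to $\ell_k=1$.) In this enumeration $\pi(\sg)$ is the identity, while the pairs of $\sg'$, written smaller-index-first, decompose as an independent $(2\ell_k-1)$-cycle on each block; each such cycle has sign $(-1)^{2\ell_k-2}=+1$, so $(-1)^{\pi(\sg')}=+1$ and $\sgn(\sg;O_i)\sgn(\sg';O_i)$ splits as a product over contours.

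The per-contour factor equals
\begin{equation*}
\prod_{j=1}^{\ell_k}\ep_i(v_{2j-1}^{(k)},v_{2j}^{(k)})\cdot\ep_i(v_1^{(k)},v_{2\ell_k}^{(k)})\prod_{j=1}^{\ell_k-1}\ep_i(v_{2j}^{(k)},v_{2j+1}^{(k)}).
\end{equation*}
Using $\ep_i(v_1^{(k)},v_{2\ell_k}^{(k)})=-\ep_i(v_{2\ell_k}^{(k)},v_1^{(k)})$ rewrites this as $-1$ times the product of $\ep_i$ values over all $2\ell_k$ directed edges of $\ga_k$ traversed in the chosen cyclic direction. Since each clockwise arrow contributes $+1$ and each counterclockwise arrow contributes $-1$, the factor equals $-(-1)^{2\ell_k-\nu_k(O_i)}=(-1)^{\nu_k(O_i)+1}=\sgn(\ga_k;O_i)$, and multiplying over $k$ yields \eqref{Imp_f}.

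The main obstacle is the sign bookkeeping above: one must verify that the $\sg'$ pairs genuinely split into one independent $(2\ell_k-1)$-cycle per contour so that the permutation sign factorizes cleanly, and check that the final answer is insensitive to the arbitrary cyclic direction chosen for $\ga_k$. For the latter, reversing the direction sends $\nu_k\mapsto 2\ell_k-\nu_k$, which preserves the parity of $\nu_k+1$ since $2\ell_k$ is even.
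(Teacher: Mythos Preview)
Your proof is correct and follows essentially the same route as the paper's proof in Appendix~\ref{appA}: both invoke the covariance \eqref{cov} to pass to an enumeration that lists the vertices of each contour $\ga_k$ consecutively in cyclic order, so that $\pi(\sg)$ becomes the identity and the product $\sgn(\sg)\sgn(\sg')$ factorizes over contours. The only bookkeeping difference is that the paper takes $[\pi(\sg')]_\rho$ to be the full $2\ell_k$-cycle $C(\ga_k)$ (sign $-1$ per contour, with the closing $\sg'$-pair written as $(v_{2\ell_k},v_1)$), whereas you put $\pi(\sg')$ in canonical form, obtaining a $(2\ell_k-1)$-cycle (sign $+1$, closing pair written as $(v_1,v_{2\ell_k})$); the extra $-1$ is absorbed by the antisymmetry of $\ep_i$, and both computations land on the same per-contour factor $(-1)^{\nu_k+1}$. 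Your argument is in fact more explicit than the paper's, which stops after exhibiting the cycle decomposition and asserts that the lemma follows.
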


For the convenience of the reader we give a proof of this lemma in Appendix \ref{appA}.

\section{Preliminary results} \label{preliminary}

As shown by Kasteleyn \cite{Kas1,Kas3} (for recent expositions see the works \cite{Kenyon}, \cite{McCoy}, \cite{McCoyWu} and references therein), the partition function $Z$ can be decomposed as
\begin{equation}\label{a2}
Z=Z^{00}+Z^{10}+Z^{01}+Z^{11},
\end{equation}
the four partition functions $Z^{rs}$ corresponding to dimer configurations of the homology classes $(r,s)\in \Z_2\oplus\Z_2$ with respect to the standard configuration $\sg_{\rm st}$, 
and the Pfaffians $\Pf A_i$ are expressed as
\begin{equation}\label{a3}
\begin{aligned}
& \Pf A_1=Z^{00}-Z^{10}-Z^{01}-Z^{11}, \quad \Pf A_2=Z^{00}-Z^{10}+Z^{01}+Z^{11}, \\ 
& \Pf A_3=Z^{00}+Z^{10}-Z^{01}+Z^{11}, \quad \Pf A_4=Z^{00}+Z^{10}+Z^{01}-Z^{11}.
\end{aligned}
\end{equation}
These equations are called the {\it Kasteleyn identities}.
Observe that equations \eqref{a2}, \eqref{a3} imply \eqref{a1}.

The proof of the Kasteleyn identities \eqref{a3} can be found in the works of Galluccio and Loebl \cite{GalluLoe}, Tesler \cite{Tes}, and Cimasoni and Reshetikhin \cite{CimResh}. It follows from formula \eqref{PfOurDef} 
that the Pfaffians $\Pf A_i$ are multivariate polynomials with respect to the weights $z_h,z_v,z_d$. By diagonalizing the matrices $A_i$, one can obtain the double product formulas, 
\begin{equation}\label{a5}
\begin{aligned}
\det A_i=\prod_{j=0}^{\frac{m}{2}-1}\prod_{k=0}^{n-1}4\Bigg[&z_h^2\sin^2\frac{2\pi(j+\alpha_i)}{m}+z_v^2\sin^2\frac{2\pi(k+\beta_i)}{n}\\
&+z_d^2\cos^2\left(\frac{2\pi(j+\alpha_i)}{m}+\frac{2\pi(k+\beta_i)}{n}\right)\Bigg],
\end{aligned}
\end{equation}
with
\begin{equation}\label{a6}
\begin{aligned}
\al_1=\be_1=0\,;\quad \al_2=0\,,\;\be_2=\frac{1}{2}\,;\quad\al_3=\frac{1}{2}\,,\;\be_3=0\,;\quad\al_4=\be_4=\frac{1}{2}\,;
\end{aligned}
\end{equation}
(see, e.g., \cite{Fend} and  \cite{IzmaKenna}).

The function 
\begin{equation}\label{sf}
\begin{aligned}
S(x,y)=4\left[ z_h^2\sin^2 2\pi x+z_v^2\sin^2 2\pi y
+z_d^2\cos^2\left(2\pi x+2\pi y\right)\right]
\end{aligned}
\end{equation}
is the {\it spectral function} of the dimer model on the triangular lattice.
In its terms equation \eqref{a5} is conveniently written as
\begin{equation}\label{sf1}
\begin{aligned}
\det A_i=\prod_{j=0}^{\frac{m}{2}-1}\prod_{k=0}^{n-1} S\left( \frac{j+\al_i}{m},
\frac{k+\be_i}{n}\right).
\end{aligned}
\end{equation}
The function $S(x,y)$ is periodic in $x$ and $y$,
\begin{equation}\label{sf2}
S\left(x+\frac{1}{2},y\right)=S\left(x,y+\frac{1}{2}\right)=S(x,y),
\end{equation}
and if $z_h,z_v,z_d>0$, then 
\begin{equation}\label{sf3}
S(x,y)>0,\quad \forall (x,y)\in \R^2.
\end{equation}
Indeed, obviously, $S(x,y)\ge 0$.
Suppose $S(x,y)=0$. Then from \eqref{sf} we obtain that
\begin{equation}\label{sf4}
2x\in\Z, \quad 2y\in \Z, \quad 2x+2y\in\frac{1}{2}+ \Z,  
\end{equation}
which are inconsistent. 
From \eqref{sf1}, \eqref{sf3} we obtain that
\begin{equation}\label{a7}
\begin{aligned}
\det A_i>0,\quad \textrm{if}\quad z_h,z_v,z_d>0,
\end{aligned}
\end{equation}
because all the factors in \eqref{sf1} are positive.

As a consequence of \eqref{a7}, we have that $\Pf A_i$ does not change the sign in the region
$z_h,z_v,z_d>0$; hence, it is sufficient to establish the sign of $\Pf A_i$ at any point of the region $z_h,z_v,z_d>0$. As a first step in this direction,
let us prove the following
proposition:

\begin{prop} \label {detA} We have that
\begin{enumerate}
  \item If $z_h>0$, $z_v>0$, and $z_d=0$, then 
\begin{equation}\label{a9}
\begin{aligned}
&\det A_1=0,\quad  \det A_3>0, \quad \det A_4>0,\\
&\det A_2
\left\{
\begin{aligned}
&=0,\quad \textrm{if $n\equiv 1\pmod 2$},\\
&>0,\quad \textrm{if $n\equiv 0\pmod 2$}.
\end{aligned}
\right.
\end{aligned}
\end{equation}
  \item If $z_h>0$, $z_v=0$, and $z_d>0$, then 
\begin{equation}\label{a11}
\begin{aligned}
&\det A_1
\left\{
\begin{aligned}
&=0,\quad \textrm{if $n\equiv 0 \pmod 4$},\\
&>0,\quad \textrm{if $n\not\equiv 0\pmod 4$}.
\end{aligned}
\right.\\
&\det A_2
\left\{
\begin{aligned}
&=0,\quad \textrm{if $n\equiv 2 \pmod 4$},\\
&>0,\quad \textrm{if $n\not\equiv 2 \pmod 4$}.
\end{aligned}
\right.\\
&\det A_3>0, \quad \det A_4>0.
\end{aligned}
\end{equation}
  \item If $z_h=0$, $z_v>0$, and $z_d>0$, then 
\begin{equation}\label{a12}
\begin{aligned}
&\det A_1
\left\{
\begin{aligned}
&=0,\quad \textrm{if $m\equiv 0 \pmod 4$},\\
&>0,\quad \textrm{if $m\equiv 2 \pmod 4$}.
\end{aligned}
\right.\\
&\det A_2
\left\{
\begin{aligned}
&=0,\quad \textrm{if $m\equiv 0 \pmod 4$ and $n\equiv 1 \pmod 2$},\\
&>0,\quad \textrm{otherwise}.
\end{aligned}
\right.\\
&\det A_3
\left\{
\begin{aligned}
&=0,\quad \textrm{if $m\equiv 2 \pmod 4$},\\
&>0,\quad \textrm{if $m\equiv 0 \pmod 4$}.
\end{aligned}
\right.\\
&\det A_4
\left\{
\begin{aligned}
&=0,\quad \textrm{if $m\equiv 2 \pmod 4$ and $n\equiv 1 \pmod 2$},\\
&>0,\quad \textrm{otherwise}.
\end{aligned}
\right.
\end{aligned}
\end{equation}
\end{enumerate}
\end{prop}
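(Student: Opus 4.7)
The plan is to combine the product representation \eqref{sf1} with the vanishing analysis of the spectral function $S(x,y)$ from \eqref{sf}. Because each factor $S\bigl(\frac{j+\al_i}{m},\frac{k+\be_i}{n}\bigr)$ is a sum of non-negative terms, the product $\det A_i$ is non-negative, and $\det A_i=0$ if and only if some factor vanishes, while $\det A_i>0$ otherwise. The proposition thus reduces, for each degenerate choice of weights and each pair $(\al_i,\be_i)\in\{0,\frac{1}{2}\}^2$, to deciding whether there exist integers $j\in\{0,\ldots,\frac{m}{2}-1\}$ and $k\in\{0,\ldots,n-1\}$ with $S\bigl(\frac{j+\al_i}{m},\frac{k+\be_i}{n}\bigr)=0$. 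In each of the three cases I would first characterize the zero set of $S$ on $\R^2$ and then check solvability of the resulting congruences.

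For part (1), with $z_d=0$, formula \eqref{sf} shows that $S(x,y)=0$ iff $2x\in\Z$ and $2y\in\Z$. The first condition reads $2(j+\al_i)\equiv 0\pmod m$; for $\al_i=\frac{1}{2}$ this becomes $2j+1\equiv 0\pmod m$, which is impossible since $m=2m_0$ is even, so the $i=3,4$ cases yield $\det A_i>0$ immediately. For $\al_i=0$ we may take $j=0$, and then the $y$-condition is either $k=0$ when $\be_i=0$ (so $\det A_1=0$) or $2k+1\equiv 0\pmod n$ when $\be_i=\frac{1}{2}$, which is solvable iff $n$ is odd, giving the claimed dichotomy for $\det A_2$.

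For part (2), with $z_v=0$, the vanishing of $S$ requires $\sin 2\pi x=0$ and $\cos(2\pi x+2\pi y)=0$, equivalently $2x\in\Z$ and $4y\in 1+2\Z$. The $x$-analysis is identical to that of part (1), so the $i=3,4$ cases are again strictly positive. For $i=1,2$ we take $j=0$ and face the condition $\frac{4(k+\be_i)}{n}\in 1+2\Z$; writing $4(k+\be_i)=n(2\ell+1)$ and tracking parities shows it is solvable iff $4\mid n$ when $\be_i=0$, and iff $n\equiv 2\pmod 4$ when $\be_i=\frac{1}{2}$, which gives \eqref{a11}. Part (3), with $z_h=0$, is entirely symmetric under the swap $x\leftrightarrow y$, $m\leftrightarrow n$, $\al_i\leftrightarrow\be_i$; the only asymmetry comes from the fact that $m$ is always even but $n$ need not be, which is precisely why the $\det A_2,\det A_4$ cases of \eqref{a12} acquire the extra parity-of-$n$ condition absent in \eqref{a11}.

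The argument is essentially a systematic bookkeeping over twelve sub-cases, with no deeper difficulty beyond elementary divisibility. The step requiring the most attention is the analysis of $\frac{4(k+\be_i)}{n}\in 1+2\Z$ (and its $m$-analogue in part (3)), which encodes the mod-$4$ conditions on $n$ and $m$; everything else follows immediately from the standing hypothesis $m=2m_0$ and the explicit description of the zero set of the spectral function.
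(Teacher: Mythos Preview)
Your proposal is correct and follows essentially the same approach as the paper: both arguments use the product formula \eqref{sf1}, reduce the question to whether some factor of the spectral function vanishes, and then settle each of the twelve sub-cases by the same elementary divisibility checks (the paper writes the key condition in case~(2) as $\frac{2(k+\beta_i)}{n}\in\frac12+\Z$, which is exactly your $4y\in 1+2\Z$). Your use of the $x\leftrightarrow y$ symmetry to handle part~(3) is a slight streamlining, but otherwise the two proofs coincide.
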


\begin{proof} (1) Assume that $z_h>0$, $z_v>0$, and $z_d=0$. By \eqref{a5}, $\det A_i=0$ if and only if for some 
$0\le j\le \frac{m}{2}-1$ and $0\le k\le n-1$,
\begin{equation}\label{aa1}
\begin{aligned}
\frac{2(j+\alpha_i)}{m}\in\Z\,\quad {\rm and} \quad \frac{2(k+\beta_i)}{n}\in\Z.
\end{aligned}
\end{equation}
In particular, this gives that 
$\det A_1=0$, due to the factor $j=k=0$.
On the other hand, $\det A_3>0$, because $\frac{2j+1}{m} \not\in\Z$
(since $m$ is even).
The same argument works for $\det A_4>0$.

Consider $\det A_2$.
If $n\equiv 1\pmod 2$, then $\det A_2=0$, due to the factor $j=0$, $k=\frac{n-1}{2}$. On the other hand,
if $n\equiv 0\pmod 2$, then 
$
\frac{2k+1}{n}\not\in\Z
$
and $\det A_2\not=0$.

(2) Assume that $z_h>0$, $z_v=0$, and $z_d>0$. To have $\det A_i=0$, we need that
\begin{equation}\label{aa2}
\begin{aligned}
\frac{2(j+\alpha_i)}{m}\in\Z\,\quad {\rm and} \quad \frac{2(k+\beta_i)}{n}
\in \frac{1}{2}+\Z.
\end{aligned}
\end{equation}
We have that $\det A_1=0$, provided $n\equiv 0 \pmod 4$,
due to the factor $j=0$, $k=\frac{n}{4}\,$. On the other hand, if $n\not \equiv 0 \pmod 4$, then 
$\frac{2 k}{n}\not \in\frac{1}{2}+\Z$, hence $\det A_1>0$.

Consider $\det A_2$.
If $n\equiv 2 \pmod 4$, then $\det A_2=0,$ due to the factor $j=0$, $k=\frac{n-2}{4}.$ 
On the other hand, if $n\not\equiv 2 \pmod 4$, then $\frac{2k+1}{n}\not\in\frac{1}{2}+\Z$, 
and therefore $\det A_2>0$. Also, $\det A_3>0$, because $\frac{2j+1}{m}\not \in\Z$.
The same argument is applied to $\det A_4$.

(3) Assume that $z_h=0$, $z_v>0$, and $z_d>0$.  We need that
\begin{equation}\label{aa3}
\begin{aligned}
\frac{2(j+\alpha_i)}{m}\in\frac{1}{2}+\Z\,\quad {\rm and} \quad 
\frac{2(k+\beta_i)}{n}\in\Z.
\end{aligned}
\end{equation}
This equation is similar to \eqref{aa2}, but since we assume that $m$ is even and $n$ can be odd,
the analysis is slightly different.
If $m\equiv 0 \pmod 4$, then $\det A_1=0,$ due to the factor $j=\frac{m}{4}$, $k=0.$ 
On the other hand, if $m\equiv 2 \pmod 4$, then $\frac{2 j}{m}\not\in\frac{1}{2}+\Z$, hence $\det A_1>0$. 

If $m\equiv 0 \pmod 4$ and $n\equiv 1 \pmod 2$, then $\det A_2=0$, due to the factor $j=\frac{m}{4}$, $k=\frac{n-1}{2}$. On the other hand,
if $n\equiv 0 \pmod 2$, then 
$\frac{2k+1}{n}\not \in\Z$, hence $\det A_2>0$. If $m\not\equiv 0 \pmod 4$, 
then $\frac{2j}{m}\not\in \frac{1}{2} +\Z$, 
hence $\det A_2>0.$

If $m\equiv 2 \pmod 4$, then $\det A_3=0$, due to the factor $j=\frac{m-2}{4}$, $k=0$. On the other hand, 
if  $m\equiv 0 \pmod 4$,
then $\frac{2j+1}{m}\not\in \frac{1}{2} +\Z$, hence $\det A_3>0$.

If $m\equiv 2 \pmod 4$ and $n\equiv 1 \pmod 2$, then $\det A_4=0$, due to the factor $j=\frac{m-2}{4}$, $k=\frac{n-1}{2}$. 
On the other hand, if $n\equiv 0 \pmod 2$, then $\frac{2k+1}{n}\not\in\Z$, hence $\det A_4>0$.
If $m\equiv 0 \pmod 4$, then $\frac{2j+1}{m}\not\in \frac{1}{2} +\Z$, hence $\det A_4>0$. 
\end{proof}

\section{Positivity of $\Pf A_3$ and $\Pf A_4$} \label{A34}

Let us turn to the proof of Theorem \ref{PST}.
We first prove that
 $\Pf A_3>0$ and $\Pf A_4>0$.

\vskip 2mm

\begin{lem}
Let $z_h,z_v,z_d>0$. Then 
\begin{equation}
\Pf A_3>0,\quad \Pf A_4>0.
\end{equation}
\end{lem}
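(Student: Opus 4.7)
The plan is to combine Proposition \ref{detA} with a direct sign computation at a single boundary point of the parameter space, then use continuity of the Pfaffian to propagate that sign to the interior.

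First I would extend Proposition \ref{detA} to show that $\det A_3$ and $\det A_4$ are strictly positive on the whole closed region $\mathcal{R} = \{(z_h, z_v, z_d) : z_h > 0,\, z_v \geq 0,\, z_d \geq 0\}$, which is connected. Parts (1) and (2) of the Proposition already cover the faces $\{z_d=0\}$ and $\{z_v=0\}$. The remaining stratum is the positive $z_h$-axis $\{z_v = z_d = 0\}$, where \eqref{sf1} reduces to
\[
\det A_i = \prod_{j=0}^{m/2-1}\prod_{k=0}^{n-1} 4 z_h^2 \sin^2\!\left(\pi(2j+1)/m\right), \qquad i = 3,4,
\]
which is strictly positive because $m$ is even and hence $m \nmid 2j+1$ for $0 \leq j \leq m/2-1$. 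Since $(\Pf A_i)^2 = \det A_i$ and $\Pf A_i$ is a continuous polynomial in $(z_h, z_v, z_d)$, both $\Pf A_3$ and $\Pf A_4$ have constant nonzero sign on $\mathcal{R}$; it therefore suffices to evaluate them at the corner $(z_h, 0, 0)$.

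At that corner only the $2^n$ all-horizontal dimer configurations contribute. Each is encoded by a vector $(P_0, \ldots, P_{n-1}) \in \{0,1\}^n$, where $P_k$ selects one of the two adjacent-pair matchings of row $k$. The sign $\sgn(\sigma; O_i)$ factors over the rows. A row with $P_k = 0$ contributes the identity block with every $\ep_i = +1$. A row with $P_k = 1$, after shifting its indices to $\{1, \ldots, m\}$, corresponds to the one-line permutation $(1, m, 2, 3, \ldots, m-1)$, which is an $(m-1)$-cycle of sign $(-1)^{m-2} = +1$ (since $m$ is even). The only edge whose $\ep_i$ is a priori $-1$ is the wrap-around dimer $(m-1, k)$--$(0, k)$: formulas \eqref{Aj3d} and \eqref{Aj3e} give $\ep_i((m-1, k), (0, k)) = -1$, but the Pfaffian's lexicographic ordering writes the pair as $((0, k), (m-1, k))$, and by antisymmetry this sign becomes $+1$. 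Hence $\sgn(\sigma; O_i) = +1$ for every all-horizontal $\sigma$, and
\[
\Pf A_3\big|_{(z_h, 0, 0)} = \Pf A_4\big|_{(z_h, 0, 0)} = 2^n z_h^{mn/2} > 0.
\]

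Combining the two steps yields $\Pf A_3, \Pf A_4 > 0$ on all of $\mathcal{R}$, and in particular on the open octant. The main obstacle is the sign bookkeeping in the corner computation: one must verify simultaneously that each $P_k = 1$ row-block permutation has sign $+1$ and that the wrap-around edge's sign reverses from $-1$ to $+1$ once the Pfaffian's lexicographic ordering is enforced. Apart from this, the argument is just continuity on top of the non-vanishing already furnished by Proposition \ref{detA}.
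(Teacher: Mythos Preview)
Your proof is correct and takes a genuinely different route from the paper's. The paper works on the face $\{z_d=0\}$ (with $z_h,z_v>0$), where Proposition~\ref{detA}(1) gives $\det A_1=0$; the Kasteleyn identity $\Pf A_1=Z^{00}-Z^{10}-Z^{01}-Z^{11}=0$ then rewrites $\Pf A_3$ as $2Z^{10}+2Z^{11}\ge 0$, and since $\det A_3>0$ there the inequality is strict. You instead degenerate all the way to the edge $\{z_v=z_d=0\}$ and compute $\Pf A_3,\Pf A_4$ directly from the $2^n$ all-horizontal configurations, bypassing the Kasteleyn identities and the $Z^{rs}$ decomposition entirely. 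The trade-off is that you must carry out the row-by-row sign bookkeeping you flag (the $(m-1)$-cycle sign together with the antisymmetry flip of the wrap-around edge), while the paper outsources the sign question to the nonnegativity of the $Z^{rs}$. Both arguments then finish by the same continuity and nonvanishing of the determinant on the relevant connected region.
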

\begin{proof}
If $z_h>0$, $z_v>0$, and $z_d=0$, then by Proposition \ref{detA} (1), 
$\det A_1=0$, hence from \eqref{a3} we deduce that
\begin{equation}\label{a13}
Z^{00}-Z^{01}=Z^{10}+Z^{11}. 
\end{equation}
On the other hand, if $z_h>0$, $z_v>0$, and $z_d=0$, then $\det A_3>0$. This implies that $\Pf A_3\not=0$, hence
from \eqref{a3}, \eqref{a9}, and nonnegativity of $Z^{rs}$ we obtain that
\begin{equation}\label{a14}
\begin{aligned}
\Pf A_3= 2Z^{10}+2Z^{11}>0,\quad \textrm{if}\quad z_h>0, \quad z_v>0,\quad z_d=0.
\end{aligned}
\end{equation}
By continuity, $\Pf A_3>0$ for the chosen $z_h>0$, $z_v>0$, and small $z_d>0$.
This proves that $\Pf A_3>0$ in the whole region
$z_h,z_v,z_d>0$. The same argument works for $\Pf A_4>0$.
\end{proof}
We will finish the proof of Theorem \ref{PST} in the subsequent two sections by showing that $\Pf A_2>0$ and $\Pf A_1<0$, respectively.

\section{Identities $\Pf A_1=-\Pf A_2$, $\Pf A_3=\Pf A_4$ for odd $n$} \label{Id}

\begin{lem}\label{bijection}
Let $n\equiv 1\pmod 2$. Then $Z^{00}=Z^{10}$ and $Z^{01}=Z^{11}$.
\end{lem}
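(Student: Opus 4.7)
The plan is to construct a weight-preserving bijection between configurations of class $(0,0)$ and class $(1,0)$, and between configurations of class $(0,1)$ and class $(1,1)$, by means of the horizontal translation
\begin{equation*}
\tau:\, V_{m,n}\to V_{m,n},\qquad \tau(i,j)=(i+1,j)\pmod{m\Z\times n\Z}.
\end{equation*}
The map $\tau$ acts on dimer configurations via $\tau(\sigma)=\{\langle\tau(x),\tau(y)\rangle:\langle x,y\rangle\in\sigma\}$. Since $\tau$ sends horizontal edges to horizontal edges, vertical to vertical, and diagonal to diagonal, the counts $N_h,N_v,N_d$ are preserved, so $w(\tau(\sigma))=w(\sigma)$. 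Thus it suffices to show that $\tau$ shifts the homology class by $(1,0)$ when $n$ is odd, i.e.\ $h(\tau(\sigma))=h(\sigma)+(1,0)\in\Z_2\oplus\Z_2$.

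The first step is a bookkeeping identity for the homology class $h(\sigma)=[\sigma\cup\sigma_{\rm st}]\in H_1(\mathbb{T}^2;\Z_2)$. Using that $\tau$ is a translation of the torus (hence acts trivially on $H_1$), and using bilinearity of the symmetric difference modulo $2$, one obtains
\begin{equation*}
h(\tau(\sigma))=[\tau(\sigma)\cup\sigma_{\rm st}]=[\sigma\cup\tau^{-1}(\sigma_{\rm st})]=h(\sigma)+[\sigma_{\rm st}\cup\tau^{-1}(\sigma_{\rm st})],
\end{equation*}
so everything reduces to computing the single homology class $[\sigma_{\rm st}\cup\tau^{-1}(\sigma_{\rm st})]$.

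The second step is the explicit description of that superposition. On row $j$, the configuration $\sigma_{\rm st}$ consists of the horizontal edges $(2l{-}2,j)$--$(2l{-}1,j)$ for $l=1,\ldots,m_0$, while $\tau^{-1}(\sigma_{\rm st})$ consists of the horizontal edges $(2l{-}3,j)$--$(2l{-}2,j)$; these two edge sets are disjoint and together they form a single cycle $(0,j)\to(1,j)\to\cdots\to(m{-}1,j)\to(0,j)$ that winds once around the $x$-axis of the torus. Hence this row contributes the class $(1,0)$ to the homology, and summing over the $n$ rows gives
\begin{equation*}
[\sigma_{\rm st}\cup\tau^{-1}(\sigma_{\rm st})]=(n,0)\pmod 2=(1,0),\qquad n\equiv 1\pmod 2.
\end{equation*}
Combined with the first step, $h(\tau(\sigma))=h(\sigma)+(1,0)$, so $\tau$ restricts to weight-preserving bijections $\Sigma_{m,n}^{(0,0)}\leftrightarrow\Sigma_{m,n}^{(1,0)}$ and $\Sigma_{m,n}^{(0,1)}\leftrightarrow\Sigma_{m,n}^{(1,1)}$, which yields $Z^{00}=Z^{10}$ and $Z^{01}=Z^{11}$.

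There is no real obstacle to this argument; the only point that warrants care is the homology bookkeeping in the first step (verifying that translating $\sigma$ and translating $\sigma_{\rm st}$ in the opposite direction produce the same superposition up to a torus translation, which acts trivially on $H_1$) and the combinatorial identification of the contours of $\sigma_{\rm st}\cup\tau^{-1}(\sigma_{\rm st})$ as exactly $n$ disjoint horizontal loops, one per row, whose classes add up mod $2$ to $(1,0)$ precisely because $n$ is odd.
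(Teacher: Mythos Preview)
Your proof is correct and follows essentially the same approach as the paper: both use the horizontal shift by one (the paper calls it $T$, you call it $\tau$) and show it changes the homology class by $(1,0)$ when $n$ is odd. The only cosmetic difference is that the paper verifies this by counting intersections with the lines $x=j+\tfrac12$ and $y=k+\tfrac12$, whereas you package the same computation as the $\Z_2$-homology identity $h(\tau(\sigma))=h(\sigma)+[\sigma_{\rm st}\cup\tau^{-1}(\sigma_{\rm st})]$ and then identify $\sigma_{\rm st}\cup\tau^{-1}(\sigma_{\rm st})$ as $n$ horizontal loops; the underlying observation is identical.
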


\begin{proof}
Recall that $Z^{rs}$ is the partition function corresponding to dimer configurations in homology class $(r,s)\in \Z_2\oplus\Z_2$ with respect to the standard configuration $\sg_{st}$. The homology class of a given configuration $\sg$ can be calculated as follows:
let $v_j$ be the number of intersections of the vertical line $x=j+\frac{1}{2}$ with the dimers in $\sg\cup\sg_{st}$. Then 
\begin{equation}\label{rbij0}
v_j\equiv r\pmod 2 \quad \forall\,j\in\Z_m.
\end{equation}
Obviously,
\begin{equation}\label{rbij}
 v_j= v_j(\sg)+v_j(\sg_{st})\,,
\end{equation}
where $v_j(\sg)$ denotes the number of intersections of the vertical line $x=j+\frac{1}{2}$ with the dimers in $\sg$. Let $T\sg$ be a shift of the configuration $\sg$ to the right by 1. Then 
\begin{equation}\label{rbij1}
\begin{aligned}
v_{j+1}(T\sg)=v_j(\sg).
\end{aligned}
\end{equation}
For the standard configuration the number of intersections is
\begin{equation}\label{rbij2}
v_j(\sg_{\rm st})=
\left\{
\begin{aligned}
&n,\quad \textrm{if}\quad j\equiv 0\pmod 2\,,\\
&0,\quad \textrm{if}\quad j\equiv 1\pmod 2\,,
\end{aligned}
\right.
\end{equation}
hence  the number of intersections of the vertical line $x=j+1+\frac{1}{2}$ with the dimers in $T\sg\cup\sg_{st}$ is equal to  
\begin{equation}\label{rbij3}
  v_{j+1}(T\sg)+v_{j+1}(\sg_{st})=v_j(\sg)+v_j(\sg_{\rm st})+(-1)^{j+1}n\,.
\end{equation}
This implies the relation between the homology class numbers $r(T\sg)$ and $r(\sg)$ as
\begin{equation}\label{rbij4}
 r(T\sg)=r(\sg)+n \pmod 2.
\end{equation}
In particular, if $n$ is odd then
\begin{equation}\label{rbij5}
 r(T\sg)=r(\sg)+1 \pmod 2.
\end{equation}

Similarly, let $h_k$ be the number of intersections of the horizontal line $y=k+\frac{1}{2}$ with the dimers in $\sg\cup\sg_{st}$. Then 
\begin{equation}\label{rbij6}
h_k\equiv s\pmod 2 \quad \forall\,k\in\Z_n
\end{equation}
and
\begin{equation}\label{rbij7}
 h_k= h_k(\sg)+h_k(\sg_{st})\,,
\end{equation}
where $h_k(\sg)$ denotes the number of intersections of the line $y=k+\frac{1}{2}$ with the dimers in $\sg$. Obviously, $h_k(\sg_{st})=0$, hence
\begin{equation}\label{rbij8}
 h_k= h_k(\sg)\,.
\end{equation}
Also,
\begin{equation}\label{rbij9}
  h_k(T\sg)=h_k(\sg)\,,
\end{equation}
hence
\begin{equation}\label{rbij10}
  s(T\sg)=s(\sg)\,.
\end{equation}
Combining this with \eqref{rbij5}, we obtain that if $n$ is odd, then 
\begin{equation}\label{rbij11}
T\,:\, \Sigma_{m,n}^{0,0}\to \Sigma_{m,n}^{1,0}\,,\quad
T\,:\, \Sigma_{m,n}^{0,1}\to \Sigma_{m,n}^{1,1}\,,
\end{equation}
where $\Sigma_{m,n}^{r,s}$ is the set of dimer configurations in
the homology class $(r,s)$. Since the shift $T$ is invertible, the mappings
\eqref{rbij11} are bijections.

Since $w(T\sg)=w(\sg)$ we obtain that $Z^{00}=Z^{10}$ and $Z^{01}=Z^{11}$.
\end{proof}

Note that from Lemma \ref{bijection} and the Kasteleyn identities \eqref{a3}, we immediately have the following theorem:
\begin{theo}\label{Conj} Let $n\equiv 1\pmod 2$. Then 
for all $z_h,z_v,z_d\ge 0$,
\begin{equation}\label{con0}
\Pf A_1=-\Pf A_2,\quad \Pf A_3=\Pf A_4.
\end{equation}
\end{theo}

Numeric data for Pfaffians $\Pf A_i$ in Appendix \ref{numerics} illustrate the identities
$\Pf A_1=-\Pf A_2$ and $\Pf A_3=\Pf A_4$ on the $4\times 3$ lattice, while they disprove
such identities on the $4\times 4$ lattice. For some interesting identities between dimer model partition functions on different surfaces see the paper of Cimasoni and Pham \cite{CimPham}.

\section{Positivity of $\Pf A_2$} \label{A2}

We begin with the following case:

\begin{lem}
Let $z_h,z_v,z_d>0$. Then if either $n\equiv 0\pmod 2$ or $m\equiv 2\pmod 4$, then 
\begin{equation}\label{A2pos1}
\Pf A_2>0.
\end{equation}
\end{lem}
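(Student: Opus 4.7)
The plan is to adapt the boundary-limit argument used for $\Pf A_3$ and $\Pf A_4$ in Section \ref{A34}. Since $\det A_2>0$ throughout the open region $z_h,z_v,z_d>0$ by \eqref{a7}, the Pfaffian $\Pf A_2$ is nonvanishing and therefore sign-constant there. It consequently suffices to exhibit a single boundary point, with one weight set to zero, at which both $\Pf A_2>0$ and $\det A_2>0$ hold; continuity of $\Pf A_2$ as a polynomial in the weights then propagates the strict inequality into the interior of the positive octant.

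For the first case, $n\equiv 0\pmod 2$, I would take the boundary $z_d=0$. By Proposition \ref{detA}(1) we then have $\det A_1=0$, so $\Pf A_1=0$, while $\det A_2>0$, so $\Pf A_2\ne 0$. Substituting $\Pf A_1=0$ into the Kasteleyn identities \eqref{a3} gives $Z^{00}-Z^{10}=Z^{01}+Z^{11}$, and consequently
\[
\Pf A_2 \,=\, 2\bigl(Z^{01}+Z^{11}\bigr)\,\ge\, 0,
\]
which combined with $\Pf A_2\ne 0$ forces $\Pf A_2>0$ at the boundary. Continuity in $z_d$ and sign-constancy in the connected open region then give $\Pf A_2>0$ throughout $z_h,z_v,z_d>0$.

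For the second case, $m\equiv 2\pmod 4$, I would instead take the boundary $z_h=0$. By Proposition \ref{detA}(3), $\det A_3=0$ so $\Pf A_3=0$, while $\det A_2>0$. Adding the two relevant identities in \eqref{a3} yields the linear combination
\[
\Pf A_2+\Pf A_3 \,=\, 2\bigl(Z^{00}+Z^{11}\bigr),
\]
so at $z_h=0$ one reads off $\Pf A_2=2(Z^{00}+Z^{11})\ge 0$, and nonvanishing of $\det A_2$ upgrades this to strict positivity. The same continuity argument extends $\Pf A_2>0$ to the full positive octant.

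I expect no real obstacle within this lemma: for each of the two hypotheses the only task is to pair a convenient coordinate-hyperplane boundary with the unique determinant of another $A_i$ forced to vanish there, and then read off the useful linear combination of the Kasteleyn identities from \eqref{a3}. The genuinely hard case for $\Pf A_2>0$, namely $n$ odd together with $m\equiv 0\pmod 4$, lies outside the scope of this lemma; at such a boundary point no determinant of another $A_i$ is forced to vanish while keeping $\det A_2>0$, so that case must be handled either through the finer small-weight expansion of Section \ref{A1} or through the identity $\Pf A_1=-\Pf A_2$ of Theorem \ref{Conj}.
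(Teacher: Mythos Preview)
Your proof is correct and follows essentially the same approach as the paper: in each case you go to the same boundary face ($z_d=0$ for $n$ even, $z_h=0$ for $m\equiv 2\pmod 4$), use Proposition \ref{detA} to identify the same vanishing determinant ($\det A_1$, respectively $\det A_3$), and read off from the Kasteleyn identities \eqref{a3} that $\Pf A_2=2(Z^{01}+Z^{11})$, respectively $\Pf A_2=2(Z^{00}+Z^{11})$. Your presentation is arguably a touch more careful than the paper's in making explicit that nonnegativity of the $Z^{rs}$ gives only $\Pf A_2\ge 0$, with strictness then supplied by $\det A_2>0$; the paper simply writes ``$>0$'' directly. One minor slip in your closing remarks: the small-weight expansion that handles the residual case $m\equiv 0\pmod 4$, $n$ odd is Lemma \ref{lem1} in Section \ref{A2}, not Section \ref{A1}.
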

\begin{proof}
First assume that $n\equiv 0\pmod 2$ and consider the case $z_h>0$, $z_v>0$, and $z_d=0$. 
Since $n$ is even, then by Proposition \ref{detA} (1),
$\det A_2>0$ and $\det A_1=0$, hence 
\begin{equation} Z^{00}-Z^{10}=Z^{01}+Z^{11}\implies \Pf A_2=2Z^{01}+2Z^{11}>0. 
\end{equation} 
By continuity, $\Pf A_2>0$ for the chosen $z_h$, $z_v$, and small $z_d$. This proves $\Pf A_2>0$ in the whole region $z_h,z_v,z_d>0$. 
Now assume that $m\equiv 2\pmod 4$ and consider the case $z_h=0$, $z_v>0$, and $z_d>0$. By Proposition \ref{detA} (3),
in this case $\det A_2>0$, $\det A_3=0$, hence
\begin{equation}\label{a15}
Z^{01}-Z^{10}=Z^{00}+Z^{11} \implies  \Pf A_2= 2Z^{00}+2Z^{11}>0.
\end{equation}
By continuity, $\Pf A_2>0$ for the chosen $z_v>0$, $z_d>0$, and small $z_h>0$.
This proves that $\Pf A_2>0$ in the whole region
$z_h,z_v,z_d>0$.
\end{proof}

The positivity of $\Pf A_2$ in the 
case $m\equiv 0\pmod 4$, $n\equiv 1\pmod 2$ is more difficult. 
To deal with this case,  we consider the asymptotic behavior of 
$\Pf A_2$ for $z_h=1$, $z_v=0$, as $z_d\to 0$, and prove the following lemma:

\begin{lem} \label{lem1} Let $m\equiv 0\pmod 4$, $n\equiv 1\pmod 2$,  and  $z_h=1$, $z_v=0$. Then as $z_d\to 0$,
\begin{equation}\label{pos1}
\Pf A_2=2\left(\frac{m}{2}\right)^n  z_d^n(1+\mathcal O(z_d)).
\end{equation} 
\end{lem}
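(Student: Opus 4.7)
My plan is to use Lemma~\ref{bijection} to collapse $\Pf A_2$ to a sum with no sign cancellation, and then perform the asymptotic analysis by counting configurations. Since $n$ is odd, Lemma~\ref{bijection} together with the Kasteleyn identity \eqref{a3} gives
\[
\Pf A_2 \;=\; (Z^{00}-Z^{10}) + (Z^{01}+Z^{11}) \;=\; 2\,Z^{01},
\]
so it suffices to show $Z^{01} = (m/2)^n z_d^n\bigl(1+\mathcal O(z_d)\bigr)$ as $z_d\to 0$. With $z_h=1$ and $z_v=0$, only configurations $\sg$ with $N_v(\sg)=0$ contribute to $Z^{01}$, each with weight $z_d^{N_d(\sg)}$, so the problem reduces to minimizing $N_d$ on $\Sigma_{m,n}^{0,1}\cap\{N_v=0\}$ and counting the minimizers.

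For such a $\sg$ let $a_k$ denote the number of vertices in row $k$ matched diagonally to a vertex in row $k-1$. Because only diagonals cross horizontal lines, $a_{k+1}$ equals the number of dimers of $\sg$ crossing $y=k+\tfrac12$, and the homology index $s$ is its common parity. The constraint $s=1$ therefore forces every $a_k$ to be odd, so $N_d=\sum_k a_k\ge n$, with equality iff $a_k=1$ for all $k$. Each such minimizer is determined by the up-right diagonal positions $(p_0,\ldots,p_{n-1})\in(\Z_m)^n$; in row $k$, after removing the two diagonal vertices $p_k$ and $q_k:=p_{k-1}+1$, the remaining $m-2$ vertices on the cycle $\Z_m$ must be matchable by pairs of horizontally adjacent vertices. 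This requires both arcs of $\Z_m\setminus\{p_k,q_k\}$ to have even length, which forces $p_k\equiv p_{k-1}\pmod 2$, so all $p_k$ share a common parity; the horizontal matching on each arc is then uniquely determined, and there are $2(m/2)^n$ minimizers in total.

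The main obstacle is determining the vertical homology index $r$ of a given minimizer: it depends on how the canonical horizontal matching wraps around the torus boundary, and this wrapping pattern is not determined by the common parity of the $p_k$ alone. I sidestep the direct computation by invoking the shift $T$ constructed in the proof of Lemma~\ref{bijection}. The shift $T$ sends each $p_k$ to $p_k+1$, hence interchanges the all-even and all-odd families of minimizers, preserves $N_d=n$, and by \eqref{rbij5} flips $r$ (using that $n$ is odd). Thus $T$ restricts to a bijection between the minimizers in $\Sigma_{m,n}^{0,1}$ and those in $\Sigma_{m,n}^{1,1}$, so each class contains exactly $(m/2)^n$ of them. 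Every remaining contribution to $Z^{01}$ comes from a configuration with $N_d\ge n+2$ (since the common odd parity of the $a_k$'s forces $N_d\equiv n\pmod 2$), which yields the error term $\bigl(1+\mathcal O(z_d)\bigr)$ and completes the argument.
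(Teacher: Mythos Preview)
Your argument is correct and takes a genuinely different route from the paper's. The paper works directly with the Pfaffian $\Pf A_2=\sum_\sg \sgn(\sg;O_2)w(\sg)$: it first shows that configurations with a fully horizontal row cancel in pairs, then identifies the set $\Sigma_0$ of ``lowest excited states'' (exactly one diagonal per pair of adjacent rows), and devotes a separate lemma (Lemma~\ref{lem2}) to proving $\sgn(\sg;O_2)=+1$ for every $\sg\in\Sigma_0$ via elementary moves and an explicit stack configuration. Only after this sign computation does the count $|\Sigma_0|=2(m/2)^n$ yield the leading term.

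You instead invoke Lemma~\ref{bijection} together with the Kasteleyn identity \eqref{a3} to collapse $\Pf A_2$ to $2Z^{01}$, a sum of nonnegative terms, so the entire sign analysis disappears. The same set of minimizers emerges, but now the homology constraint $s=1$ forces each $a_k$ odd, which both isolates the minimizers and yields the sharper error $O(z_d^{n+2})$. The one nontrivial step your approach introduces---splitting the $2(m/2)^n$ minimizers evenly between $r=0$ and $r=1$---is handled neatly by the shift $T$ from Lemma~\ref{bijection}. Your proof is shorter and entirely avoids the orientation $O_2$; on the other hand, the paper's direct sign computation via elementary moves is the template that gets reused and extended in Section~\ref{A1} for the harder case of $\Pf A_1$, so the paper's choice is partly an investment in later machinery.
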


{\it Remark.} This will imply that $\Pf A_2>0$ for $z_h=1$, $z_v=0$ and sufficiently small $z_d$, and hence $\Pf A_2>0$ for all
$z_h,z_v,z_d>0$.  

\begin{proof} We have that 
\begin{equation}\label{pos2}
\Pf A_2=\sum_{\sg\in\Sigma_{m,n}} \sgn(\sg) w(\sg).
\end{equation} 
By our assumption, $z_v=0$, hence there are no vertical dimers.
Consider first the limiting case,  $z_d=0$. In this case there are only horizontal dimers.
Let $\sg$ be any configuration of horizontal dimers and $T_k\sg$ a configuration obtained from $\sg$ by the shift $x\to x+\mathsf e_1$,
$\mathsf e_1=(1,0)$, on the horizontal line $y=k$. Then the superposition
$\sg\cup T_k\sg$ consists of trivial contours and one horizontal contour around the torus, which has a negative sign as there are $m\equiv 0 \pmod 2$ arrows in the direction of movement from left to right. Hence,
$\sgn(T_k\sg)=-\sgn(\sg)$, $w(T_k\sg)=w(\sg)$. From here it follows that
\[
\sgn(\sg) w(\sg)+\sgn(T_k\sg)w(T_k\sg)=0,
\]
and therefore, $\sg$ and $T_k(\sg)$ cancel each other in $\Pf A_2$. This implies that
\begin{equation}\label{pos2.1}
\Pf A_2\big|_{z_d=0}=0.
\end{equation} 

Now let $z_d>0$, so that there are both horizontal and diagonal dimers. We will consider $z_d\to 0$,
and we will call configurations consisting of only horizontal dimers the {\it ground state}
configurations, because they have the biggest weight $w(\sg)$. As we saw, the ground state configurations
cancel each other in $\Pf A_2$. We will call not ground state configurations  {\it excited state} configurations.

Consider the set of configurations $\Sigma(k)$
in which the line $y=k$ is occupied completely by horizontal dimers. Let $\sg\in \Sigma(k)$ and $T_k\sg$ obtained
from $\sg$ by the shift  $x\to x+\mathsf e_1$ on the  line $y=k$. Then  
again $\sgn(T_k\sg)=-\sgn(\sg)$ and hence
\[
\sum_{\sg\in\Sigma(k)} \sgn(\sg) w(\sg)=0.
\]
By adding over $k\in\Z_n$, we obtain that
\begin{equation}\label{pos3}
\sum_{\sg\in\bigcup_{k\in\Z_n}\Sigma(k)} \sgn(\sg) w(\sg)=0,
\end{equation}
hence to get excited states without cancellations, we have to consider the set of configurations  such that
on each line $y=k$ there is at least one vertex covered by a diagonal dimer. In fact, since $m$ is even, there are  
at least two vertices covered by diagonal dimers, 
hence the total number of diagonal dimers $ N_{\rm diag}(\sg)$
is at least $n$.
We denote 
\begin{equation}\label{Sg0}
\Sigma_0=\{\sg\in\Sigma_{m,n}\,|\,N_{\rm diag}(\sg,k)=2,\;k\in\Z_n\},
\end{equation}
where $N_{\rm diag}(\sg,k)$ is the number of vertices on the line $y=k$ in $\sg$ covered by diagonal dimers.

Denote by $N_{\rm diag}(\sg,k,k+1)$ the number of diagonal dimers connecting horizontal line $y=k$ to horizontal line $y=k+1$. Then 
\[
N_{\rm diag}(\sg,k)=N_{\rm diag}(\sg,k-1,k)+N_{\rm diag}(\sg,k,k+1),
\]
hence
\[
N_{\rm diag}(\sg,k,k+1)=0,\, 1 \;\textrm{or}\; 2,\quad \forall \sg\in\Sg_0. 
\]
Suppose that for some $\sg\in\Sg_0$ and some $k\in \Z_n$,
\[
N_{\rm diag}(\sg,k,k+1)= 0, 
\]
then
\[
N_{\rm diag}(\sg,k+1,k+2)= 2, \quad N_{\rm diag}(\sg,k+2,k+3)= 0, \quad
N_{\rm diag}(\sg,k+3,k+4)= 2, \ldots,
\]
hence the whole lattice is stratified into horizontal strips of width 2 with 2 diagonal dimers in each strip. But $n$ is odd, hence such stratification is not possible. Similarly, $N_{\rm diag}(\sigma,k,k+1) = 2$ is not possible as well. This implies that if $\sg\in\Sg_0$, then
\[
N_{\rm diag}(\sg,k,k+1)= 1,\quad \forall k\in\Z_n.  
\]
This means that for every $k$ there is a unique diagonal dimer connecting horizontal lines $y=k$ and $y=k+1$. Let $(j_k,k)$ be the vertex covered by this diagonal dimer
on the horizontal line $y=k$.
Then, since all vertices between $(j_{k}+1,k+1)$  and $(j_{k+1},k+1)$ must be covered by horizontal dimers we have that 
\begin{equation}\label{pos4a}
j_{k+1}-(j_k+1)\equiv 1\pmod 2, \;\forall k\in\Z_n,
\end{equation}
or, equivalently,
\begin{equation}\label{pos4}
j_{k+1}-j_k\equiv 0\pmod 2, \;\,\forall k\in\Z_n.
\end{equation}
This implies that 
\begin{equation}\label{pos4ab}
j_k-j_{\ell}\equiv 0 \pmod 2, \;\,\forall k,\ell\in\Z_n.
\end{equation}
Hence, 
\begin{equation}\label{pos4b}
\Sigma_{0}=\bigsqcup_{i=0}^{1}\Sigma_{0}^{i}
\end{equation}
and $\sg\in\Sigma_{0}^{i},\; i=0,1,$ if
\begin{equation}\label{pos4c}
j_\ell\equiv i \pmod 2\;\,\forall \{(j_\ell,k_\ell),(j_\ell+1, k_\ell+1)\}.
\end{equation}
The proof of Lemma \ref{lem1} is based on the following lemma:

\begin{lem} \label{lem2} For any $\sg\in\Sigma_0$,
\begin{equation}\label{pos5}
\sgn(\sg)=+1.
\end{equation} 
\end{lem}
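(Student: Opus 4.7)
The plan is to apply Lemma~\ref{Co}, which gives
\[\sgn(\sigma;O_2)\cdot\sgn(\sigma_{\rm st};O_2)=\prod_k \sgn(\gamma_k;O_2),\]
where $\{\gamma_k\}$ are the contours of $\sigma\cup\sigma_{\rm st}$. Since $\sgn(\sigma_{\rm st};O_2)=+1$ by \eqref{lex3}, and trivial $2$-contours contribute $+1$, the task reduces to showing that the non-trivial contours contribute $+1$ in total. On each horizontal line $y=k$, $\sigma$ has two ``hole'' vertices $(j_{k-1}+1,k)$ and $(j_k,k)$ covered by diagonals and the remaining $m-2$ vertices paired by horizontal dimers, while $\sigma_{\rm st}$ is entirely horizontal; in the union, non-hole vertices have degree $2$ and the two holes have degree $1$. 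The parity condition $j_k\equiv j_{k-1}\pmod 2$ guarantees that one of the two arcs between the holes on line $y=k$ consists of shared horizontal dimers (trivial $2$-cycles) while the complementary arc supports a unique simple path between the two holes. Chaining these $n$ per-line paths through the $n$ diagonals of $\sigma$ yields a single non-trivial contour $\gamma$ that wraps exactly once around the torus in the $y$-direction.

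Next I would compute $\sgn(\gamma;O_2)$ by comparison with $O_1$. By \eqref{Aj3b}, in $O_1$ every horizontal arrow points $(j,k)\to(j+1,k)$ and every diagonal at $(j,k)$ has $\varepsilon_1=(-1)^{j+1}$; since for $\sigma\in\Sigma_0$ all $j_k$ share a common parity, every diagonal of $\gamma$ is oriented in the same way (``down-left'' when $\sigma\in\Sigma_0^0$, ``up-right'' when $\sigma\in\Sigma_0^1$). I would traverse $\gamma$ by going upward through each diagonal, from $(j_k,k)$ to $(j_k+1,k+1)$, and along each per-line path from $a=j_{k-1}+1$ to $b=j_k$; inspection shows that this path runs leftward in $\Sigma_0^0$ (each step $v\to v-1$ is either the $\sigma_{\rm st}$-pair at an odd vertex or the $\sigma$-pair at an even vertex on the non-trivial arc) and rightward in $\Sigma_0^1$. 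Under this traversal every edge of $\gamma$ is backward in $O_1$ in the first case and forward in $O_1$ in the second, so $\nu(\gamma;O_1)\in\{0,L\}$ with $L$ the (even) length of $\gamma$, and hence $\sgn(\gamma;O_1)=(-1)^{\nu+1}=-1$. The orientation $O_2$ agrees with $O_1$ except on diagonal and vertical edges crossing the line $y=n-1\to y=0$, and $\gamma$ contains exactly one such edge, namely the wrapping diagonal at $k=n-1$. Its orientation flips, changing $\nu$ by one and hence its parity, so $\sgn(\gamma;O_2)=+1$, whence $\sgn(\sigma;O_2)=+1$.

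The hardest step will be verifying both the one-contour claim and the uniform-orientation claim rigorously, via a case analysis of the per-line path structure --- in particular, when the two holes on a line are adjacent so that either the non-trivial arc collapses to length zero (the path is then a single $\sigma_{\rm st}$ edge between the two holes) or the trivial arc is empty and the path wraps once across the $x$-boundary. In each such case one must check that the direction from $a$ to $b$ along the line, concatenated with the upward direction through the diagonals, still produces a uniform forward/backward labeling of all edges of $\gamma$ in $O_1$.
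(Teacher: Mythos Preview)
Your argument is correct but takes a genuinely different route from the paper's. The paper proceeds in two steps: first it introduces \emph{elementary moves} (horizontal shifts of a diagonal dimer by two units) and shows each such move preserves $\sgn(\sigma)$, thereby reducing every $\sigma\in\Sigma_0^i$ to a single canonical configuration $\sigma_{\rm stack}$ (or $\sigma_{\rm stack}+\mathsf e_1$); second, it computes $\sgn(\sigma_{\rm stack})$ and $\sgn(\sigma_{\rm stack}+\mathsf e_1)$ directly from their superpositions with $\sigma_{\rm st}$, each of which produces one explicit zigzag contour whose arrow count is easy to read off. You instead analyze $\sigma\cup\sigma_{\rm st}$ for a \emph{general} $\sigma\in\Sigma_0$, using the parity constraint $j_k\equiv j_0\pmod 2$ to show that on each line the horizontal edges split into trivial $2$-cycles on one arc and a single monotone path on the other, so that the global superposition has exactly one nontrivial contour with every edge oriented the same way relative to $O_1$; the $O_1\to O_2$ flip on the single wrapping diagonal then yields the sign. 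Your approach is more direct for this lemma and avoids the auxiliary move machinery, at the cost of the per-line case analysis you flag (adjacent holes, wrapping arcs), which does go through as you outline. The paper's reduction, on the other hand, isolates the sign computation in a single concrete configuration and yields the elementary-move technique as a reusable tool---it is invoked again, in several variants, throughout Section~\ref{A1}.
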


Let us finish the proof of Lemma \ref{lem1}, assuming Lemma \ref{lem2}.
Since $w(\sg)=z_d^n$ (recall that $z_h=1$) for any $\sg\in \Sigma_0$, we obtain that
\begin{equation}\label{pos6}
\sum_{\sg\in\Sigma_0} \sgn(\sg) w(\sg)=|\Sigma_0|z_d^n.
\end{equation} 
At $y=0$ we have $m$ choices for $j_0$ and then,
because of condition \eqref{pos4}, we have $\frac{m}{2}$ choices for $j_1,\,j_2,\,\ldots,\, j_{n-1}$. Therefore,
\begin{equation}\label{pos7}
|\Sigma_0|=m\left(\frac{m}{2}\right)^{n-1},
\end{equation} 
hence
\begin{equation}\label{pos8}
\sum_{\sg\in\Sigma_0} \sgn(\sg) w(\sg)=m\left(\frac{m}{2}\right)^{n-1}z_d^n.
\end{equation} 
The higher excited states have at least $(n+1)$ diagonal dimers and therefore their weight is at most $z_d^{n+1}$, hence Lemma \ref{lem1} follows. \end{proof}

It remains to prove Lemma \ref{lem2}. We define a \textit{stack configuration}, $\sg_{\rm stack},$ to be a configuration in which all diagonal dimers form a stack between the vertical lines $x=0$ and $x=1$. The remaining vertices are occupied by horizontal dimers. See Fig.\ \ref{FPfA2} (a).

\noindent {\it Proof of Lemma \ref{lem2}}. The proof consists of two steps. At Step 1 we show that if $\sigma\in\Sigma_0^0$ (see equations \eqref{pos4b},  \eqref{pos4c}), then 
\begin{equation}\label{pos9}
\sgn(\sigma)=\sgn(\sigma_{\rm stack}),
\end{equation}
and if $\sg\in\Sigma_0^1,$ then
\begin{equation}\label{pos9a}
\sgn(\sigma)=\sgn(\sigma_{\rm stack}+\mathsf e_1).
\end{equation}
At Step 2 we show that 
\begin{equation}\label{pos10}
\sgn(\sigma_{\rm stack})=\sgn(\sigma_{\rm stack}+\mathsf e_1)=+1,
\end{equation}
hence $\sgn(\sigma)=+1$, $\forall\sg\in\Sigma_0$.\\

{\it Step 1.} Observe that any configuration $\sg\in\Sigma_{0}$ is determined
by the position of its diagonal dimers. Let us call an {\it elementary move} the change of $\sg$ to $\sg'$, 
where a diagonal dimer $\{(j,k),(j+1,k+1)\}$ is shifted
 to $\{(j+2,k),(j+3,k+1)\}$. Assume that in $\sg$ the intermediate vertices $(j+1,k)$ and $(j+2,k+1)$
are not covered by diagonal dimers.
Then the superposition $\sg\cup\sg'$ consists of trivial contours and one nontrivial
contour of the length 6, which is positive as shown in Fig.\ \ref{F1} (a) (note that there are always three arrows opposite to any direction of movement along that contour), hence $\sgn(\sg)=\sgn(\sg')$. If exactly one of the vertices $(j+1,k)$, 
 $(j+2,k+1)$ is covered by a diagonal dimer, then the superposition $\sg\cup\sg'$ consists of trivial contours and one nontrivial
contour of the length $m+2$, $m+1\equiv 1\pmod 2$ of whose arrows are oriented from left to right, as shown in Fig.\ \ref{F1} (b), hence again $\sgn(\sg)=\sgn(\sg')$. Finally, if both 
vertices $(j+1,k)$, 
$(j+2,k+1)$ are covered by diagonal dimers, then $\sgn(\sg)=\sgn(\sg')$, because we can consider
a clockwise sequence of elementary moves from $\sg$ to $\sg'$, without intermediate vertices covered by diagonal dimers. See Fig.\ \ref{F1} (c). Thus, for any elementary move $\sg\to\sg'$ we have that
\begin{equation}\label{sgn-elem}
\sgn(\sigma')=\sgn(\sigma).
\end{equation}

If $\sg\in\Sigma_{0}^{0},$ then by elementary moves we can first move the diagonal dimer at the horizontal line $y=0$ to the position  $\{(0,0), (1,1)\},$ and then inductively at horizontal line $y=k,$ $k = 1, 2, \ldots, n-1,$ to the position  $\{(0,k), (1,k+1)\},$ forming a stack of diagonal dimers above the dimer at the horizontal line $x=0.$ In other words, we have moved $\sg$ to $\sg_{\rm stack}.$ If $\sg\in\Sigma_{0}^{1},$ then using the above argument we move it to $\sg_ {\rm stack}+\mathsf e_1.$ Hence, formulae \eqref{pos9} and \eqref{pos9a} hold.\\

{\it Step 2.} As shown in Fig.\ \ref{FPfA2} (b),
the superposition $\sigma_{\rm stack}\cup\sigma_{\rm st}$ consists of
trivial contours and exactly one nontrivial contour which is a zigzag path between the vertical lines $x=0$ and $x=1$. The latter one is positive, since there are $2n-1\equiv 1\pmod 2$ arrows in the direction of movement from top to bottom and hence $\sgn(\sg_{\rm stack})=+1.$ 

Similarly, the superposition $(\sigma_{\rm stack}+\mathsf e_1)\cup(\sigma_ {\rm st}+\mathsf e_1)$  consists of trivial contours and a zigzag contour with $n+1\equiv 0\pmod 2$ arrows in the direction of movement from top to bottom, and so $\sgn((\sigma_{\rm stack}+\mathsf e_1)\cup(\sigma_{\rm st}+\mathsf e_1))=-1.$ In addition, the sign of the superposition $\sigma_{\rm st}\cup(\sigma_{\rm st}+\mathsf e_1)$  is $(-1)$ as well. Indeed, the superposition $\sigma_{\rm st}\cup(\sigma_{\rm st}+\mathsf e_1)$  consists of 
$n$ horizontal contours of the length $m$. Since $m$ is even, the sign of 
each contour is $(-1)$, and
since $n$ is odd, the sign of the superposition $\sigma_{\rm st}\cup(\sigma_{\rm st}+\mathsf e_1)$  is $(-1)$ as well. Now, since the sign of the superpositions 
$(\sigma_{\rm stack}+\mathsf e_1)\cup(\sigma_{\rm st}+\mathsf e_1)$ and 
$\sigma_{\rm st}\cup(\sigma_{\rm st}+\mathsf e_1)$ is $(-1)$, we obtain that
$\sgn((\sigma_{\rm stack}+\mathsf e_1)\cup \sigma_{\rm st})=+1$,
hence $\sgn(\sigma_{\rm stack}+\mathsf e_1)=+1.$

Lemma \ref{lem2} is proven.$\hfill\square$
\begin{figure}[h]
\begin{tabular}{c c c}
\includegraphics[scale=.35]{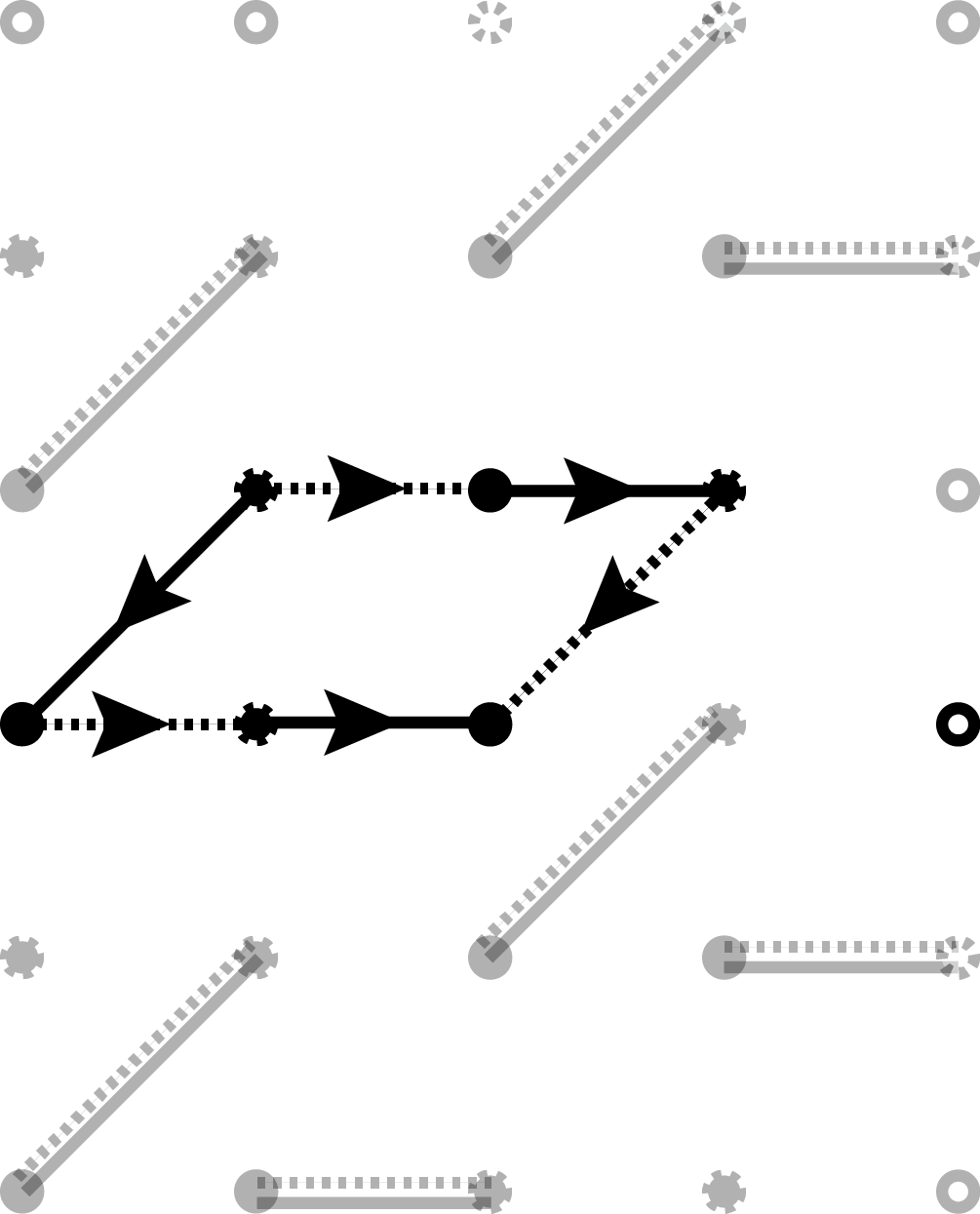}&\hspace{.33in} \includegraphics[scale=.35]{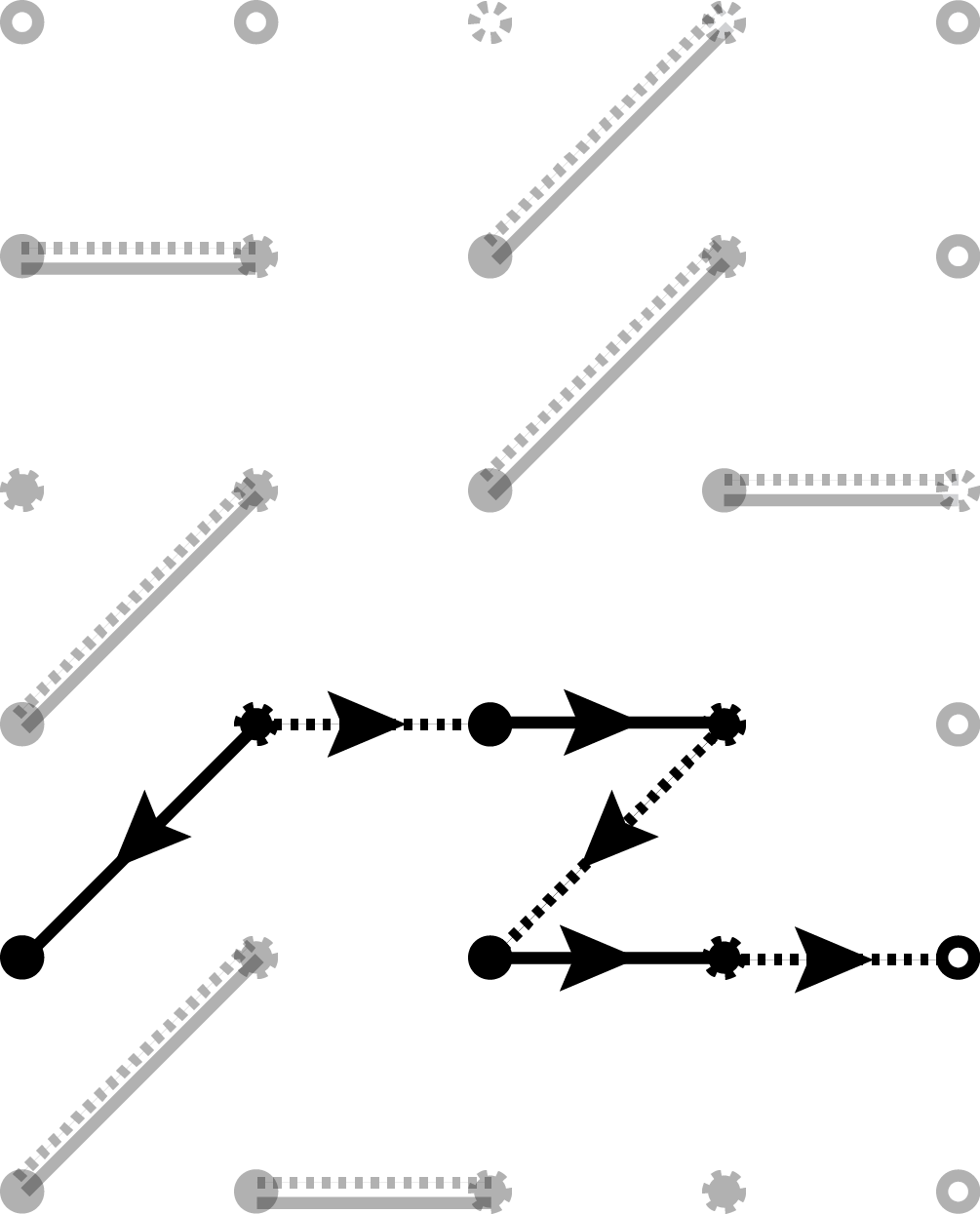}&\hspace{.33in} \includegraphics[scale=.35]{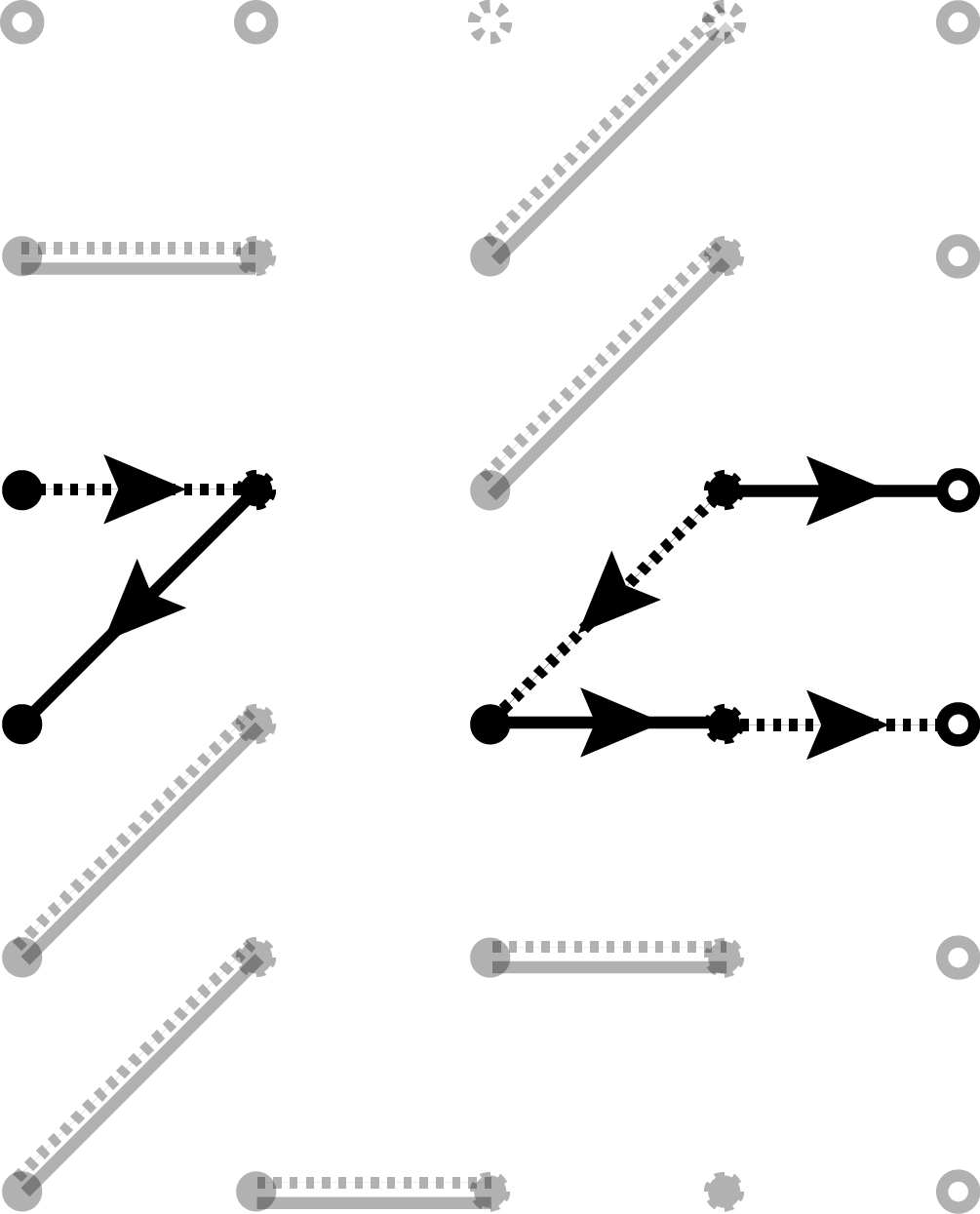}\\ ${}$ \\
(a) $\sg\cup\sg'$ &\hspace{.2in} (b) $\sg\cup\sg'$ &\hspace{.2in} (c) $\sg\cup\sg'$
\end{tabular}
\caption{Examples of different types of the superpositions $\sg\cup\sg'$ on a $4\times 5$ lattice.}
\label{F1}
\end{figure}

\begin{figure}[h]
\begin{tabular}{c c}
\includegraphics[scale=.35]{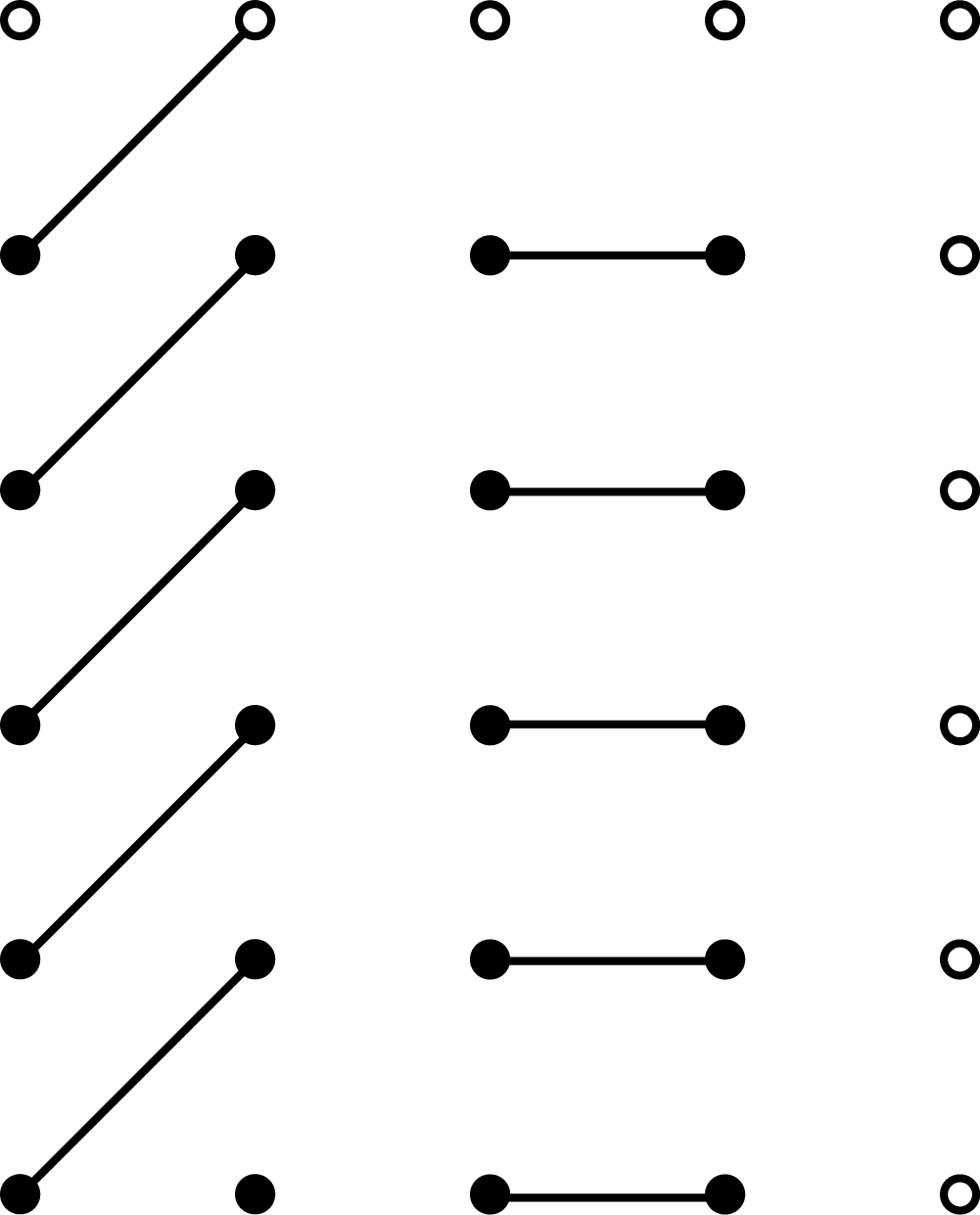}&\hspace{.33in} \includegraphics[scale=.35]{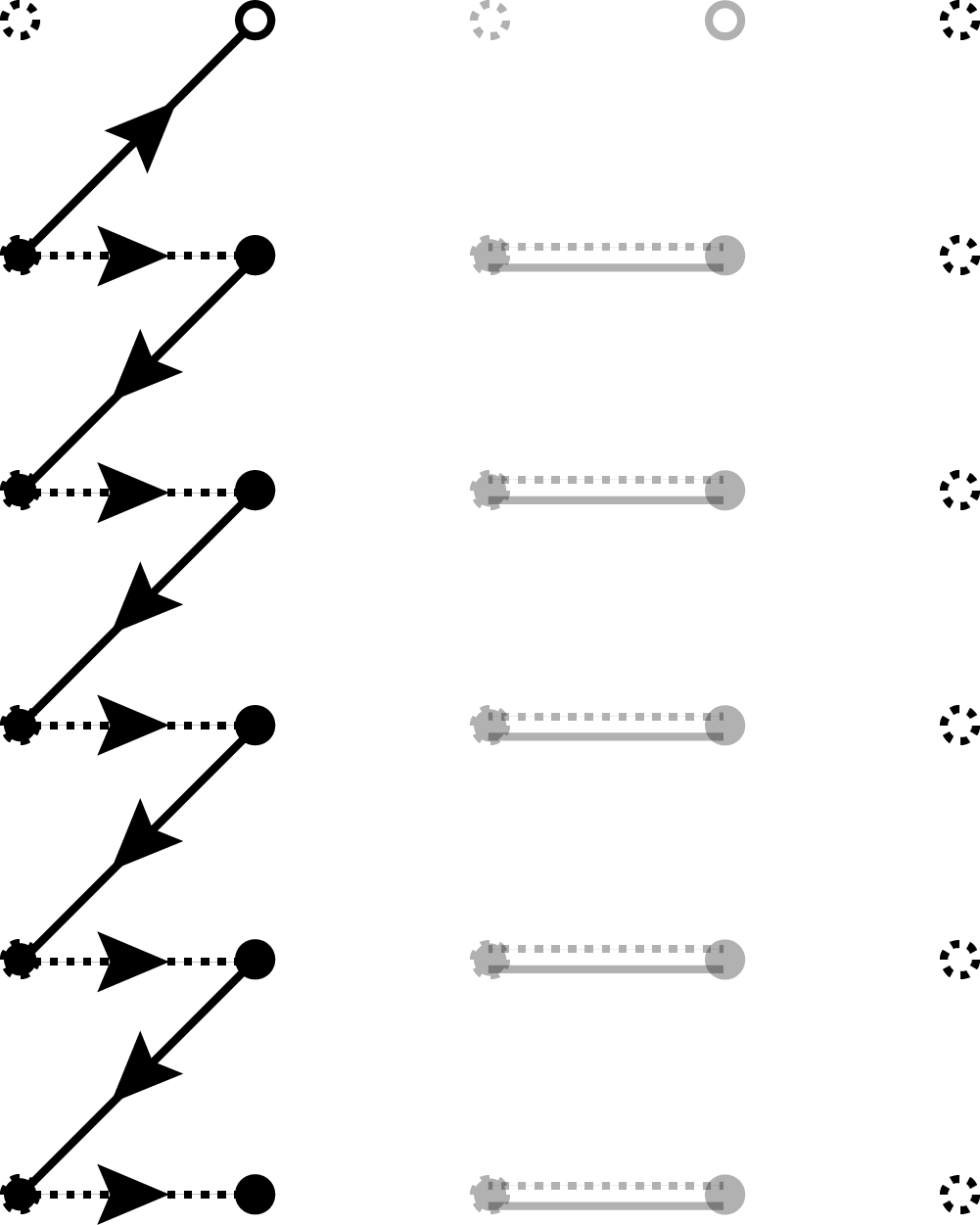}\\  ${}$ \\
(a) $\sg_{\textrm{stack}}\in\Sigma_0$ &\hspace{.2in} (b) $\sg_{\rm stack}\cup\sg_{\textrm{st}}$
\end{tabular}
\caption{The stack configuration and its superposition with the standard configuration  on a $4\times 5$ lattice.}
\label{FPfA2}
\end{figure}

\section{Negativity of $\Pf A_1$} \label{A1}

If $n\equiv 1\pmod 2$, then we have by Theorem \ref{Conj} that $\Pf A_1=-\Pf A_2$. Since  $\Pf A_2>0$, we obtain that $\Pf A_1<0$. Let us turn to the case when $n$ is even. 
We begin with the following case:

\begin{lem} Let $z_h,z_v,z_d>0$. Then if either $n\equiv 2 \pmod 4$ or $m\equiv 2 \pmod 4$, then $\Pf A_1<0$. 
\end{lem}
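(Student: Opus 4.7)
The plan is to mimic the argument used earlier for the positivity of $\Pf A_3$, $\Pf A_4$ and of $\Pf A_2$ in the even‐$n$ or $m\equiv 2\pmod 4$ cases: pick a degenerate slice of the weight space on which some $\det A_j$ vanishes (forcing a linear relation among the $Z^{rs}$), combine that relation with a Kasteleyn identity from \eqref{a3}, and read off the sign of $\Pf A_1$ from the non‐negativity of the $Z^{rs}$. Since $\det A_1$ is strictly positive on the full region $\{z_h,z_v,z_d>0\}$ by \eqref{a7}, $\Pf A_1$ cannot change sign there, so it is enough to fix the sign at one convenient point.

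First I would handle $n\equiv 2\pmod 4$ by sending $z_v\to 0$ with $z_h,z_d>0$. By Proposition \ref{detA}(2) we have $\det A_2=0$ and $\det A_1>0$. The vanishing of $\Pf A_2$ together with the Kasteleyn identity
\begin{equation*}
\Pf A_2=Z^{00}-Z^{10}+Z^{01}+Z^{11}=0
\end{equation*}
gives $Z^{10}=Z^{00}+Z^{01}+Z^{11}$, which I then substitute into
\begin{equation*}
\Pf A_1=Z^{00}-Z^{10}-Z^{01}-Z^{11}=-2\bigl(Z^{01}+Z^{11}\bigr)\le 0.
\end{equation*}
Because $\det A_1>0$ forces $\Pf A_1\ne 0$, strict inequality $\Pf A_1<0$ follows at the boundary slice $z_v=0$, and by continuity and the sign‐preservation noted above, $\Pf A_1<0$ in the whole open region $z_h,z_v,z_d>0$.

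The case $m\equiv 2\pmod 4$ is completely symmetric: take $z_h\to 0$ with $z_v,z_d>0$. Proposition \ref{detA}(3) yields $\det A_3=0$ and $\det A_1>0$ in this regime. The identity $\Pf A_3=Z^{00}+Z^{10}-Z^{01}+Z^{11}=0$ gives $Z^{01}=Z^{00}+Z^{10}+Z^{11}$, and substituting into the expression for $\Pf A_1$ yields
\begin{equation*}
\Pf A_1=-2\bigl(Z^{10}+Z^{11}\bigr)\le 0,
\end{equation*}
which is strict since $\Pf A_1\ne 0$. Again, continuity propagates the strict negativity to all $z_h,z_v,z_d>0$.

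There is no serious obstacle here: both cases reduce to a one‑line linear algebra manipulation combining Proposition \ref{detA} with \eqref{a3}, exactly parallel to the proofs of $\Pf A_3>0$, $\Pf A_4>0$ and the easy cases of $\Pf A_2>0$. The genuinely difficult residual case, $m\equiv 0\pmod 4$ and $n\equiv 0\pmod 4$, is the one where every $\det A_i>0$ on each coordinate slice, so that none of these cheap cancellations apply; that case has to be treated by the small‐weight asymptotic/geometric configuration analysis alluded to in the introduction, and is deferred to the remainder of Section \ref{A1}.
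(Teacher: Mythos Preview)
Your proof is correct and follows essentially the same approach as the paper's own proof: in both cases you pick the same degenerate boundary slice ($z_v=0$ for $n\equiv 2\pmod 4$, $z_h=0$ for $m\equiv 2\pmod 4$), use Proposition~\ref{detA} to force $\Pf A_2=0$ (respectively $\Pf A_3=0$) while keeping $\det A_1>0$, and combine the resulting linear relation among the $Z^{rs}$ with the Kasteleyn identity for $\Pf A_1$ to obtain $\Pf A_1=-2(Z^{01}+Z^{11})$ (respectively $-2(Z^{10}+Z^{11})$), concluding by continuity. The only difference is cosmetic ordering of the algebra.
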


\begin{proof} First assume $n\equiv 2\pmod 4$ and consider the case $z_h>0$, $z_v=0$, and $z_d>0$. 
By Proposition \ref{detA} (2),
in this case $\det A_1>0$, $\det A_2=0$, hence
\begin{equation}\label{a15a}
Z^{00}-Z^{10}=-Z^{01}-Z^{11} \implies  \Pf A_1= -2Z^{01}-2Z^{11}<0.
\end{equation}
By continuity, $\Pf A_1<0$ for the chosen $z_h>0$, $z_d>0$, and small $z_v>0$.
This proves that $\Pf A_1<0$ in the whole region
$z_h,z_v,z_d>0$.

Now assume $m\equiv 2\pmod 4$ and consider the case $z_h=0$, $z_v>0$, and $z_d>0$. 
By Proposition \ref{detA} (3),
in this case $\det A_1>0$, $\det A_3=0$, hence
\begin{equation}\label{a15b}
Z^{00}-Z^{01}=-Z^{10}-Z^{11} \implies  \Pf A_1= -2Z^{10}-2Z^{11}<0.
\end{equation}
By continuity, $\Pf A_1<0$ for the chosen $z_v>0$, $z_d>0$, and small $z_h>0$.
This proves that $\Pf A_1<0$ in the whole region
$z_h,z_v,z_d>0$.
\end{proof}

The final case $m\equiv 0\pmod 4$, $n\equiv 0\pmod 4$ is the most difficult yet. We consider the asymptotic behavior of $\Pf A_1$ for $z_h=1, 0<z_v\le z_d^2, z_d\to 0$, and we prove the following theorem upon establishing four lemmas:
\begin{theo}\label{theo44.1}
Let $m\equiv 0\pmod 4$, $n\equiv 0\pmod 4$, $z_h=1$, and $0<z_v\le z_d^2$. Then as $z_d\to 0$,\[\Pf A_1=-n^2\left(\frac{m}{2}\right)^nz_v^2z_d^{n-2}\left(1+\mathcal O \left(z_d\right)\right).\]
\end{theo}

\textit{Remark.} This will imply that $\Pf A_1<0$ for $z_h=1, 0<z_v\le z_d^2$, and sufficiently small $z_d$, and hence $\Pf A_1<0$ for all $z_h,z_v,z_d>0$.

First note that for the same reasons as in the beginning discussion of the proof of Lemma \ref{lem1}, configurations with only horizontal dimers on some line $y=k$ cancel each other in the Pfaffian, so we next look at $\Sigma_0$ defined as in \eqref{Sg0}. However, since $n\equiv 0\pmod 4$, configurations in $\Sigma_0$ now cancel each other completely. Indeed, we can decompose $\Sigma_0$ as
\begin{equation}\label{canceldecomp}
\Sigma_0=\Sigma_0^1\sqcup \Sigma_0^2,
\end{equation}
where $\Sigma_0^1$ and $\Sigma_0^2$ can be further decomposed as
\begin{equation}\label{cancel}
\begin{aligned}
\Sigma_0^1&=\bigsqcup_{i=0}^1\Sigma_0^{1,i},\quad \Sigma_0^{1,i}=\{\sigma\in\Sigma_0\, |\,N_{\textrm{diag}}(\sigma;k,k+1)=1,\,j_k\equiv i\pmod 2,\, k\in \mathbb Z_n\},\\
\Sigma_0^2&=\bigsqcup_{i=0}^1 \Sigma_0^{2,i},\quad \Sigma_0^{2,i}=\big\{\sigma\in\Sigma_0\,|\,N_{\textrm{diag}}(\sigma;2k+i,2k+i+1)=2,\, k\in\mathbb Z_{\frac{n}{2}}\big\}.
\end{aligned}
\end{equation}
Here $N_{\textrm{diag}}(\sigma;k,k+1)$ is the number of diagonal dimers in $\sigma$ connecting horizontal line $y=k$ to horizontal line $y=k+1$, and the set of diagonal dimers for every configuration $\sg\in\Sigma_0^1$ is $\{(j_k,k),(j_k+1,k+1)\}$, $k\in\Z_n$. As before, equation \eqref{pos4} holds since all other vertices must be covered by horizontal dimers.
\par On the other hand, configurations in $\Sigma_0^{2,i}$ have the following structure: for every $k\in\Z_{\frac{n}{2}}$ and for $i=0$ or $i=1$ there
is one diagonal dimer which connects a vertex $x_1=(j_1,2k+i)$ to $y_1=(j_1+1,2k+i+1)$ and another diagonal 
dimer which connects a vertex $x_2=(j_2,2k+i)$ to $y_2=(j_2+1,2k+i+1)$. Note that $j_2-j_1\equiv 1\pmod 2$, since all other vertices are
covered by horizontal dimers. We have the following lemma:

\begin{lem}\label{lem44.2}
$|\Sigma_0^1|=|\Sigma_0^2|=2\left(\frac{m}{2}\right)^n$ and for every $\sigma\in\Sigma_0$,\[\sgn(\sigma)=\begin{cases}-1,&\textrm{if }\sigma\in\Sigma_0^1,\\ +1,&\textrm{if }\sigma\in\Sigma_0^2.\end{cases}\]
Hence, $\displaystyle\sum_{\sigma\in\Sigma_0}\sgn(\sg)w(\sg)=0.$
\end{lem}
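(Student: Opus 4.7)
The plan is to establish the cardinality and sign claims separately and to observe that the stated cancellation then follows automatically. Every $\sg\in\Sigma_0$ has exactly $n$ diagonal dimers (one per pair of adjacent lines in the $\Sigma_0^1$ case and two per strip over $n/2$ strips in the $\Sigma_0^2$ case), no vertical dimers, and horizontal dimers everywhere else; since $z_h=1$ this gives $w(\sg)=z_d^n$ uniformly on $\Sigma_0$. Hence $\sum_{\sg\in\Sigma_0}\sgn(\sg)w(\sg)=z_d^n(|\Sigma_0^2|-|\Sigma_0^1|)$, which vanishes once the two cardinalities agree and the signs are as claimed.

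For the cardinalities, in $\Sigma_0^{1,i}$ the position $j_k$ of the single diagonal dimer between lines $y=k$ and $y=k+1$ is constrained only by the parity $j_k\equiv i\pmod 2$ (the non-overlap condition $j_k\ne j_{k-1}+1$ is automatic from this parity), so there are $m/2$ independent choices per line and $(m/2)^n$ configurations per class, giving $|\Sigma_0^1|=2(m/2)^n$. In $\Sigma_0^{2,i}$ the $n$ lines split into $n/2$ width-$2$ strips, each of which requires an unordered pair of starting $x$-coordinates $(j_1,j_2)$ with $j_2-j_1\equiv 1\pmod 2$ so that the two horizontal arcs between the diagonals have even length and admit a horizontal tiling; the count is $\tfrac12\cdot m\cdot\tfrac m2=m^2/4$ pairs per strip, for a total of $(m^2/4)^{n/2}=(m/2)^n$ per class and $|\Sigma_0^2|=2(m/2)^n$.

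For the signs I would mimic Step 1 of the proof of Lemma \ref{lem2} under the orientation $O_1$. Using \eqref{Aj3b} and Lemma \ref{Co}, a direct check shows that the elementary shift of a diagonal dimer $\{(j,k),(j+1,k+1)\}$ to $\{(j+2,k),(j+3,k+1)\}$ preserves $\sgn$ in all three cases of the occupation of the intermediate vertices: the length-$6$ and length-$(m+2)$ nontrivial contours each have clockwise-edge count of the correct parity under $O_1$ (in fact here $O_1$ is simpler than $O_2$ because $\ep_1$ has no boundary modification). Elementary moves preserve each $\Sigma_0^{1,i}$ and each $\Sigma_0^{2,i}$, so every configuration in $\Sigma_0^{1,i}$ reduces to a single-column stack representative $\sg_{\rm stack}^{(i)}$ (all diagonals at $x=i$), and every configuration in $\Sigma_0^{2,i}$ reduces to a double-column stack representative $\sg_{\rm dstack}^{(i)}$ whose two diagonal columns sit at $x=i$ and $x=i+1$. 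The sign of $\sg_{\rm stack}^{(0)}$ is then obtained from the superposition $\sg_{\rm stack}^{(0)}\cup\sg_{\rm st}$: trivial contours together with one zigzag contour of length $2n$ in the columns $x\in\{0,1\}$, all $2n$ of whose arrows point against the traversal (the $n$ diagonals carry $\ep_1=-1$ at $j=0$, and the $n$ reversed horizontals carry $\ep_1=+1$ but are traversed leftward), giving $\sgn(\sg_{\rm stack}^{(0)};O_1)=(-1)^{2n+1}=-1$. The case $i=1$ is reduced to $i=0$ via the shifted-standard-configuration trick from Step 2 of Lemma \ref{lem2} (replacing $\sg_{\rm st}$ by $\sg_{\rm st}+\mathsf e_1$), with the evenness of $n$ guaranteeing that the intermediate shift factor is $+1$.

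The main obstacle is the sign of $\sg_{\rm dstack}^{(i)}$, whose superposition with $\sg_{\rm st}$ is richer than a pair of neat zigzags: in each strip, the two adjacent diagonal columns combine with the off-by-one horizontal pairing on the odd line of the strip to produce a single long contour of length $m+2$ that winds once around the torus horizontally. A direct application of \eqref{Aj3b} shows that every arrow along such a contour is aligned with its traversal (the two diagonals at $j=0$ and $j=1$ contribute $\ep_1=-1$ and $\ep_1=+1$ consistent with the local traversal directions, and all horizontal arrows point right under $O_1$), so by Lemma \ref{Co} each long contour carries sign $(-1)^{0+1}=-1$. The product over the $n/2$ strips gives $\sgn(\sg_{\rm dstack}^{(0)};O_1)=(-1)^{n/2}$, which equals $+1$ under the hypothesis $n\equiv 0\pmod 4$; the same conclusion for $i=1$ follows by the shifted-standard trick. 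Combining these canonical signs with the elementary-move reduction yields $\sgn(\sg)=-1$ on $\Sigma_0^1$ and $\sgn(\sg)=+1$ on $\Sigma_0^2$, and the vanishing of the weighted sum follows immediately from the equal cardinalities.
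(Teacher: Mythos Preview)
Your argument is correct and follows essentially the same route as the paper: identical counts for $|\Sigma_0^1|$ and $|\Sigma_0^2|$, reduction by sign-preserving elementary shifts to canonical representatives (the single-column stack for $\Sigma_0^1$ and the adjacent-pair configuration---the paper's $\sg_{0,0}$---for $\Sigma_0^2$), and reading the sign from the superposition with $\sg_{\rm st}$. The only cosmetic differences are that the paper obtains $\sgn(\sg_{\rm stack})=-1$ by comparing $O_1$ with $O_2$ (a single boundary arrow flips) rather than computing directly, and for $\Sigma_0^2$ it simply observes that the $n/2$ identical length-$(m+2)$ contours come in an even number, without evaluating each one as you do.

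One small slip worth noting: in $\Sigma_0^{2,i}$ the index $i$ records the \emph{vertical} parity of the strip (diagonals between lines $y=2k+i$ and $y=2k+i+1$), not a horizontal offset, so your canonical $\sg_{\rm dstack}^{(1)}$ should differ from $\sg_{\rm dstack}^{(0)}$ by a vertical translate, and the ``shifted-standard trick'' (which handles horizontal shifts by $\mathsf e_1$) is not the right reduction. The fix is immediate: since $\ep_1$ in \eqref{Aj3b} has no $k$-dependence, the $n/2$ contours in $\sg_{\rm dstack}^{(1)}\cup\sg_{\rm st}$ are vertical translates of those for $i=0$ and carry the same sign, so the conclusion stands.
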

\begin{proof}
The counting argument for $|\Sigma_0^1|=2\left(\frac{m}{2}\right)^n$ is the same as in the end of the proof of Lemma \ref{lem1}. Similar to the proof of Step 1 in Lemma \ref{lem2}, we have that 

\begin{equation}\label{neg5a1}
\sgn(\sg)=\sgn(\sg_{\rm stack})\;\,\forall\sg\in\Sigma_0^{1,0},
\end{equation}

\begin{equation}\label{neg5b}
\sgn(\sg)=\sgn(\sg_{\rm stack}+\mathsf e_1)\;\,\forall\sg\in\Sigma_0^{1,1}.
\end{equation}
Next we prove that
\begin{equation}\label{neg5c1}
\sgn(\sg_{\rm stack})=\sgn(\sg_{\rm stack}+\mathsf e_1)=-1.
\end{equation}
\noindent Indeed, the superposition $\sg_{\rm stack}\cup\sg_{\rm st}$ consists of trivial contours and one nontrivial contour which is a zigzag path between the vertical lines $x=0$ and $x=1$, similar to Fig.\ \ref{FPfA2} (b) except that the diagonal dimer $\{(0,n-1),(1,0)\}$ has the opposite orientation, hence $\sgn(\sg_{\rm stack})=-1$.

Furthermore, the signs of the superpositions $(\sigma_{\rm stack}+\mathsf e_1)\cup(\sigma_ {\rm st}+\mathsf e_1)$ and $\sigma_{\rm st}\cup(\sigma_{\rm st}+\mathsf e_1)$ are $(-1)$ and $(+1)$, respectively. Indeed, the superposition $(\sigma_{\rm stack}+\mathsf e_1)\cup(\sigma_ {\rm st}+\mathsf e_1)$  consists of trivial contours and a zigzag contour with $n\equiv 0\pmod 2$ arrows in the direction of movement from top to bottom, and so $\sgn((\sigma_{\rm stack}+\mathsf e_1)\cup(\sigma_{\rm st}+\mathsf e_1))=-1.$ In addition, the superposition $\sigma_{\rm st}\cup(\sigma_{\rm st}+\mathsf e_1)$  consists of 
$n$ horizontal contours of the length $m$. Since $m$ is even, the sign of 
each contour is $(-1)$, and
because $n$ is even, the sign of the superposition $\sigma_{\rm st}\cup(\sigma_{\rm st}+\mathsf e_1)$  is $(+1)$.
Since the sign of the superposition 
$(\sigma_{\rm stack}+\mathsf e_1)\cup(\sigma_{\rm st}+\mathsf e_1)$ 
is $(-1)$, and the sign of the superposition $\sigma_{\rm st}\cup(\sigma_{\rm st}+\mathsf e_1)$ is $(+1),$ we have that
$\sgn((\sigma_{\rm stack}+\mathsf e_1)\cup \sigma_{\rm st})=-1.$
Hence, $\sgn(\sigma_{\rm stack}+\mathsf e_1)=-1.$

\par To show that $|\Sigma_0^2|=2\left(\frac{m}{2}\right)^n$, we count as follows: choose either $i=0$ or $i=1$. Now for each $k\in\mathbb Z_{\frac{n}{2}}$, there are $m$ choices for the first diagonal dimer and $\frac{m}{2}$ choices for the second, but the diagonal dimers are equivalent, so we divide this resulting number by 2.
\par Fix $i=0$ and let $\sigma\in\Sigma_0^{2,0}$ be arbitrary. Consider, again, the elementary move described in Step 1 of the proof of Lemma \ref{lem2}, that is, the change of $\sigma$ to $\sigma'$ where a diagonal dimer $\{(j,2k),(j+1,2k+1)\}$ is shifted to $\{(j+2,2k),(j+3,2k+1)\}$ for some $k\in\mathbb Z_{\frac{n}{2}}$. Assume that in $\sigma$ the intermediate vertices $(j+1,2k)$ and $(j+2,2k+1)$ are not occupied by a diagonal dimer. Then the superposition $\sigma\cup\sigma'$ consists of trivial contours and one nontrivial contour of the length 6 comparable to that in Fig. \ \ref{F1} (a), which is positive, hence $\sgn(\sigma)=\sgn(\sigma')$. If the intermediate vertices are occupied by a diagonal dimer, simply apply the clockwise sequence of elementary moves $\frac{m}{2}-1$ times as before. Hence, for every $\sg,\sg'\in\Sigma_0^{2,0}$, $\sgn(\sigma)=\sgn(\sigma')$. Now fix a configuration $\sg_{0,0}\in\Sigma_0^{2,0}$ with diagonal dimer pairs $\{(0,2k),(1,2k+1),(1,2k),(2,2k+1)\}_{k\in\mathbb{Z}_{\frac{n}{2}}}$ and all other dimers horizontal. Then $\sigma_{0,0}\cup\sigma_{\textrm{st}}$ consists of trivial contours and $\frac{n}{2}\equiv 0\pmod 2$ identical nontrivial contours of length $m+2$ as shown in Fig.\ \ref{F2}, hence $\sgn(\sigma)=\sgn(\sigma_{\textrm{st}})$. This shows that for any $\sigma\in\Sigma_0^{2,0}$, $\sgn(\sigma)=+1$. The proof in the case of $i=1$ is entirely analogous. Since $\Sigma_0^1$ and $\Sigma_0^2$ have the same cardinality but have configurations with opposite signs, they cancel each other and the claim follows.
\begin{figure}
\includegraphics[scale=.5]{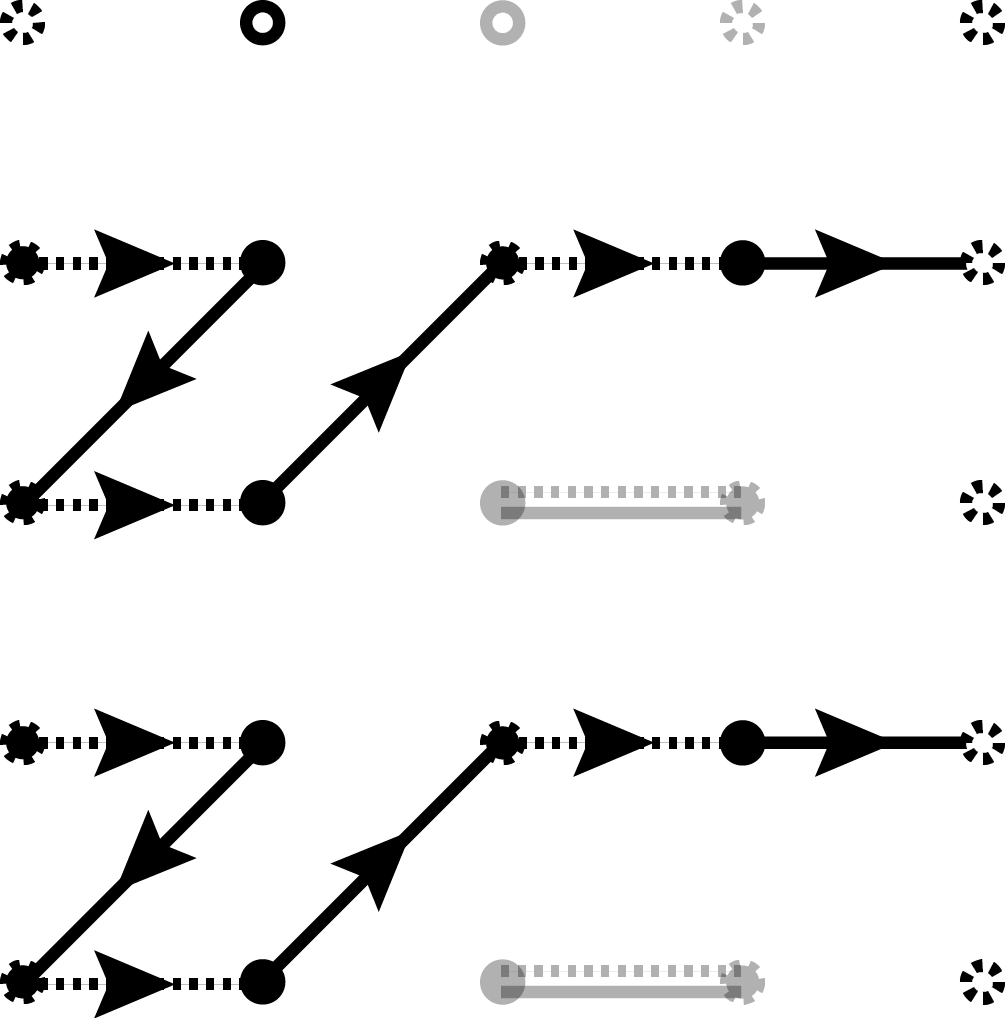}
\caption{$\sg_{0,0}\cup\sg_{\textrm{st}}$}
\label{F2}
\end{figure}
\end{proof}
Let us consider then the set of configurations 
\[\Sigma_1=\{\sg\in\Sigma_{m,n}\,|\, N_{\textrm{d,v}}(\sg,k)=2, \, k\in\Z_n;\,N_{\textrm{diag}}(\sg)=n-1, N_{\textrm{vert}}(\sg)=1\},\]
where $N_{\textrm{d,v}}(\sg,k)$ is the total number of vertices covered by diagonal and vertical dimers in $\sg$ on the line $y=k$, and $N_{\textrm{vert}}(\sg)$ is the number of vertical dimers in $\sg$. However, by the following lemma this set is empty:
\begin{lem}\label{s1empty} 
$\Sigma_1=\emptyset.$
\end{lem}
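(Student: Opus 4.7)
The plan is to derive a contradiction directly from the three defining constraints of $\Sigma_1$. For $k\in\Z_n$, let
\[
a_k := N_{\textrm{diag}}(\sg,k,k+1)+N_{\textrm{vert}}(\sg,k,k+1)
\]
denote the number of non-horizontal dimers crossing the strip between rows $y=k$ and $y=k+1$. Every non-horizontal dimer endpoint on row $y=k$ lies in one of the two adjacent strips, so
\[
a_{k-1}+a_k=N_{\textrm{d,v}}(\sg,k)=2,\qquad \forall\, k\in\Z_n.
\]
Since the $a_k$ are nonnegative integers, the sequence $(a_k)$ is either identically $1$ or alternates between $0$ and $2$ (both compatible with the periodicity $a_n=a_0$ because $n$ is even).

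The second ingredient is a row-tileability constraint. On any row with exactly two non-horizontal endpoints at columns $c_1,c_2$, the remaining $m-2$ vertices split into two circular arcs that must be tiled by horizontal dimers alone, which forces each arc length to be even; since $m$ is even, this makes $c_2-c_1$ odd.

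In the alternating case (say $a_{2k}=2$, $a_{2k+1}=0$; the other phase is identical up to a vertical shift), the two non-horizontal dimers in each active strip are of the form $(c_i,2k)\to(c_i+\delta_i,2k+1)$ with $\delta_i\in\{0,1\}$ encoding vertical versus diagonal. Applying the row-parity constraint on both rows $2k$ and $2k+1$ gives that $c_2-c_1$ and $(c_2+\delta_2)-(c_1+\delta_1)$ are both odd, forcing $\delta_1=\delta_2$. Hence in each of the $n/2$ active strips the two non-horizontal dimers are of the same type, so $N_{\textrm{vert}}(\sg)$ is even, contradicting $N_{\textrm{vert}}(\sg)=1$. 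In the constant case $a_k=1$, let $u_k$ be the lower-endpoint column and $\delta_k\in\{0,1\}$ the type of the unique non-horizontal dimer in strip $k\leftrightarrow k+1$; the two non-horizontal endpoints on row $k$ sit at columns $u_k$ and $u_{k-1}+\delta_{k-1}$, so the row-parity constraint reads
\[
u_k\equiv u_{k-1}+\delta_{k-1}+1\pmod 2.
\]
Iterating and using $u_n=u_0$ yields $\sum_{k=0}^{n-1}(\delta_k+1)\equiv 0\pmod 2$, i.e.\ $\sum_k\delta_k\equiv n\pmod 2$. But $\sum_k\delta_k=N_{\textrm{diag}}(\sg)=n-1$, which has the opposite parity to $n$, giving a contradiction.

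I expect the main subtlety to be setting up the dictionary between ``horizontal tileability of a row'' and ``parity of the column jump between its two non-horizontal endpoints''; once that parity bookkeeping is in place, both cases close by short counting arguments, and no detailed geometric case-work on the actual diagonal layout is required.
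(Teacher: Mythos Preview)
Your argument is correct. The core idea---parity constraints forced by horizontal tileability of each row---is the same as the paper's, but your treatment is more careful. The paper's proof simply \emph{assumes} the structure in which each strip $y=k\leftrightarrow y=k+1$ contains exactly one non-horizontal dimer (your ``constant'' case $a_k\equiv 1$), fixing the vertical dimer at $\{(0,n-1),(0,0)\}$ and writing the diagonals as $\{(j_k,k),(j_k+1,k+1)\}_{k\in\Z_{n-1}}$; it then runs essentially the same parity chase you do. It does not mention the ``alternating'' case $a_k\in\{0,2\}$, which is a priori consistent with $n\equiv 0\pmod 4$ and with $N_{\rm d,v}(\sigma,k)=2$. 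Your observation that in each active strip the row-parity constraint forces $\delta_1=\delta_2$, so that $N_{\rm vert}(\sigma)$ is even, cleanly disposes of this case and thus closes a small gap in the paper's presentation. In the constant case, your telescoping identity $\sum_k\delta_k\equiv n\pmod 2$ is a tidy repackaging of the paper's three itemized congruences, and is arguably more transparent since it makes the role of $N_{\rm diag}(\sigma)=n-1$ explicit.
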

\begin{proof}
Assume without loss of generality that the vertical dimer is $\{(0,n-1),(0,0)\}$ and let $\{(j_k,k),(j_k+1,k+1)\}_{k\in\mathbb Z_{n-1}}$ denote the set of diagonal dimers. Since we require all other vertices to be covered by horizontal dimers, we have that 
\begin{enumerate}\label{ee}
\item{\makebox[8cm]{$j_k+1-j_{k+1}\equiv 1\pmod 2, \quad\forall k\in\mathbb Z_{n-2}\hfill\Rightarrow$} $j_{k+1}-j_{k}\equiv 0\pmod 2,\quad\forall k\in\mathbb Z_{n-2}$},
\item{\makebox[8cm]{$j_{0}-0\equiv 1\pmod 2\hfill\Rightarrow$} $j_0\equiv 1\pmod 2$},
\item{\makebox[8cm]{$j_{n-2}+1-0\equiv 1\pmod 2\hfill\Rightarrow$} $j_{n-2}\equiv 0\pmod 2$},
\end{enumerate}
where (1) is the requirement that diagonal vertices within a given horizontal line be odd spacing apart, (2) is the requirement that $(0,0)$, the top vertex of the vertical dimer, be odd spacing from $(j_0,0)$, the bottom vertex of the diagonal dimer $\{(j_0,0),(j_0+1,1)\}$, and (3) is the requirement that $(0,n-1)$, the bottom vertex of the vertical dimer, be odd spacing apart from $(j_{n-2}+1,n-1)$, the top vertex of the diagonal dimer $\{(j_{n-2},n-2),(j_{n-2}+1,n-1)\}$. From (1) it follows that \[j_{n-2}-j_{n-3}+j_{n-3}-j_{n-4}+j_{n-4}-\ldots+j_3-j_2+j_2-j_1+j_1-j_0\equiv 0\pmod 2.\] Hence, $j_{n-2}\equiv j_0 \pmod 2$. Combining this and (2) gives us that $j_{n-2}\equiv 1\pmod 2$, which contradicts (3).  
\end{proof}

Since $\Sigma_1$ is empty, let us consider
\[\Sigma_2=\{\sg\in\Sigma_{m,n}\,|\, N_{\textrm{d,v}}(\sg,k)=2,
 \, k\in\Z_n; N_{\textrm{diag}}(\sg)=n-2, N_{\textrm{vert}}(\sg)=2\}.\]
Analogous to the decomposition of $\Sigma_0$ above, let us write 
\begin{equation}\label{s2decomp}
\Sigma_2=\Sigma_2^1\sqcup \Sigma_2^2,
\end{equation}
where 
\begin{equation}
\begin{aligned}
\Sigma_2^1&=\{\sigma\in\Sigma_2\,|\,N_{\textrm{d,v}}(\sg;k,k+1)=1, \,k\in \mathbb Z_n\},\\
\Sigma_2^2&=\bigsqcup_{i=0}^1\Sigma_2^{2,i},\quad \Sigma_2^{2,i}=\{\sigma\in\Sigma_2\,|\,N_{\textrm{d,v}}(\sg;2k+i,2k+i+1)=2,\, k\in \mathbb Z_{\frac{n}{2}}\},
\end{aligned}
\end{equation}
i.e. $\Sigma_2^1$ and $\Sigma_2^2$ consist respectively of configurations whose vertical and diagonal dimers connect all horizontal lines to one another and of configurations whose vertical and diagonal dimers are pairwise placed on horizontal lines $y=2k+i$ and $y=2k+1+i$, $k\in\mathbb Z_{\frac{n}{2}}$ for $i=0$ or $1$. We again require that every pair of vertices of diagonal and/or vertical dimers in a given horizontal line be an odd spacing apart since the remaining dimers must be horizontal. See Fig.\ \ref{F7} (a), (b), and (c) for examples of configurations $\sg\in\Sigma_2^1$, $\sg\in\Sigma_2^{2,0}$, and $\sg\in\Sigma_2^{2,1}$, respectively.

\begin{figure}[h]
\begin{tabular}{c c c}
\includegraphics[scale=.45]{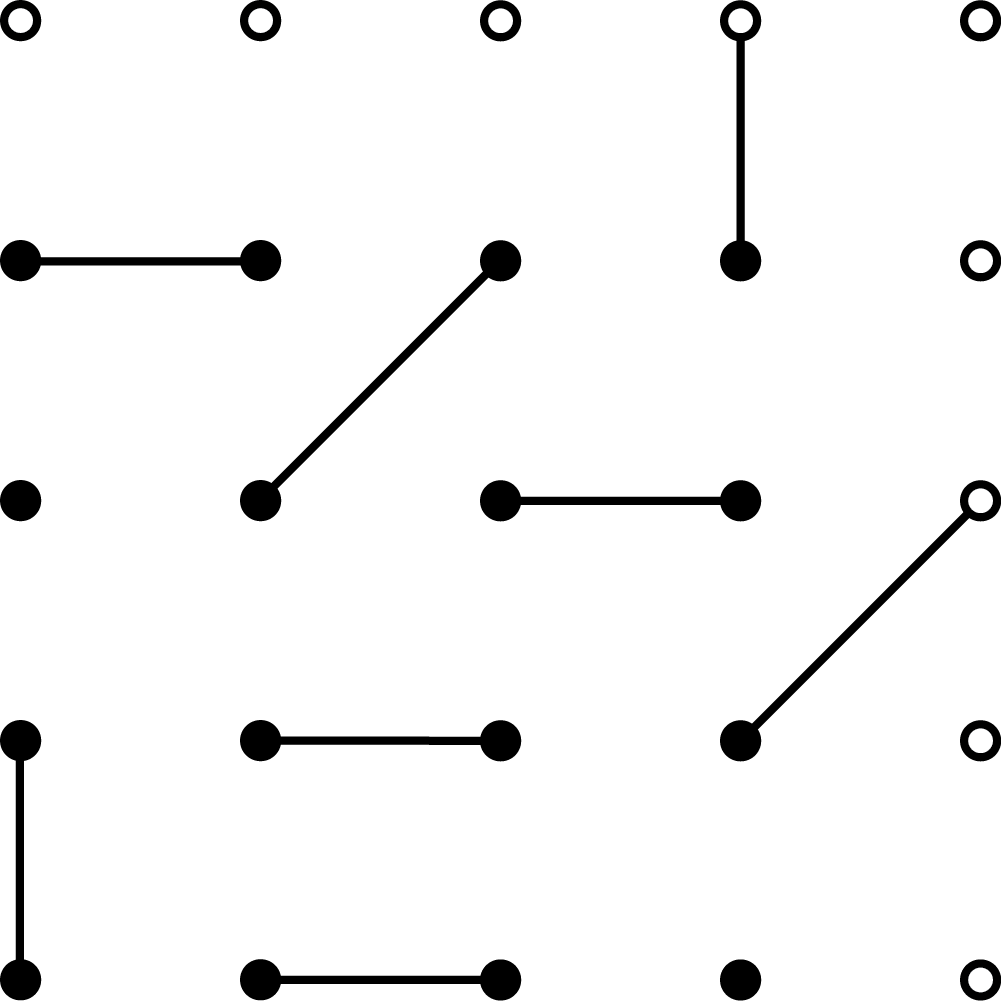}&\hspace{.45in} \includegraphics[scale=.45]{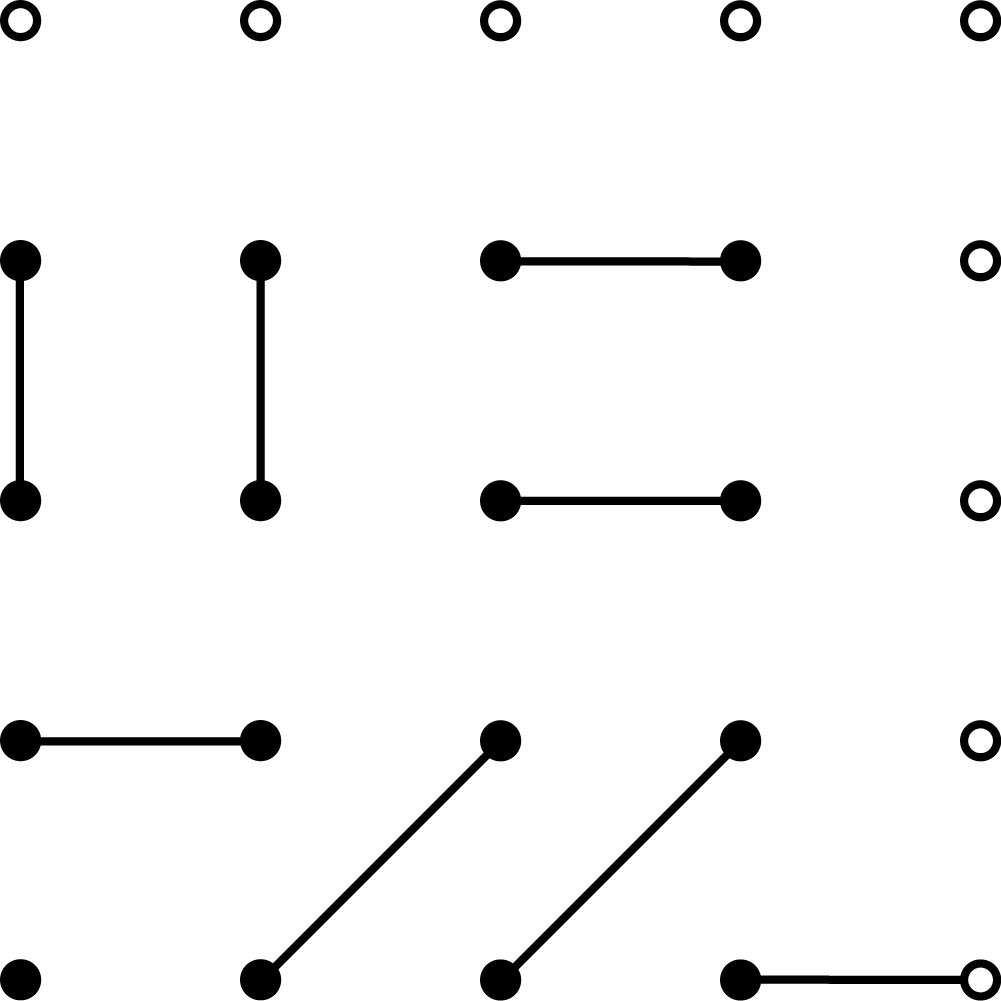}&\hspace{.45in}
\includegraphics[scale=.45]{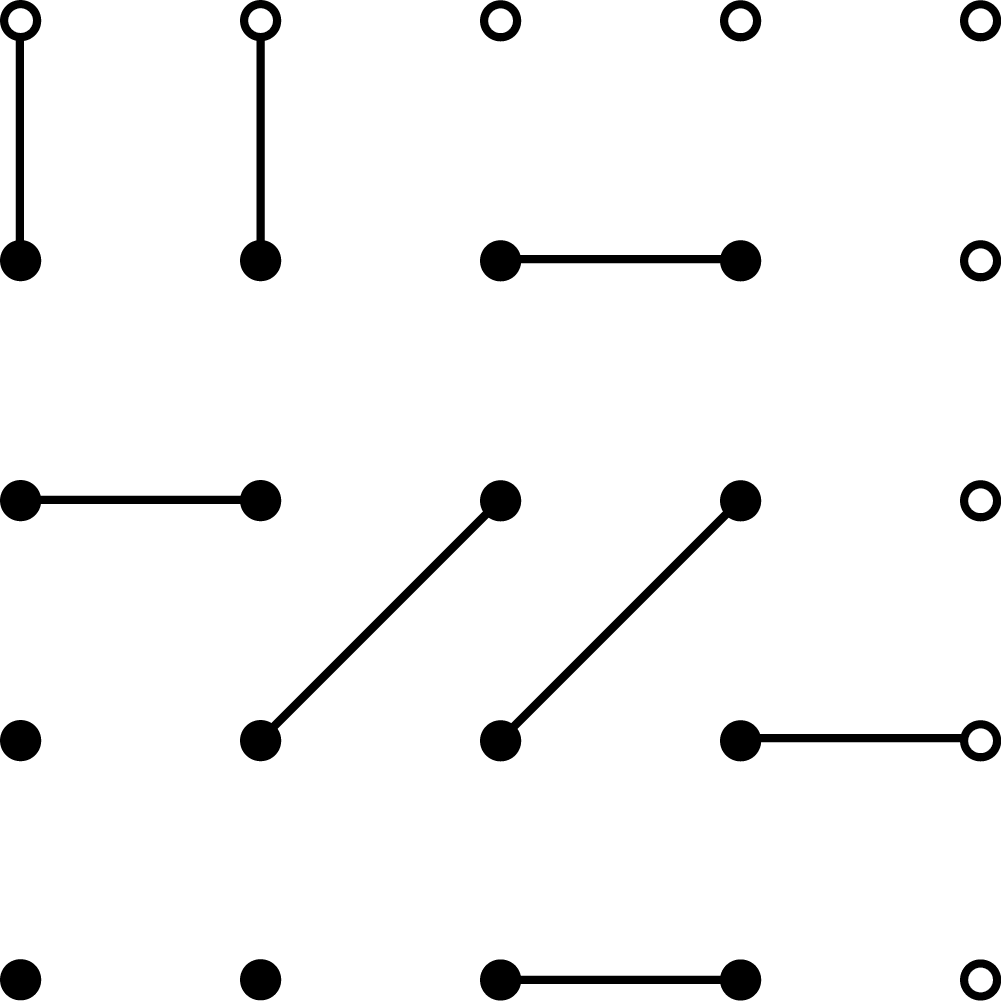}\\
(a) $\sg\in\Sigma_2^1$ &\hspace{.2in} (b) $\sg\in\Sigma_2^{2,0}$ &\hspace{.2in} (c) $\sg\in\Sigma_2^{2,1}$
\end{tabular}
\caption{Examples of configurations from $\Sigma_2$.}
\label{F7}
\end{figure}

We have the following lemma:
\begin{lem}\label{lem44.3}
$|\Sigma_2^1|=n(n-1)\left(\frac{m}{2}\right)^n$, and $\sgn(\sigma)=-1 \text{ for every } \sg\in\Sigma_2^1$.
\end{lem}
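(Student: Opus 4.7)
The plan splits into a combinatorial count of $|\Sigma_2^1|$ and a determination of $\sgn(\sigma)$ via the homology class of $\sigma$, paralleling Lemmas \ref{lem2} and \ref{lem44.2} but exploiting the Kasteleyn identities \eqref{a3} to make the sign part immediate.

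For the counting, I would parameterize each $\sigma\in\Sigma_2^1$ by the set $V\subset\Z_n$ with $|V|=2$ consisting of those $k$ such that the unique dimer of $\sigma$ connecting lines $y=k$ and $y=k+1$ is vertical (the other $n-2$ connectors being diagonal). Writing $a_k$ and $b_k$ for the columns of the top and bottom special vertices on line $y=k$ (the two vertices covered by dimers leaving the line), the tileability of the remaining $m-2$ vertices by horizontal dimers forces $a_k-b_k$ to be odd $\pmod m$, while the connector prescribes $a_{k+1}=b_k$ when $k\in V$ and $a_{k+1}=b_k+1$ when $k\notin V$. The mod-$2$ consistency of this recursion around the torus is automatic because $|V|=2$ is even. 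Counting the $\binom{n}{2}$ choices for $V$, the $m$ choices for $b_0$, and the $m/2$ choices for each of $b_1,\ldots,b_{n-1}$ yields
\[
|\Sigma_2^1|=\binom{n}{2}\cdot m\cdot\left(\frac{m}{2}\right)^{n-1}=n(n-1)\left(\frac{m}{2}\right)^n.
\]

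For the sign, observe that every $\sigma\in\Sigma_2^1$ has exactly one vertical or diagonal dimer crossing the horizontal line $y=k+\tfrac12$ for each $k\in\Z_n$, so $h_k(\sigma)=1$ and hence $h_k\equiv 1\pmod 2$, placing $\sigma$ in a homology class $(r,1)$ relative to $\sigma_{\rm st}$. The Kasteleyn identity $\Pf A_1=Z^{00}-Z^{10}-Z^{01}-Z^{11}$ together with the Pfaffian expansion $\Pf A_1=\sum_\sigma\sgn(\sigma;O_1)w(\sigma)$ over positive weights then forces $\sgn(\sigma;O_1)=-1$ for every $\sigma$ whose homology class is not $(0,0)$; in particular, $\sgn(\sigma)=-1$ for every $\sigma\in\Sigma_2^1$.

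I expect the main subtlety to be the careful verification of the mod-$2$ recursion in the counting step, including the wrap-around consistency in the $y$-direction, which is routine but must be tracked through both the vertical and diagonal transitions and matched with the constraint on each line. The sign determination then follows cleanly from the homology-class characterization built into the Kasteleyn identities \eqref{a3}, and in particular avoids the case-by-case superposition/contour analyses that were needed in the proofs of Lemmas \ref{lem2} and \ref{lem44.2}.
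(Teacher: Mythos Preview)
Your counting argument is correct and, in fact, more carefully justified than the paper's: you explicitly verify the wrap-around parity constraint (that $|V|$ even makes the recursion close up), whereas the paper simply asserts $\binom{n}{2}\cdot m\cdot(m/2)^{n-1}$ without addressing this consistency.

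Your sign argument takes a genuinely different route from the paper. The paper proceeds by explicit superposition analysis: it introduces elementary moves (horizontal shifts by two) and elementary swaps (interchanging a vertical dimer with an adjacent diagonal one), shows each preserves the sign, reduces every $\sigma\in\Sigma_2^1$ to a single stack configuration $\sigma_{\rm stack}$, and then computes $\sgn(\sigma_{\rm stack})=-1$ by examining the contour in $\sigma_{\rm stack}\cup\sigma_{\rm st}$. Your approach bypasses all of this by reading off the homology class ($s=1$, since each horizontal cut meets exactly one connector) and invoking the Kasteleyn identities. This is considerably shorter and would equally simplify the sign parts of Lemmas \ref{lem2}, \ref{lem44.2}, and \ref{lem44.4}.

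There is, however, a logical wrinkle in your justification. The statement ``the Kasteleyn identity together with the Pfaffian expansion over positive weights forces $\sgn(\sigma;O_1)=-1$'' is not literally correct: the polynomial identity $\Pf A_1=Z^{00}-Z^{10}-Z^{01}-Z^{11}$ only equates \emph{coefficients of each monomial} $z_h^a z_v^b z_d^c$, and many distinct configurations share the same monomial, so one cannot separate them by varying the weights. What you actually need is the stronger (but standard) fact that $\sgn(\sigma;O_1)$ is constant on each homology class --- this is precisely how the Kasteleyn identities \eqref{a3} are \emph{proved} in \cite{GalluLoe,Tes,CimResh}, via the superposition sign formula (Lemma \ref{Co} here). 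So the fix is to phrase your step as: ``By the configuration-level content of the Kasteleyn identities (as established in \cite{GalluLoe,Tes,CimResh}), $\sgn(\sigma;O_1)$ depends only on the homology class of $\sigma$ and equals $-1$ for any class other than $(0,0)$.'' With that adjustment your argument is complete; what it buys is brevity, while the paper's hands-on contour computation is more self-contained and illustrates the mechanism behind the sign.
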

\begin{proof}
There are $\binom{n}{2}$ choices of horizontal lines for the vertical dimers, $m$ choices to place the first vertical dimer within its horizontal line and $\frac{m}{2}$ choices to place each of the second vertical dimer and all $n-2$ diagonal dimers, hence $|\Sigma_2^1|=n(n-1)\left(\frac{m}{2}\right)^n$.

Let us prove that $\sgn(\sigma)=-1$  for every $\sg\in\Sigma_2^1$. We have seen in the proof of Lemma \ref{lem2} that the sign of a configuration is invariant with respect to 
{\it elementary moves}, horizontal shifts of diagonal dimers by two units to the right.
The same reasoning applies to vertical dimers as well.

In $\sg\in\Sigma_2^1$ define an \textit{elementary swap} that interchanges a diagonal dimer $\{(j,k),(j+1,k+1)\}$ with a vertical dimer $\{(j+1,k-1),(j+1,k)\}$ below it as follows: increase the $j$-coordinate of every vertex of every dimer along the horizontal line $y=k$, $k\in \Z_n,$ to produce a new configuration $\sigma'$ in which the vertical dimer is now above the diagonal dimer. The superposition $\sigma\cup\sigma'$ consists of trivial contours and one nontrivial contour of length $m+2$, $m-1\equiv 1\pmod 2$ of whose arrows are in the direction of movement from left to right if $j\equiv 1\pmod 2$ and $m+1\equiv 1\pmod 2$ of whose arrows are in the direction of movement from left to right if $j\equiv 0\pmod 2$, the former case exemplified in Fig.\ \ref{F3} with $j=1,\,k=2,\,m=4$. Hence, for every $\sg,\sg'\in\Sigma_2^1,\, \sgn(\sg)=\sgn(\sg')$. 
In other words, for any elementary move (horizontal shifts to the right by two or swaps) $\sg\to\sg'$ we have that formula \eqref{sgn-elem} holds. 

We now show that any configuration $\sg\in\Sigma_2^1$ can be moved to a configuration $\sg_{\rm stack}$ in which vertical dimers are placed at positions $\{(0,0),(0,1)\}$ and $\{(1,1),(1,2)\},$ respectively, whereas diagonal dimers form a stack above these two vertical dimers and between lines $x=0$ and $x=1.$ The remaining vertices are occupied by horizontal dimers. See Fig.\ \ref{F3a} (a).

Fix $\sg\in\Sigma_2^1$ and assume that vertical dimers are at positions $\{(j_1,k_1),(j_1,k_1+1)\}$ and $\{(j_2,k_2),(j_2,k_2+1)\},$ respectively. Note that 
\begin{equation}\label{formula}
j_2-j_1\equiv 1 \pmod 2.
\end{equation}
For the sake of contradiction, let us assume that $j_2-j_1\equiv 0 \pmod 2.$ If these two vertical dimers are not on the same horizontal line $y=k,$ apply sequence of swaps on one of them until $k_1=k_2.$ Note that in order to swap a diagonal dimer with a vertical dimer, $\{(j,k),(j,k+1)\},$ one vertex of a diagonal dimer has to be on the vertical line $x=j$ and the other vertex on a horizontal line $y=k+1.$ If this were not the case, horizontally shift diagonal dimer until this property is satisfied. Therefore, let us assume that after swaps vertical dimers are positioned at $\{(j_1,k),(j_1,k+1)\}$ and $\{(j_2,k),(j_2,k+1)\},$ respectively. Now, $j_2-j_1\equiv 0 \pmod 2$ implies that along horizontal line $y=k$ there is an odd number of vertices between $(j_1,k)$ and $(j_2,k)$, but since all other vertices on $y=k$ have to be covered by horizontal dimers we have a contradiction. Therefore, equation \eqref{formula} holds. Further, note that this means that two vertical dimers  cannot be placed along the same vertical line $x=j.$ 

In other words, we may assume that $j_1\equiv 0 \pmod 2$ and $j_2\equiv 1 \pmod 2.$ If for a vertical dimer $\{(j_1,k_1),(j_1,k_1+1)\}$, $k_1\neq 0,$ then use swaps and horizontal shifts of diagonal dimers where necessary until this dimer is at line $y=0.$ In a similar fashion, apply swaps with necessary shifts of diagonal dimers until vertical dimer $\{(j_2,k_2),(j_2,k_2+1)\}$ is at line $y=1.$ Now, shift these two vertical dimers horizontally until they are positioned at $\{(0,0),(0,1)\}$ and $\{(1,1),(1,2)\}$, respectively. If it so happens that after the above elementary moves they are positioned at $\{(0,1),(0,2)\}$ and $\{(1,0),(1,1)\}$, respectively, apply swaps until they are in the desired position. Similarly to the proof of Lemma \ref{lem2}, by horizontal shifts by two to the right, we can first move the diagonal dimer at horizontal line $y=2$ to position  $\{(0,2), (1,3)\}$ and then inductively each diagonal dimer on each of the lines $y=k,$ $k = 3, 4, \ldots, n-1,$ to position  $\{(0,k), (1,k+1)\}$, forming a stack of diagonal dimers above the vertical dimers. Hence, we have obtained a configuration $\sg_{\rm stack}.$

In the superposition $\sg_{\rm stack}\cup\sg_{\textrm{st}}$, there are trivial contours and one nontrivial contour of length $2n$, $2n-2\equiv 0 \pmod 2$ of whose arrows are in the direction of movement from top to bottom as shown in Fig.\ \ref{F3} (b). Hence, $\sgn(\sg_{\rm stack})=-1$ and the claim follows.
\begin{figure}[h]
\begin{tabular}{c c}
\includegraphics[scale=.5]{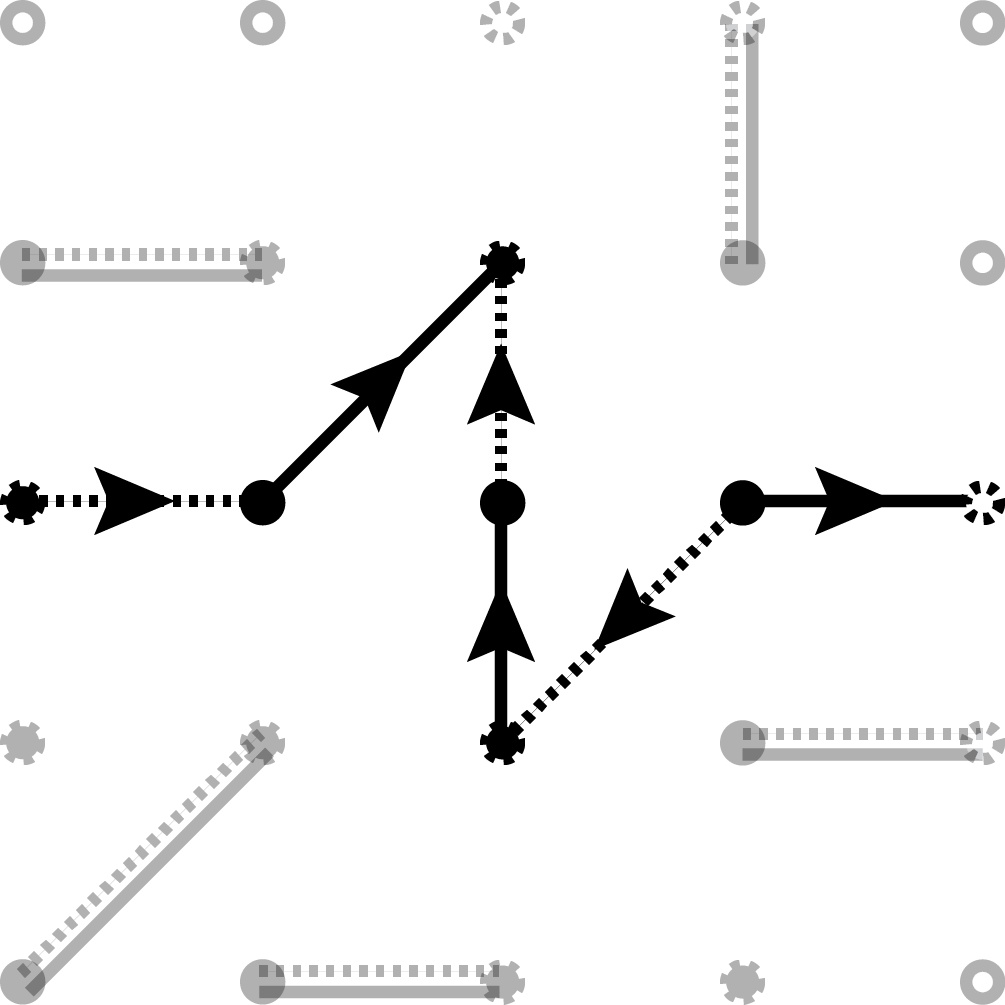} 
\\
$\sg\cup\sg'$ &\hspace{.5in} 
\end{tabular}
\caption{Example of a type of superposition on a $4\times 4$ lattice.}
\label{F3}
\end{figure}

\begin{figure}[h]
\begin{tabular}{c c}
\includegraphics[scale=.5]{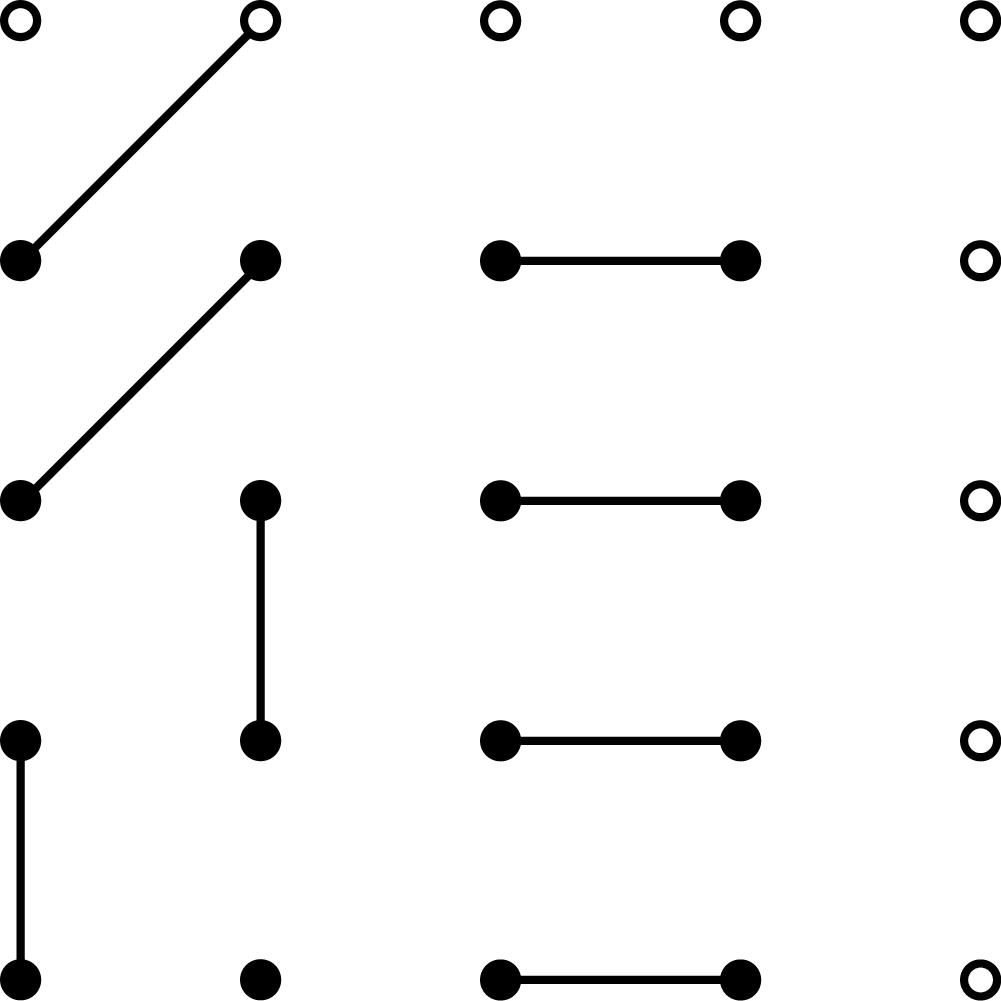}&\hspace{.5in} \includegraphics[scale=.5]{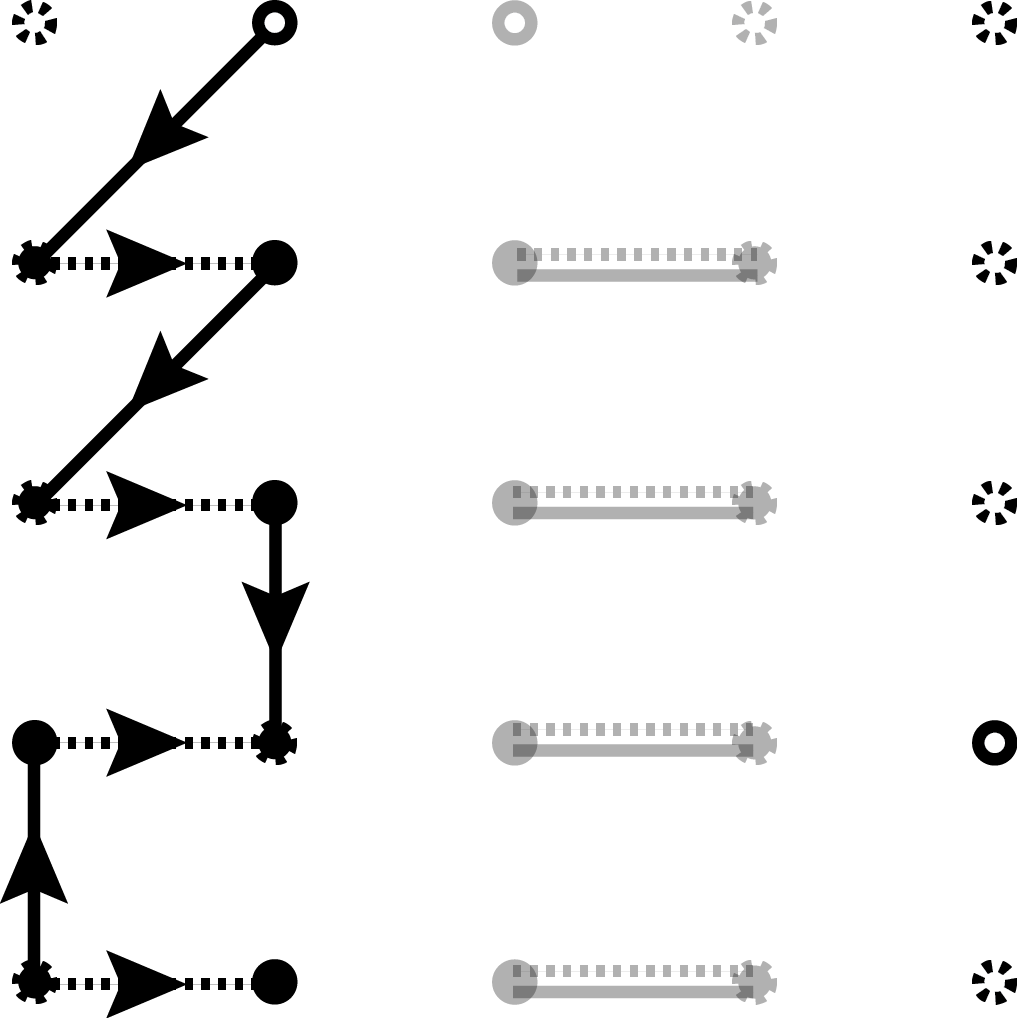}\\
(a) $\sg_{\rm stack}\in\Sigma_2^1$ &\hspace{.5in} (b) $\sg_{\rm stack}\cup\sg_{\textrm{st}}$ 
\end{tabular}
\caption{The stack configuration $\sg_{\rm stack}\in\Sigma_2^1$ and its superposition with the standard configuration $\sg_{\textrm{st}}$.}
\label{F3a}
\end{figure}
\end{proof}

\begin{lem}\label{lem44.4}
$|\Sigma_2^2|=n\left(\frac{m}{2}\right)^n$, and $\sgn(\sigma)=-1\text{ for every }\sg\in\Sigma_2^2$.
\end{lem}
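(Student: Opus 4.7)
The plan is to follow the template of Lemma \ref{lem44.3}: first identify the combinatorial structure of $\Sigma_2^2$, then reduce every $\sg \in \Sigma_2^2$ to a canonical stack by sign-preserving elementary moves, and finally compute the sign of each canonical stack by comparison with an already-analyzed reference configuration.

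The structural observation is that within a single strip one cannot have one vertical and one diagonal dimer among the two d/v dimers. Indeed, a vertical $\{(j_1, 2k+i), (j_1, 2k+i+1)\}$ together with a diagonal $\{(j_2, 2k+i), (j_2+1, 2k+i+1)\}$ would force $j_2 - j_1$ to be odd (so that horizontals can fill $y = 2k+i$) and simultaneously $(j_2+1) - j_1$ to be odd (so that horizontals fill $y = 2k+i+1$), a contradiction. Hence exactly one strip contains the two verticals and the remaining $n/2 - 1$ strips contain two diagonals each. Counting as in Lemma \ref{lem44.2}, each strip admits $m^2/4$ placements of its pair; with $n/2$ choices of which strip carries the verticals and $2$ choices of $i$, this yields $|\Sigma_2^2| = 2 \cdot (n/2) \cdot (m^2/4)^{n/2} = n(m/2)^n$.

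For the sign, I define the canonical stack $\sg_*^{(i,k_0)} \in \Sigma_2^{2,i}$ to have verticals at $\{(0, 2k_0+i),(0, 2k_0+i+1)\}$ and $\{(1, 2k_0+i),(1, 2k_0+i+1)\}$ and canonical diagonal pairs $\{(0, 2k+i),(1, 2k+i+1)\}, \{(1, 2k+i),(2, 2k+i+1)\}$ in every other strip. Horizontal-shift-by-two moves within each strip preserve the sign (as in Step 1 of Lemma \ref{lem2}) and reduce any pair to its canonical position. To absorb the choice of $k_0$, I compare $\sg_*^{(i,0)}$ with $\sg_*^{(i,k_0)}$: their superposition differs only in strips $0$ and $k_0$ and produces two nontrivial contours of length $m+2$ that are vertical translates of each other. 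Since $\ep_1$ in \eqref{Aj3b} is $k$-independent, the two contours share their clockwise arrow count, so their signs are equal and Lemma \ref{Co} gives $\sgn(\sg_*^{(i,k_0)}) = \sgn(\sg_*^{(i,0)})$.

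I next show $\sgn(\sg_*^{(0,0)}) = -1$ by comparing with the all-diagonal stack $\sg_{0,0} \in \Sigma_0^{2,0}$ of Lemma \ref{lem44.2}, which satisfies $\sgn(\sg_{0,0}) = +1$. The two configurations differ only in strip $0$, and $\sg_*^{(0,0)} \cup \sg_{0,0}$ consists of trivial contours together with a single nontrivial contour $\gamma$ of length $m+2$: starting at $(0,1)$, it follows the wrap-around horizontal to $(m-1,1)$, runs leftward along $y = 1$ to $(2, 1)$, and then traverses the four corner edges through $(1,0)$, $(1,1)$, $(0,0)$, and back to $(0,1)$. Reading orientations from \eqref{Aj3b}, the $m-2$ edges on $y=1$ are traversed uniformly against their rightward $O_1$-arrows, while among the four corner edges the two based at $j=0$ agree with the arrow and the two at $j=1$ oppose it, giving $\nu(O_1) = 2$ and hence $\sgn(\gamma) = (-1)^{2+1} = -1$. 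By Lemma \ref{Co}, $\sgn(\sg_*^{(0,0)}) = \sgn(\sg_{0,0}) \cdot \sgn(\gamma) = -1$. The same argument with the analogous $\sg_{0,1} \in \Sigma_0^{2,1}$ (whose sign is $+1$ by the $i=1$ version of Lemma \ref{lem44.2}) produces an identical calculation shifted vertically by one, yielding $\sgn(\sg_*^{(1,0)}) = -1$. Combined with the $k_0$-invariance above, this gives $\sgn(\sg) = -1$ for every $\sg \in \Sigma_2^2$.

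The main obstacle is the arrow tally on $\gamma$ and on its translates: one must verify that the contour sign does not depend on $m$ beyond its even parity, so that the hypothesis $m \equiv 0 \pmod 4$ of Theorem \ref{theo44.1} enters only indirectly through $\sgn(\sg_{0,0}) = +1$ (which in turn uses $n/2 \equiv 0 \pmod 2$). A subsidiary difficulty is verifying that the $k_0$-swap produces exactly two contours rather than one, which requires checking that the wrap-around horizontal at $y = 2k_0 + 1$ does not connect the two local rearrangements into a single long contour; this is ensured by the diagonal pairs in the intermediate strips matching in both configurations.
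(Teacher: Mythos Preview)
Your proof is correct, but it takes a genuinely different route from the paper's. The paper reduces every $\sg\in\Sigma_2^2$ by horizontal shifts to a stack configuration and then compares that stack \emph{directly} with $\sg_{\rm st}$: the resulting superposition has one square contour (from the two verticals) with sign $+1$ and $\frac{n}{2}-1$ contours of length $m+2$ (from the diagonal pairs), each with sign $-1$; since $\frac{n}{2}-1$ is odd, the product is $-1$. Your approach instead compares the canonical stack with the configuration $\sg_{0,0}\in\Sigma_0^{2,0}$ whose sign was already pinned down in Lemma~\ref{lem44.2}. Because the two configurations agree on all but one strip, this yields a \emph{single} nontrivial contour to analyze. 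You then handle the dependence on the vertical strip $k_0$ (and on $i$) by a separate pair-of-contours argument exploiting the $k$-independence of $\ep_1$. Your route buys a shorter contour tally at the cost of an extra reduction step; the paper's route dispatches all choices of $k_0$ and $i$ in one stroke but must identify the signs of two contour types. Your structural observation that a strip cannot carry one vertical and one diagonal is a point the paper leaves implicit in its counting, so your write-up is actually more complete there.
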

\begin{proof}
There are $n$ choices of horizontal lines within which to place the lower vertices of the two vertical dimers, and within this horizontal line there are $m$ choices to place the first vertical dimer and $\frac{m}{2}$ choices to place the second, but the dimers are equivalent so we divide this number by $2$. Subsequently, there are likewise $\left(\frac{m}{2}\right)^2$ choices in placing each of the $\frac{n}{2}-1$ diagonal dimer pairs, hence $|\Sigma_2^2|=n\left(\frac{m}{2}\right)^n$.

For any given configuration $\sg\in\Sigma_2^2,$ by applying  a sequence of 
{\it elementary moves}, i.e. horizontal shifts of diagonal and vertical dimers by two units to the right, we can position all vertical and diagonal dimers between the vertical lines $x=0$ and $x=1$ to obtain a configuration $\sg_{\rm stack}.$ See Fig.\ 
\ref{F4} (a). Since the elementary moves do not change the sign of configuration, 
we obtain that $\sgn(\sg)=\sgn(\sg_{\rm stack}).$ 

The superposition $\sg_{\rm stack}\cup\sg_{\textrm{st}}$ contains trivial contours, one nontrivial positive square-shaped contour, and $\frac{n}{2}-1\equiv 1\pmod 2$ negative contours as shown in Fig.\ \ref{F4} (b). Hence, $\sgn(\sigma_{\rm stack})=-1.$

\begin{figure}[h]
\begin{tabular}{c c}
\includegraphics[scale=.5]{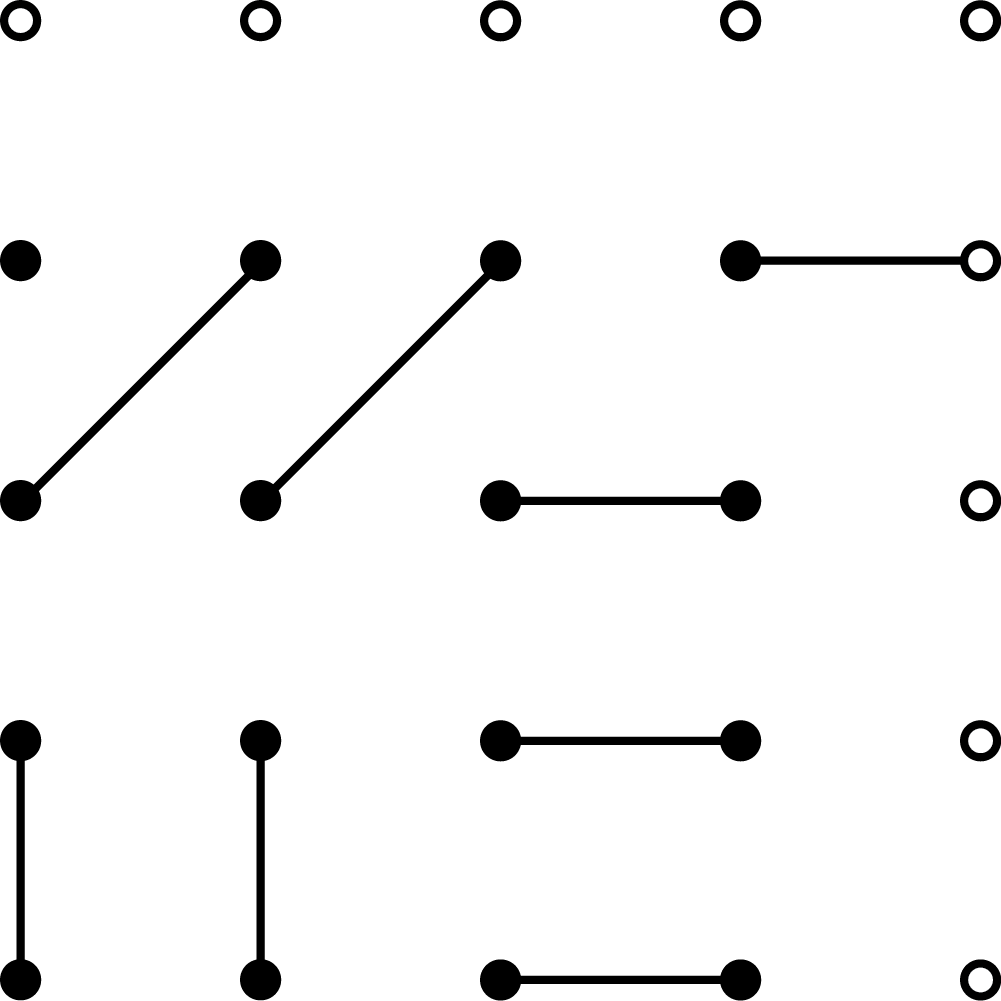}&\hspace{.5in} 
\includegraphics[scale=.5]{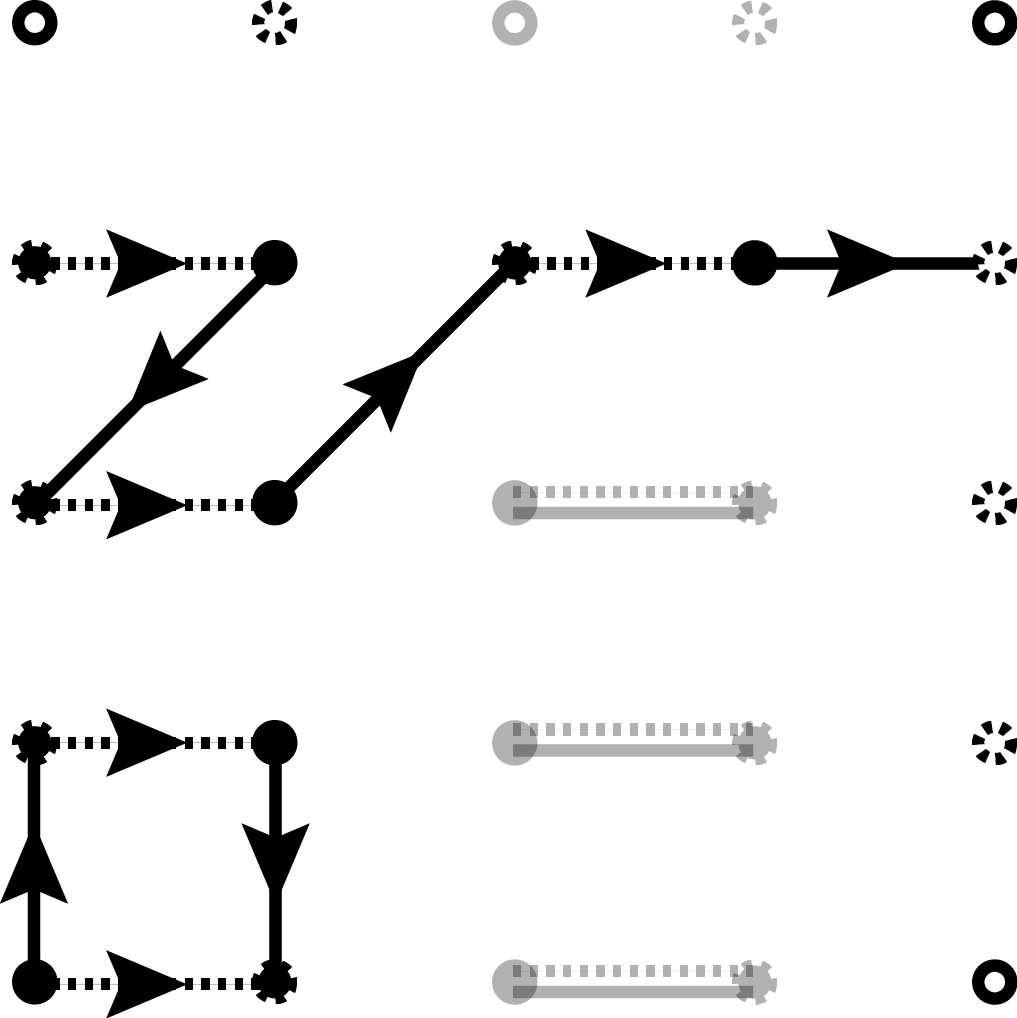}\\
(a) $\sg\in \Sigma_2^2$ &\hspace{.5in} 
(b) $\sg_{\rm stack}\cup\sg_{\textrm{st}}$ 
\end{tabular}
\caption{Examples of types of superpositions on a $4\times 4$ lattice.}
\label{F4}
\end{figure}
\end{proof}

{\it Proof of Theorem \ref{theo44.1}}
We have that
\begin{equation}\label{PfA1}
\Pf A_1=\sum_{\sg\in\Sigma_{m,n}}\sgn(\sg)w(\sg).
\end{equation}
We assume here that $z_h=1$, $z_v\le z_d^2$, and $z_d$ is small. For the same reasons as those given in Lemma \ref{lem1}, we have that 
\begin{equation}
\sum_{\sg\in\cup_{k\in\mathbb Z_n}\Sigma(k)}\sgn(\sg)w(\sg)=0,
\end{equation}
hence to get lowest excited states without cancellations we next consider $\Sigma_0$. However, by Lemma \ref{lem44.2} 
\begin{equation}
\sum_{\sg\in\Sigma_0}\sgn(\sg)w(\sg)=0,
\end{equation}
so we next turn to $\Sigma_1$, but this is empty by Lemma \ref{s1empty}. Therefore, we finally consider $\Sigma_2$, which, by Lemmas \ref{lem44.3} and \ref{lem44.4} yields
\begin{equation}
\begin{aligned}
\Pf A_1=\sum_{\sg\in\Sigma_{m,n}}\sgn(\sg)w(\sg)&=\sum_{\sg\in \Sigma_2}\sgn(\sg)w(\sg)+\sum_{\sg\in\Sigma_{m,n}\setminus\hat{\Sigma}}\sgn(\sg)w(\sg)\\
&=-n^2\left(\frac{m}{2}\right)^nz_v^2z_d^{n-2}+\sum_{\sg\in\Sigma_{m,n}\setminus\hat{\Sigma}}\sgn(\sg)w(\sg),
\end{aligned}
\end{equation}
where $\hat{\Sigma}=\cup_{i=0}^2\Sigma_i\cup\cup_{k\in\mathbb Z_n}\Sigma(k)$. However, any configuration in $\Sigma_{m,n}\setminus\hat{\Sigma}$ will either have a greater total number of vertical and diagonal dimers or it will have a weight 
\begin{equation}
\frac{z_v^\ell}{z_d^\ell}z_v^2z_d^{n-2}\le z_d^\ell z_v^2z_d^{n-2},\quad 2\le \ell \le n-2,
\end{equation}
from which the claim follows.$\hfill \square$

\section{Poisson Summation Formula and Asymptotics of The Partition Function}\label{Poisson}

As an application of the Pfaffian Sign Theorem, we prove the following theorem:

\begin{theo} Suppose that $ m,n \to \infty$ in such a way that
\begin{equation}\label{c12}
C_1\le \frac{m}{n}\le C_2
\end{equation}
for some positive constants $C_2>C_1$. Then for some $c>0$,
\begin{equation}\label{AsympF}
Z=2e^{\frac{1}{2}mnF}\left(1+\mathcal{O}\left(e^{-c(m+n)}\right)\right),
\end{equation}
where 
\begin{equation}\label{p16}
F=\ln 2+\int\limits_0^1\int\limits_0^1 f(x,y)\,dx\,dy,
\end{equation}
and
\begin{equation}\label{pcopy2}
f(x,y)=\frac{1}{2}\,\ln\big[z_h^2 \sin^2 (2\pi x)
+z_v^2 \sin^2 (2\pi y)
+z_d^2\cos^2 (2\pi x+2\pi y)\big].
\end{equation}
\end{theo}

\begin{proof}
\noindent By the double product formula,
\begin{equation}\label{p1}
\frac{1}{mn}\ln\det A_i=\ln 2+\frac{1}{mn}\sum_{j=0}^{m-1}\sum_{k=0}^{n-1}
f(x_j,y_k),\quad x_j=\frac{j+\al_{i}}{m}\,,\quad y_k=\frac{k+\be_{i}}{n}\,,
\end{equation}
with
\begin{equation}\label{p2}
f(x,y)=\frac{1}{2}\,\ln\big[z_h^2 \sin^2 (2\pi x)
+z_v^2 \sin^2 (2\pi y)
+z_d^2\cos^2 (2\pi x+2\pi y)\big].
\end{equation}
Note that in \eqref{p1} we were able to change the upper limit in the first sum from $\frac{m}{2}-1$ to $m-1$ due to the symmetries of $\sin^2 x$ and $\cos^2 x$ functions.
The sum in \eqref{p1} is a Riemann sum and we evaluate its asymptotics by the Poisson summation formula.
To that end, since $f(x+\frac{1}{2},y)=f(x,y)$ and $f(x,y+\frac{1}{2})=f(x,y)$, we expand $f(x,y)$ into Fourier series 
\begin{equation}\label{p3}
f(x,y)=\sum_{(p,q)\in\Z^2} a(p,q) e^{2\pi i (px+qy)},
\end{equation}
where
\begin{equation}\label{p4}
a(p,q)=\int\limits_0^1\int\limits_0^1 f(x,y)e^{-2\pi i (px+qy)}\,dx\,dy.
\end{equation}
Then
\begin{equation}\label{p5}
\frac{1}{mn}\sum_{j=0}^{m-1}\sum_{k=0}^{n-1}
f(x_j,y_k)=\sum_{(p,q)\in\Z^2}\frac{a(p,q)}{mn}\sum_{j=0}^{m-1}\sum_{k=0}^{n-1}
  e^{2\pi i (px_j+qy_k)}.
\end{equation}
Note that
\begin{equation}\label{p6}
\frac{1}{m}\sum_{j=0}^{m-1}e^{2\pi i px_j}=e^{2\pi i p\frac{\al_{i}}{m}}\frac{1}{m}\sum_{j=0}^{m-1}e^{2\pi i p\frac{j}{m}}=e^{2\pi i p\frac{\al_{i}}{m}}\times\begin{cases}0,&p\not\equiv 0\pmod m\\ 1,&p\equiv 0\pmod m\end{cases},
\end{equation}
and similarly, 
\begin{equation}\label{p7}
\frac{1}{n}\sum_{k=0}^{n-1}e^{2\pi i qy_k}=e^{2\pi i q\frac{\be_{i}}{n}}\frac{1}{n}\sum_{k=0}^{n-1}e^{2\pi i q\frac{k}{n}}=e^{2\pi i q\frac{\be_{i}}{n}}\times\begin{cases}0,&q\not\equiv 0\pmod n\\ 1,&q\equiv 0\pmod n\end{cases}.
\end{equation}
By substituting equations \eqref{p6} and \eqref{p7} into equation \eqref{p5} we obtain the Poisson summation formula
\begin{equation}\label{p8}
\begin{aligned}
\frac{1}{mn}\sum_{j=0}^{m-1}\sum_{k=0}^{n-1}
f(x_j,y_k)&=\sum_{(s,t)\in\Z^2}a(ms,nt)e^{2\pi i (s\al_{i}+t\be_{i})}\\
&=a(0,0)+\sideset{}{'}\sum_{(s,t)\in\Z^2}a(ms,nt)e^{2\pi i (s\al_{i}+t\be_{i})}.
\end{aligned}
\end{equation}

\begin{figure}[h]
\centering
\includegraphics[scale=.5]{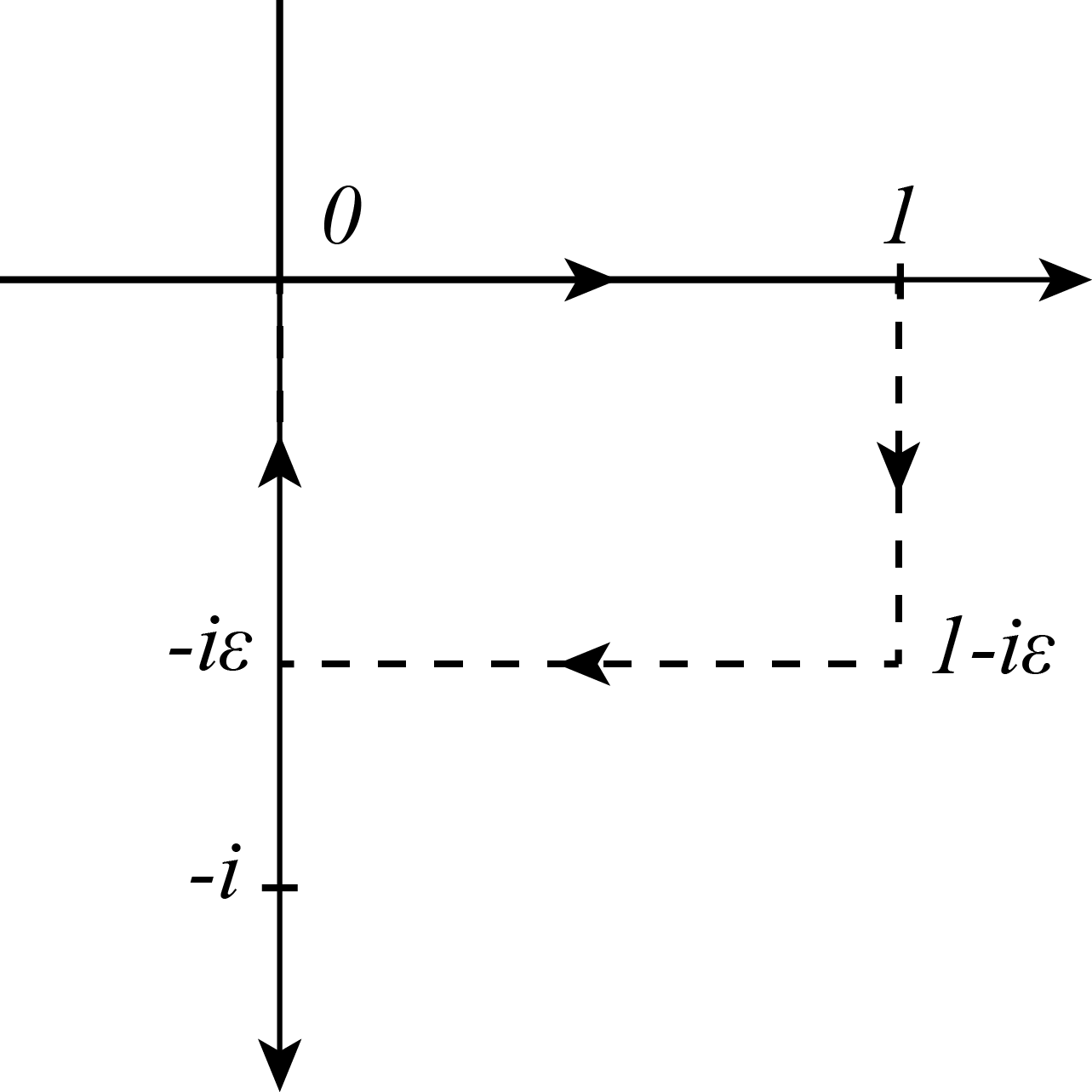}
\caption{Contour of integration}
\label{F6}
\end{figure}
We want to estimate equation \eqref{p4}. To that end assume that $p>0$ and $q\ge 0$ and because $f(x,y)$ is real analytic in $x$, we can integrate over the contour in Fig.\ \ref{F6} for some $\varepsilon>0$. By the Cauchy integral formula and the periodicity of $f(x,y)$, it follows that
\begin{equation}\label{p9}
\begin{aligned}
a(p,q)&=\int\limits_0^1\int\limits_0^1f(x,y)e^{-2\pi i (px+qy)}\,dx\,dy\\
&=e^{-2\pi\varepsilon p}\int\limits_0^1\int\limits_0^1f(x-i\varepsilon,y)e^{-2\pi i (px+qy)}\,dx\,dy\\
&=\mathcal{O}\left(e^{-2\pi \varepsilon p}\right).
\end{aligned}
\end{equation}
If $p<0$, then use the contour above reflected across the real axis to get
\begin{equation}\label{p10}
a(p,q)=\mathcal{O}\left(e^{-2\pi  \varepsilon |p|}\right).
\end{equation}
We can likewise assume $p\ge 0$ and $|q|>0$ and perform the same reasoning with respect to $y$ to obtain
\begin{equation}\label{p11}
a(p,q)=\mathcal{O}\left(e^{-2\pi \varepsilon(|p|+|q|)}\right).
\end{equation}
From this equation and equation \eqref{p8}, we conclude that 
\begin{equation}\label{p12}
\begin{aligned}
\frac{1}{mn}\sum_{j=0}^{m-1}\sum_{k=0}^{n-1}
f(x_j,y_k)&=a(0,0)+r(m,n),
\end{aligned}
\end{equation}
where 
\begin{equation}\label{p14}
\begin{aligned}
|r(m,n)|\le C\sideset{}{'}\sum_{(s,t)\in\Z^2}e^{-2\pi \varepsilon (m|s|+n|t|)}\le C_0e^{-2\pi \varepsilon(m+n)},\quad C_0>0.
\end{aligned}
\end{equation}
Returning to equation \eqref{p1} we have that
\begin{equation}\label{p13}
\frac{1}{mn}\ln\det A_i=\ln 2+\int\limits_0^1\int\limits_0^1 f(x,y)\,dx\,dy+\mathcal{O}\left(e^{-2\pi\varepsilon(m+n)}\right).
\end{equation}
From \eqref{p13} it follows that for each $i=1,2,3,4,$ we can write
\begin{equation}\label{15}
\det A_i=e^{mnF}\left(1+\mathcal{O}\left(e^{-c(m+n)}\right)\right).
\end{equation}
Finally, from equations \eqref{a1} and \eqref{PSTF} the claim follows.
\end{proof}

\section{Conclusion}

In this paper we establish the signs of the Pfaffians $\Pf A_i$ in the
dimer model on the triangular lattice on the torus. We prove that $\Pf A_1<0$,
while $\Pf A_i>0$ for $i=2,3,4$. Our proof is based on the Kasteleyn's
expressions for $\Pf A_i$ in terms of restricted partition sums $Z^{rs}$ and
on low weight expansions. As an application, we obtain an asymptotic expansion
of the partition function in the limit as $m,n\to\infty$.  It would be interesting to extend the Pfaffian Sign Theorem to other lattices on the torus.

As shown by Galluccio and Loebl \cite{GalluLoe}, Tesler \cite{Tes}, and Cimasoni and Reshetikhin \cite{CimResh}, the dimer models on an orientable  Riemann surface of genus $g$ is expressed as a linear algebraic combination of $2^{2g}$ Pfaffians. It is extended to non-orientable surfaces in the work of Tesler \cite{Tes}. The Pfaffian Sign Theorem in this general setting is another interesting open problem.

{\bf Acknowledgements.} The authors thank Barry McCoy and Dan Ramras for useful discussions, and the referee for a simplified proof of Theorem \ref{Conj}.

\begin{appendix}

\section{Proof of Lemma \ref{Co}} \label{appA}

We have that 
\begin{equation}\label{sqcup}
\sg\cup\sg'=\bigsqcup_{i=1}^{r}\ga_i.
\end{equation}
Since each vertex in $\ga_i$ is occupied by a dimer either from  $\sg$ or $\sg',$ and the dimers in $\sg\cup\sg'$ alternate, we conclude that each $\ga_i$ is of even length.
\par
By \eqref{Pfsign}, 
\begin{equation}
\Pf A=\sum_{\sg\in\Sg_{m,n}}\sgn(\sg)w(\sg). 
\end{equation} 
If we enumerate the vertices on $\Ga_{m,n}$, i.e. permute the set of vertices $V_{m,n},$ then by the well-known fact (see e.g. \cite{Godsil}) that for an arbitrary matrix $P$ of order $mn\times mn,$ 
\begin{equation}\label{fact}
\Pf(PAP^{T})=\det(P)\Pf(A),
\end{equation}
we get 
\begin{equation}
\Pf \rho(A)=(-1)^{\rho}\Pf(A),
\end{equation}
where $\rho$ is some permutation on $V_{m,n}.$ Here $\rho(A)$ denotes a matrix $A$ whose rows and columns have been permuted by $\rho.$ In other words,
\begin{equation}\label{identity}
\Pf A=\sum_{\sg\in\Sg_{m,n}}\sgn(\sg)w(\sg)=\sum_{\sg\in\Sg_{m,n}}(-1)^{\rho}[\sgn(\sg)]_{\rho}w(\sg),
\end{equation}
where $[\sgn(\sg)]_{\rho}$ indicates the sign of $\sg$ with respect to some new enumeration $\rho$ of vertices. From \eqref{identity}, we have that
\begin{equation}\label{invar}
\sgn(\sg)=(-1)^{\rho}[\sgn(\sg)]_{\rho}.
\end{equation}
If we take any two configurations $\sg$ and $\sg'$, then \eqref{invar} implies that \begin{equation}\label{rho}
\sgn(\sg)\cdot\sgn(\sg')=[\sgn(\sg)]_{\rho}\cdot[\sgn(\sg')]_{\rho},
\end{equation}
i.e. the sign of $\sg\cup\sg'$ is invariant under any renumeration of vertices.
\par
Let
\begin{equation}
  \rho = \left(\begin{matrix}
  1 & 2 & \cdots & n_1 & n_1+1 & \cdots & n_1+n_2 & \cdots & n_1+\ldots+n_{r-1}+n_r \\
  v_{1,1} & v_{1,2} & \cdots & v_{1,n_1} & v_{2,1} & \cdots & v_{2,n_2} & \cdots  &       v_{r,n_r}
\end{matrix}\right),
\end{equation}
where $v_{j,k}\in\{1,2,\ldots,mn\}$ denotes the $k$-th vertex of the $j$-th contour. Note that $\rho$ is a renumeration of vertices so that along each $\ga_i$ they are rearranged in a cyclical order, starting from one contour and continuing to the next one. Now, the underlying permutations $\pi(\sg)$ and $\pi(\sg')$ with respect to $\rho$ are then:
\begin{align}
& [\pi(\sg)]_{\rho}={\rm Id},\\
& [\pi(\sg')]_{\rho}=\prod_{i=1}^{r}C(\ga_i),
\end{align}
where \begin{equation}
  C(\ga_i) = \left(\begin{matrix}
  v_{i,1} & \cdots & v_{i,n_i} \\
  v_{i,2} & \cdots & v_{i,1} 
\end{matrix}\right).
\end{equation}
From this equation, equation \eqref{rho}, and the fact that each $\ga_i$ corresponds to a cycle of even length, Lemma \ref{Co} follows.

\section{Block Diagonalization of the Kasteleyn Matrices and the Double Product Formulae for $\det A_i,$ $i=1,2,3,4$} \label{appB}

In this Appendix we diagonalize the Kasteleyn matrices $A_i$ and derive the
double product formulae for $\det A_i$, $i=1,2,3,4$.
We write 
\begin{equation}\label{km1}
j=2j_0+r, \quad 0\le j_0\le m_0-1,\quad  m_0=\frac{m}{2}\,,\quad r=0,1,
\end{equation}
and we represent a vertex $x$ as
\begin{equation}\label{km2}
x=(j_0,k,r)\in \Z_{m_0}\oplus \Z_n\oplus\Z_2\,.
\end{equation}
Then the matrix elements of the Kasteleyn matrix $A_1$ are written in terms of the Kronecker $\de$-symbol $\de_{jk}$ as
\begin{equation}\label{km3}
\begin{aligned}
A_1(x,x')&=z_h\big[\de_{j_0j_0'}\de_{kk'}(\de_{r0}\de_{r'1}-\de_{r1}\de_{r'0})
+\de_{j_0+1,j_0'}\de_{kk'}\de_{r1}\de_{r'0}
-\de_{j_0,j_0'+1}\de_{kk'}\de_{r0}\de_{r'1}\big]\\
&+z_v\big[\de_{j_0j_0'}(\de_{k+1,k'}-\de_{k,k'+1})(\de_{r0}\de_{r'0}-\de_{r1}\de_{r'1})\big]\\
&+z_d\big[-\de_{j_0j_0'}\de_{k+1,k'}\de_{r0}\de_{r'1}+\de_{j_0j_0'}\de_{k,k'+1}\de_{r1}\de_{r'0} 
+\de_{j_0+1,j_0'}\de_{k+1,k'}\de_{r1}\de_{r'0}\\
&-\de_{j_0,j_0'+1}\de_{k,k'+1}\de_{r0}\de_{r'1} \big].
\end{aligned}
\end{equation}
It is convenient to write the matrix $A_1$ in terms of tensor products as follows. Denote by $I_n$ 
the identity $n\times n$ matrix and put
\begin{equation}\label{km4}
\begin{aligned}
J_{n}^+&=(\de_{k+1,k'})_{k,k'\in\Z_n}
=\begin{pmatrix}
0 & 1 & 0 & \ldots & 0 & 0\\
0 & 0 & 1 & \ldots & 0 & 0\\
0 & 0 & 0 & \ldots & 0 & 0\\
\vdots & \vdots & \vdots & \ddots & \vdots  & \vdots \\
0 & 0 & 0 & \ldots & 0 & 1 \\
1 & 0 & 0 & \ldots & 0 & 0
\end{pmatrix},\\ & {}\\
\quad J_{n}^-&=(\de_{k,k'+1})_{k,k'\in\Z_n}
=\begin{pmatrix}
0 & 0 & 0 & \ldots & 0 & 1\\
1 & 0 & 0 & \ldots & 0 & 0\\
0 & 1 & 0 & \ldots & 0 & 0\\
\vdots & \vdots & \vdots & \ddots & \vdots  & \vdots \\
0 & 0 & 0 & \ldots & 0 & 0 \\
0 & 0 & 0 & \ldots & 1 & 0
\end{pmatrix}.
\end{aligned}
\end{equation}
Then \eqref{km3} can be rewritten as follows:
\begin{equation}\label{km5}
\begin{aligned}
A_1&=z_h\left[I_{m_0}\otimes I_n\otimes
\begin{pmatrix}
0 & 1 \\
-1 & 0
\end{pmatrix}
+J_{m_0}^+\otimes I_n \otimes 
\begin{pmatrix}
0 & 0 \\
1 & 0
\end{pmatrix}
-J_{m_0}^-\otimes I_n \otimes 
\begin{pmatrix}
0 & 1 \\
0 & 0
\end{pmatrix}
\right]\\
&+z_v I_{m_0}\otimes (J_n^+-J_n^-)\otimes \begin{pmatrix}
1 & 0 \\
0 & -1
\end{pmatrix}
+z_d\bigg[-I_{m_0}\otimes J_n^+\otimes \begin{pmatrix}
0 & 1 \\
0 & 0
\end{pmatrix}\\
&+I_{m_0}\otimes J_n^-\otimes \begin{pmatrix}
0 & 0 \\
1 & 0
\end{pmatrix}
+J_{m_0}^+\otimes J_n^+\otimes \begin{pmatrix}
0 & 0 \\
1 & 0
\end{pmatrix}
-J_{m_0}^-\otimes J_n^-\otimes \begin{pmatrix}
0 & 1 \\
0 & 0
\end{pmatrix}
\bigg].
\end{aligned}
\end{equation}
We want to block-diagonalize the Kasteleyn matrix $A_1$ with $2\times 2$ blocks
along the diagonal. To that end introduce the vectors
\begin{equation}\label{km6}
\begin{aligned}
g_{\xi}=
\begin{pmatrix}
1  \\
e^{\frac{2\pi i \xi}{m_0}} \\
e^{\frac{4\pi i \xi}{m_0}} \\
\vdots \\
e^{\frac{2(m_0-1)\pi i \xi}{m_0}} 
\end{pmatrix}\in\R^{m_0},\qquad
h_{\eta}=
\begin{pmatrix}
1  \\
e^{\frac{2\pi i \eta}{n}} \\
e^{\frac{4\pi i \eta}{n}} \\
\vdots \\
e^{\frac{2(n-1)\pi i \eta}{n}} 
\end{pmatrix}\in\R^n,
\end{aligned}
\end{equation}
and
\begin{equation}\label{km7}
f_{\xi,\eta,r}=g_{\xi}\otimes h_{\eta}\otimes e_r\in \R^{mn},\quad \xi\in\Z_{m_0},\; \eta\in\Z_n,\; r\in \Z_2,
\end{equation}
with
\begin{equation}\label{km8}
e_0=\begin{pmatrix}1\\ 0\end{pmatrix}\in \R^2,\quad e_1=\begin{pmatrix}0\\ 1\end{pmatrix}\in\R^2.
\end{equation}
Observe that for $\xi\in\Z_{m_0}$ and $\eta\in\Z_n$, $g_{\xi}$ is an eigenfunction of the matrices 
$J_{m_0}^{\pm}$ and $h_{\eta}$ of the ones $J_n^{\pm}$. Namely, 
\begin{equation}\label{km9}
J_{m_0}^{\pm}g_{\xi}=e^{\pm\frac{2\pi i\xi}{m_0}}g_{\xi},\quad
J_{n}^{\pm}h_{\eta}=e^{\pm\frac{2\pi i\eta}{n}}h_{\eta},\quad \xi\in\Z_{m_0},\; \eta\in\Z_n.
\end{equation}
By \eqref{km5}, this implies that
\begin{equation}\label{km10}
A_1 f_{\xi,\eta,0}=\la_{00}f_{\xi,\eta,0}+\la_{10}f_{\xi,\eta,1},\quad
A_1 f_{\xi,\eta,1}=\la_{01}f_{\xi,\eta,0}+\la_{11}f_{\xi,\eta,1},
\end{equation}
with
\begin{equation}\label{km11}
\begin{aligned}
\la_{00}&=2iz_v\sin\frac{2\pi \eta}{n},\quad 
\la_{01}=z_h(1-e^{-\frac{2\pi i\xi}{m_0}})+z_d\big(-e^{\frac{2\pi i\eta}{n}}
-e^{-\frac{2\pi i\xi}{m_0}-\frac{2\pi i\eta}{n}}\big),\\
\la_{10}&=z_h(-1+e^{\frac{2\pi i\xi}{m_0}})+z_d\big(e^{-\frac{2\pi i\eta}{n}}
+e^{\frac{2\pi i\xi}{m_0}+\frac{2\pi i\eta}{n}}\big),\quad
\la_{11}=-2iz_v \sin\frac{2\pi \eta}{n}.
\end{aligned}
\end{equation}
From \eqref{km10},
\begin{equation}\label{km12}
A_1 \begin{pmatrix}
f_{\xi,\eta,0} \\
f_{\xi,\eta,1}
\end{pmatrix}=\La_{\xi, \eta}
\begin{pmatrix}
f_{\xi,\eta,0} \\
f_{\xi,\eta,1}
\end{pmatrix},\quad 
\La_{\xi, \eta}=\begin{pmatrix}
\la_{00} & \la_{01} \\
\la_{10} & \la_{11}
\end{pmatrix},\quad \xi\in\Z_{m_0},\; \eta\in\Z_n.
\end{equation}
Observe that $\La_{\xi,\eta}$ is an anti-Hermitian matrix so that
\begin{equation}\label{k11a}
\La_{\xi,\eta}^*=-\La_{\xi,\eta},
\end{equation}
and also
\begin{equation}\label{k11b}
\overline{\La_{\xi,\eta}}=\La_{-\xi,-\eta}\,,\quad \Tr \La_{\xi,\eta}=0.
\end{equation}
The matrix 
\begin{equation}\label{k12}
\La=\left(\La_{\xi,\eta}\de_{\xi,\xi'}\de_{\eta,\eta'}\right)_{\xi,\xi'\in \Z_{m_0};\, \eta,\eta'\in\Z_n}
\end{equation}
is a block-diagonal matrix, which is a  block-diagonalization of $A_1$, so that
\begin{equation}\label{k13}
A_1=U\La U^{-1}, \quad U=U_1\otimes U_2\otimes I_2,\quad U_1=\left( e^{\frac{2\pi i j\xi}{m_{0}}}\right)_{j,\xi=0}^{m_0-1},\quad
U_2=\left( e^{\frac{2\pi ik\eta}{n} }\right)_{k,\eta=0}^{n-1}.
\end{equation}
This implies that
\begin{equation}\label{km13}
\det A_1 =\prod_{\xi=0}^{\frac{m}{2}-1}\prod_{\eta=0}^{n-1}\det \La_{\xi,\eta}.
\end{equation}
It remains to calculate $\det \La_{\xi,\eta}$. From \eqref{km11} we have that
\begin{equation}\label{km14}
\begin{aligned}
\det \La_{\xi,\eta}&=\la_{00}\la_{11}-\la_{01}\la_{10}\\
&=4z_v^2 \sin\frac{2\pi \eta}{n}-
\big[z_h(1-e^{-\frac{2\pi i\xi}{m_0}})+z_d\big(-e^{\frac{2\pi i\eta}{n}}
-e^{-\frac{2\pi i\xi}{m_0}-\frac{2\pi i\eta}{n}}\big)\big]\\
&\times
\big[z_h(-1+e^{\frac{2\pi i\xi}{m_0}})+z_d\big(e^{-\frac{2\pi i\eta}{n}}
+e^{\frac{2\pi i\xi}{m_0}+\frac{2\pi i\eta}{n}}\big)\big]\\
&=4z_h^2 \sin^2\frac{2\pi \xi}{m}+4z_v^2 \sin^2\frac{2\pi \eta}{n}
+4z_d^2\cos^2\left(\frac{2\pi \xi}{m}+\frac{2\pi \eta}{n}\right),
\end{aligned}
\end{equation}
hence
\begin{equation}\label{km15}
\det A_1 =\prod_{\xi=0}^{\frac{m}{2}-1}\prod_{\eta=0}^{n-1}\left[4z_h^2 \sin^2\frac{2\pi \xi}{m}+4z_v^2 \sin^2\frac{2\pi \eta}{n}
+4z_d^2\cos^2\left(\frac{2\pi \xi}{m}+\frac{2\pi \eta}{n}\right)\right].
\end{equation}
This proves the double product formula for $\det A_1$.

Consider now $\det A_2$. The matrix elements of the matrix $ A_2$ are
\begin{equation}\label{k2m1}
\begin{aligned}
A_2(x,x')&=z_h\big[\de_{j_0j_0'}\de_{kk'}(\de_{r0}\de_{r'1}-\de_{r1}\de_{r'0})
+\de_{j_0+1,j_0'}\de_{kk'}\de_{r1}\de_{r'0}
-\de_{j_0,j_0'+1}\de_{kk'}\de_{r0}\de_{r'1}\big]\\
&+z_v\big[\de_{j_0j_0'}\big((-1)^{\de_{k,n-1}}\de_{k+1,k'}-(-1)^{\de_{k',n-1}}\de_{k,k'+1}\big)(\de_{r0}\de_{r'0}-\de_{r1}\de_{r'1})\big]\\
&+z_d\big[-\de_{j_0j_0'}(-1)^{\de_{k,n-1}}\de_{k+1,k'}\de_{r0}\de_{r'1}+\de_{j_0j_0'}(-1)^{\de_{k',n-1}}\de_{k,k'+1}\de_{r1}\de_{r'0}\\ 
&+\de_{j_0+1,j_0'}(-1)^{\de_{k,n-1}}\de_{k+1,k'}\de_{r1}\de_{r'0}
-\de_{j_0,j_0'+1}(-1)^{\de_{k',n-1}}\de_{k,k'+1}\de_{r0}\de_{r'1} \big],
\end{aligned}
\end{equation}
and the matrix $A_2$ can be written as
\begin{equation}\label{k2m2}
\begin{aligned}
A_2&=z_h\bigg[I_{m_0}\otimes I_n\otimes
\begin{pmatrix}
0 & 1 \\
-1 & 0
\end{pmatrix}
+J_{m_0}^+\otimes I_n \otimes 
\begin{pmatrix}
0 & 0 \\
1 & 0
\end{pmatrix}
-J_{m_0}^-\otimes I_n \otimes 
\begin{pmatrix}
0 & 1 \\
0 & 0
\end{pmatrix}
\bigg]\\
&+z_v I_{m_0}\otimes (K_n^+-K_n^-)\otimes \begin{pmatrix}
1 & 0 \\
0 & -1
\end{pmatrix}
+z_d\bigg[-I_{m_0}\otimes K_n^+\otimes \begin{pmatrix}
0 & 1 \\
0 & 0
\end{pmatrix}\\
&+I_{m_0}\otimes K_n^-\otimes \begin{pmatrix}
0 & 0 \\
1 & 0
\end{pmatrix}
+J_{m_0}^+\otimes K_n^+\otimes \begin{pmatrix}
0 & 0 \\
1 & 0
\end{pmatrix}
-J_{m_0}^-\otimes K_n^-\otimes \begin{pmatrix}
0 & 1 \\
0 & 0
\end{pmatrix}
\bigg],
\end{aligned}
\end{equation}
with
\begin{equation}\label{k2m3}
\begin{aligned}
K_{n}^+&
=\begin{pmatrix}
0 & 1 & 0 & \ldots & 0 & 0\\
0 & 0 & 1 & \ldots & 0 & 0\\
0 & 0 & 0 & \ldots & 0 & 0\\
\vdots & \vdots & \vdots & \ddots & \vdots  & \vdots \\
0 & 0 & 0 & \ldots & 0 & 1 \\
-1 & 0 & 0 & \ldots & 0 & 0
\end{pmatrix},
\quad K_{n}^-&
=\begin{pmatrix}
0 & 0 & 0 & \ldots & 0 & -1\\
1 & 0 & 0 & \ldots & 0 & 0\\
0 & 1 & 0 & \ldots & 0 & 0\\
\vdots & \vdots & \vdots & \ddots & \vdots  & \vdots \\
0 & 0 & 0 & \ldots & 0 & 0 \\
0 & 0 & 0 & \ldots & 1 & 0
\end{pmatrix}.
\end{aligned}
\end{equation}
Observe that the vectors 
\begin{equation}\label{k2m4}
\begin{aligned}
h_{\eta+\frac{1}{2}}=
\begin{pmatrix}
1  \\
e^{\frac{2\pi i \left(\eta+\frac{1}{2}\right)}{n}} \\
e^{\frac{4\pi i  \left(\eta+\frac{1}{2}\right)}{n}} \\
\vdots \\
e^{\frac{2(n-1)\pi i  \left(\eta+\frac{1}{2}\right)}{n}}
\end{pmatrix}\in\R^{n},\qquad \eta\in\Z_n,
\end{aligned}
\end{equation}
are eigenvectors of the matrices $K_n^{\pm}$.  Namely, 
\begin{equation}\label{k2m5}
K_{n}^{\pm}h_{\eta+\frac{1}{2}}=e^{\pm\frac{2\pi i\left(\eta+\frac{1}{2}\right)}{n}}h_{\eta+\frac{1}{2}},
\quad \xi\in\Z_{m_0},\; \eta\in\Z_n.
\end{equation}
Consider the vectors
\begin{equation}\label{k2m6}
f_{\xi,\eta+\frac{1}{2},r}=g_{\xi}\otimes h_{\eta+\frac{1}{2}}\otimes e_r,\quad \xi\in\Z_{m_0},\; \eta\in\Z_n,\; r\in \Z^2.
\end{equation}
Then
\begin{equation}\label{k2m7}
A_2 f_{\xi,\eta+\frac{1}{2},0}=\la_{00}f_{\xi,\eta+\frac{1}{2},0}+\la_{10}f_{\xi,\eta+\frac{1}{2},1},\quad
A_2 f_{\xi,\eta+\frac{1}{2},1}=\la_{01}f_{\xi,\eta+\frac{1}{2},0}+\la_{11}f_{\xi,\eta+\frac{1}{2},1},
\end{equation}
with
\begin{equation}\label{k2m8}
\begin{aligned}
\la_{00}&=2iz_v\sin\frac{2\pi \left(\eta+\frac{1}{2}\right)}{n},\quad 
\la_{11}=-2iz_v \sin\frac{2\pi \left(\eta+\frac{1}{2}\right)}{n}\,,\\
\la_{01}&=z_h(1-e^{-\frac{2\pi i\xi}{m_0}})+z_d\big(-e^{\frac{2\pi i\left(\eta+\frac{1}{2}\right)}{n}}
-e^{-\frac{2\pi i\xi}{m_0}-\frac{2\pi i\left(\eta+\frac{1}{2}\right)}{n}}\big),\\
\la_{10}&=z_h(-1+e^{\frac{2\pi i\xi}{m_0}})+z_d\big(e^{-\frac{2\pi i\left(\eta+\frac{1}{2}\right)}{n}}
+e^{\frac{2\pi i\xi}{m_0}+\frac{2\pi i\left(\eta+\frac{1}{2}\right)}{n}}\big).
\end{aligned}
\end{equation}
The matrix 
\begin{equation}\label{k12a}
\La=\left(\La_{\xi,\eta}\de_{\xi,\xi'}\de_{\eta,\eta'}\right)_{\xi,\xi'\in \Z_{m_0};\, \eta,\eta'\in\Z_n}
\end{equation}
is now a  block-diagonalization of $A_2$ with
\begin{equation}\label{k13a}
A_2=U\La U^{-1}, \quad U=U_1\otimes U_2\otimes I_2,\quad U_1=\left( e^{\frac{2\pi i j\xi}{m_{0}}}\right)_{j,\xi=0}^{m_0-1},\quad
U_2=\left( e^{\frac{2\pi ik\left(\eta+\frac{1}{2}\right)}{n} }\right)_{k,\eta=0}^{n-1}.
\end{equation}
From here it follows that
\begin{equation}\label{k2m9}
\begin{aligned}
\det A_2 &=\prod_{\xi=0}^{\frac{m}{2}-1}\prod_{\eta=0}^{n-1}\Bigg[4z_h^2 \sin^2\frac{2\pi \xi}{m}
+4z_v^2 \sin^2\frac{2\pi \left(\eta+\frac{1}{2}\right)}{n} \\
&+4z_d^2\cos^2\left(\frac{2\pi \xi}{m}+\frac{2\pi \left(\eta+\frac{1}{2}\right)}{n}\right)\Bigg].
\end{aligned}
\end{equation}
Let us turn to $\det A_3$.

The matrix elements of the matrix $A_3$ are 
\begin{equation}\label{k3m1}
\begin{aligned}
A_3(x,x')&=z_h\big[\de_{j_0j_0'}\de_{kk'}(\de_{r0}\de_{r'1}-\de_{r1}\de_{r'0})
+(-1)^{\de_{j_0,m_0-1}}\de_{j_0+1,j_0'}\de_{kk'}\de_{r1}\de_{r'0}\\
&-(-1)^{\de_{j_0',m_0-1}}\de_{j_0,j_0'+1}\de_{kk'}\de_{r0}\de_{r'1}\big]\\
&+z_v\big[\de_{j_0j_0'}(\de_{k+1,k'}-\de_{k,k'+1})(\de_{r0}\de_{r'0}-\de_{r1}\de_{r'1})\big]\\
&+z_d\big[-\de_{j_0j_0'}\de_{k+1,k'}\de_{r0}\de_{r'1}+\de_{j_0j_0'}\de_{k,k'+1}\de_{r1}\de_{r'0} \\
&+(-1)^{\de_{j_0,m_0-1}}\de_{j_0+1,j_0'}\de_{k+1,k'}\de_{r1}\de_{r'0}
-(-1)^{\de_{j_0',m_0-1}}\de_{j_0,j_0'+1}\de_{k,k'+1}\de_{r0}\de_{r'1} \big],
\end{aligned}
\end{equation}
hence
\begin{equation}\label{k3m2}
\begin{aligned}
A_3&=z_h\Big[I_{m_0}\otimes I_n\otimes
\begin{pmatrix}
0 & 1 \\
-1 & 0
\end{pmatrix}
+K_{m_0}^+\otimes I_n \otimes 
\begin{pmatrix}
0 & 0 \\
1 & 0
\end{pmatrix}
-K_{m_0}^-\otimes I_n \otimes 
\begin{pmatrix}
0 & 1 \\
0 & 0
\end{pmatrix}
\Big]\\
&+z_v I_{m_0}\otimes (J_n^+-J_n^-)\otimes \begin{pmatrix}
1 & 0 \\
0 & -1
\end{pmatrix}
+z_d\bigg[-I_{m_0}\otimes J_n^+\otimes \begin{pmatrix}
0 & 1 \\
0 & 0
\end{pmatrix}\\
&+I_{m_0}\otimes J_n^-\otimes \begin{pmatrix}
0 & 0 \\
1 & 0
\end{pmatrix}
+K_{m_0}^+\otimes J_n^+\otimes \begin{pmatrix}
0 & 0 \\
1 & 0
\end{pmatrix}
-K_{m_0}^-\otimes J_n^-\otimes \begin{pmatrix}
0 & 1 \\
0 & 0
\end{pmatrix}
\bigg].
\end{aligned}
\end{equation}
Now, the matrix 
\begin{equation}\label{k12b}
\La=\left(\La_{\xi,\eta}\de_{\xi,\xi'}\de_{\eta,\eta'}\right)_{\xi,\xi'\in \Z_{m_0};\, \eta,\eta'\in\Z_n}
\end{equation}
is a  block-diagonalization of $A_3$ with
\begin{equation}\label{k13a1}
A_3=U\La U^{-1}, \quad U=U_1\otimes U_2\otimes I_2,\quad U_1=\left( e^{\frac{2\pi i j\left(\xi+\frac{1}{2}\right)}{m_{0}}}\right)_{j,\xi=0}^{m_0-1},\quad
U_2=\left( e^{\frac{2\pi ik\eta}{n} }\right)_{k,\eta=0}^{n-1}.
\end{equation}
Hence,
\begin{equation}\label{k3m3}
\begin{aligned}
\det A_3&=\prod_{\xi=0}^{\frac{m}{2}-1}\prod_{\eta=0}^{n-1}\Bigg[4z_h^2 \sin^2\frac{2\pi \left(\xi+\frac{1}{2}\right)}{m}
+4z_v^2 \sin^2\frac{2\pi \eta}{n}\\
&+4z_d^2\cos^2\left(\frac{2\pi \left(\xi+\frac{1}{2}\right)}{m}+\frac{2\pi \eta}{n}\right)\Bigg].
\end{aligned}
\end{equation}
The double product formula for $\det A_4$ is proved similarly:
\begin{equation}\label{k3m4}
\begin{aligned}
\det A_4&=\prod_{\xi=0}^{\frac{m}{2}-1}\prod_{\eta=0}^{n-1}\Bigg[4z_h^2 \sin^2\frac{2\pi \left(\xi+\frac{1}{2}\right)}{m}
+4z_v^2 \sin^2\frac{2\pi \left(\eta+\frac{1}{2}\right)}{n}\\
&+4z_d^2\cos^2\left(\frac{2\pi \left(\xi+\frac{1}{2}\right)}{m}+\frac{2\pi \left(\eta+\frac{1}{2}\right)}{n}\right)\Bigg].
\end{aligned}
\end{equation}

\section{Numerical Data for the Pfaffians $A_i$}\label{numerics}

In this Appendix we present numerical data for the Pfaffians $A_i$ on the $m\times n$ lattices on the torus for different values $m$ and $n$. It is interesting to compare these data with the asymptotics of the Pfaffians $A_i$, obtained in Sections \ref{A2} and \ref{A1} above, and also with the identities $\Pf A_1=-\Pf A_2$, $\Pf A_3=\Pf A_4$ for odd $n$, proven in Section \ref{Id}. 

 The Pfaffians $A_i$ for $m=4$, $n=3$.
\[
\begin{aligned}
\Pf A_1&=-4z_hz_d(3z_v^2+z_d^2)(4z_h^2+3z_v^2+3z_d^2),\\
\Pf A_2&=4z_hz_d(3z_v^2+z_d^2)(4z_h^2+3z_v^2+3z_d^2),\\
\Pf A_3&=2(z_h^2+z_d^2)[(2z_h^2+3z_v^2)(2z_h^2+3z_v^2+4z_d^2)+z_d^4],\\
\Pf A_4&=2(z_h^2+z_d^2)[(2z_h^2+3z_v^2)(2z_h^2+3z_v^2+4z_d^2)+z_d^4].
\end{aligned}
\]

 The Pfaffians $A_i$ for $m=4$, $n=4$.
\[
\begin{aligned}
\Pf A_1&=-256z_h^2z_v^2z_d^2(z_h^2+z_v^2+z_d^2),\\
\Pf A_2&=16(z_v^2+z_d^2)^2(2z_h^2+z_v^2+z_d^2)^2,\\
\Pf A_3&=16(z_v^2+z_d^2)^2(z_h^2+2z_v^2+z_d^2)^2,\\
\Pf A_3&=16(z_v^2+z_d^2)^2(z_h^2+z_v^2+2z_d^2)^2.
\end{aligned}
\]

 The Pfaffians $A_i$ for $m=4$, $n=6$.
\[
\begin{aligned}
\Pf A_1&=-16z_h^2z_d^2(4z_h^2+3z_v^2+3z_d^2)^2(3z_v^2+z_d^2)^2,\\
\Pf A_2&=16z_v^2(4z_h^2+z_v^2+z_d^2)^2(z_h^2+z_v^2+z_d^2)(z_v^2+3z_d^2)^2,\\
\Pf A_3&=4(z_d^2 + z_h^2)^2(z_d^4 + 4z_d^2 (2z_h^2 + 3z_v^2) + (2z_h^2 + 3z_v^2)^2)^2,\\
\Pf A_4&=4(z_d^2 + z_h^2 + 2z_v^2)^2(z_d^4 + 8z_d^2z_h^2 + 4z_h^4 + 4z_d^2z_v^2 + 
   4z_h^2z_v^2 + z_v^4)^2.
\end{aligned}
\]

 The Pfaffians $A_i$ for $m=4$, $n=8$.
\[
\begin{aligned}
\Pf A_1&=-4096z_d^2z_h^2z_v^2(z_d^2 + z_v^2)^2(z_d^2 + z_h^2 + z_v^2)(z_d^2 + 2z_h^2 + z_v^2)^2,\\
\Pf A_2&=16(z_d^4 + 6z_d^2z_v^2 + z_v^4)^2(z_d^4 + 8z_h^4 + 8z_h^2 z_v^2 + z_v^4 + 
   2z_d^2 (4z_h^2 + z_v^2))^2,\\
\Pf A_3&=256(z_d^2 + z_h^2)^2(z_h^2 + z_v^2)^2(2z_d^2 + z_h^2 + z_v^2)^2(z_d^2 + z_h^2 + 
   2z_v^2)^2,\\
\Pf A_4&=16(z_d^4 + 2z_h^4 + 4z_h^2z_v^2 + z_v^4 + 2z_d^2(2z_h^2 + z_v^2))^2(z_d^4 + 
   2z_h^4 + 4z_h^2z_v^2 + z_v^4 + z_d^2 (4z_h^2 + 6z_v^2))^2.
\end{aligned}
\]

 The Pfaffians $A_i$ for $m=6$, $n=6$.
\[
\begin{aligned}
\Pf A_1&=-4z_d^2(z_d^2 + 3z_h^2)^2(z_d^2 + 3z_v^2)^2 (z_d^2 + 
   3 (z_h^2 + z_v^2))^2 (4z_d^2 + 3 (z_h^2 + z_v^2))^2,\\
\Pf A_2&=4 z_v^2 (3 z_d^2 + z_v^2)^2 (3 z_h^2 + z_v^2)^2 (3 (z_d^2 + z_h^2) + 
   z_v^2)^2 (3 (z_d^2 + z_h^2) + 4 z_v^2)^2,\\
\Pf A_3&=4 z_h^2 (3 z_d^2 + z_h^2)^2 (z_h^2 + 3 z_v^2)^2 (3 z_d^2 + z_h^2 + 
   3 z_v^2)^2 (3 z_d^2 + 4 z_h^2 + 3 z_v^2)^2,\\
\Pf A_4&=4 (z_d^2 + z_h^2 + z_v^2)^3 (4 z_d^2 + z_h^2 + z_v^2)^2 (z_d^2 + 4 z_h^2 + 
   z_v^2)^2 (z_d^2 + z_h^2 + 4 z_v^2)^2.
\end{aligned}
\]

The Pfaffians $A_i$ for $m=8$, $n=8$.
 
\[
\begin{aligned}
\Pf A_1&=-1048576  z_h^2z_v^2z_d^2 (z_d^2 + z_h^2)^2  (z_d^2 + z_v^2)^2 (z_h^2 + z_v^2)^2 (z_d^2 + z_h^2 + z_v^2) (2 z_d^2 + z_h^2 + z_v^2)^2 \\ &\qquad \times(z_d^2 + 2 z_h^2 + z_v^2)^2 (z_d^2 +z_h^2 + 2z_v^2)^2,\\
\Pf A_2&=256(z_d^4 +6z_d^2z_v^2 + z_v^4)^2(z_d^4 + 4z_d^2z_h^2 + 2z_h^4 + 2z_d^2z_v^2 + 4z_h^2z_v^2 +z_v^4)^2\\ &\qquad \times(z_d^4 + 4z_d^2z_h^2 + 2z_h^4 + 6z_d^2z_v^2 + 4z_h^2z_v^2 + z_v^4)^2(z_d^4 + 8z_d^2z_h^2 + 8z_h^4 + 2z_d^2z_v^2 + 8z_h^2z_v^2 +z_v^4)^2,\\
\Pf A_3&=256(z_d^4 + 6z_d^2z_h^2 + z_h^4)^2(z_d^4 + 2z_d^2z_h^2 + z_h^4 + 4z_d^2z_v^2 + 4z_h^2 z_v^2 + 2z_v^4)^2\\ &\qquad \times(z_d^4 + 6z_d^2z_h^2 + z_h^4 + 4z_d^2z_v^2 + 4z_h^2z_v^2 + 2z_v^4)^2 (z_d^4 + 2z_d^2z_h^2 + z_h^4 + 8z_d^2z_v^2 + 8z_h^2z_v^2 + 8z_v^4)^2,\\
\Pf A_4&=256 (2 z_d^4 + 4 z_d^2 z_h^2 + z_h^4 + 4 z_d^2 z_v^2 + 2 z_h^2 z_v^2 + z_v^4)^2 (8 z_d^4 + 8 z_d^2 z_h^2 + z_h^4 + 8 z_d^2 z_v^2 + 2 z_h^2 z_v^2 +z_v^4)^2 \\
&\qquad \times(z_h^4 + 6 z_h^2 z_v^2 + z_v^4)^2 (2 z_d^4 + 4 z_d^2 z_h^2 + z_h^4 +4 z_d^2 z_v^2 + 6 z_h^2 z_v^2 + z_v^4)^2.
\end{aligned}
\]

\end{appendix}

\end{document}